\renewcommand\thefootnote{\arabic{footnote}}
\newcommand{\Order}{\mathrm{O}}
\newcommand{\poly}{\mathrm{poly}}
\newcommand{\defeq}{\stackrel{\mbox{\scriptsize{\normalfont\rmfamily def.}}}{=}}
\renewcommand{\Vec}[1]{\mbox{\boldmath $#1$}}
\newcommand{\symmdiff}{\vartriangle}
\newcommand{\circp}{\operatorname{\circ}}
\newcommand{\EO}{{\bf EO}}
\newtheorem{theorem}{Theorem}[section]
\newtheorem{lemma}[theorem]{Lemma}
\newtheorem{proposition}[theorem]{Proposition}
\newtheorem{corollary}[theorem]{Corollary}
\newtheorem{observation}[theorem]{Observation}
\newtheorem{problem}{Problem}
\newtheorem{Algorithm}{Algorithm}
\title{Finding Submodularity Hidden in Symmetric Difference}
\author{
 Junpei Nakashima\footnotemark[1] \and 
 Yukiko Yamauchi\footnotemark[1] \and
 Shuji Kijima\footnotemark[1]~\footnotemark[2] \and
 Masafumi Yamashita\footnotemark[1]
 }
\begin{document}
\maketitle
\renewcommand\thefootnote{\fnsymbol{footnote}}
\footnotetext[1]{
 Graduate School of Information Science and Electronic Engineering, 
 Kyushu University
}
\footnotetext[2]{
JST PRESTO, 744 Motooka, Nishi-ku, Fukuoka, 819-0395, Japan
}

\renewcommand\thefootnote{\arabic{footnote}}

\begin{abstract}
 A set function $f$ on a finite set $V$ is {\em submodular}
  if $f(X) + f(Y) \geq f(X \cup Y) + f(X \cap Y)$ for any pair $X, Y \subseteq V$. 
 The {\em symmetric difference transformation} ({\em SD-transformation}) 
  of $f$ by a {\em canonical set} $S \subseteq V$ is a set function $g$ given by
 $g(X) = f(X \symmdiff S)$ for $X \subseteq V$,
  where $X \symmdiff S = (X \setminus S) \cup (S \setminus X)$ denotes the symmetric difference 
  between $X$ and~$S$.
 Submodularity and SD-transformations 
   are regarded as the counterparts of 
   convexity and affine transformations in a discrete space, respectively. 
 However, 
  submodularity is not preserved under SD-transformations, 
  in contrast to the fact that convexity is invariant under affine transformations. 
 This paper presents 
   a characterization of SD-transformations preserving submodularity. 
 Then, we are concerned with the problem of discovering a canonical set $S$,
  given the SD-transformation $g$ of a submodular function $f$ by $S$,
  provided that $g(X)$ is given by a function value oracle.
 A submodular function $f$ on $V$ is said to be {\em strict}
  if $f(X) + f(Y) > f(X \cup Y) + f(X \cap Y)$ 
 holds whenever both $X \setminus Y$ and $Y \setminus X$ are nonempty.
 We show that the problem is solved by using $\Order(|V|)$ oracle calls when $f$ is strictly submodular,
  although it requires exponentially many oracle calls in general.

\bigskip
\noindent
{\bf Keywords: }
 Submodular functions, symmetric difference
\end{abstract}

\section{Introduction}
\subsection{Submodular function and convexity}
\paragraph{Submodular function on a finite set.}
 For a set function $f\colon 2^V \to \mathbb{R}$ on a finite set $V$, 
  we define
\begin{eqnarray}
  \Phi_f(X,Y) \defeq f(X) + f(Y) -  f(X \cup Y) - f(X \cap Y) 
\label{def:Phi}
\end{eqnarray}
  for any $X,Y \subseteq V$, for convenience of the arguments of the paper. 
 A set function $f$ is {\em submodular}
  if $\Phi_f(X,Y) \geq 0$ holds\footnote{ 
 Clearly, 
  the condition $\Phi_f(X,Y) \geq 0$ 
   is equivalent to $f(X) + f(Y) \geq f(X \cup Y) + f(X \cap Y)$, which is often used. 
 }  for any pair $X,Y \in 2^V$. 
 In this paper, we do not assume $f(\emptyset)=0$ for a submodular function $f$, 
    which is often assumed in the literature, 
   but this is not essential to the arguments of the paper. 
 A submodular function is {\em strictly submodular} 
  if $\Phi_f(X,Y) > 0$ holds whenever both $X \setminus Y$ and $Y \setminus X$ are nonempty. 
 In contrast, 
  a set function is {\em modular} 
  if $\Phi_f(X,Y) = 0$ holds for any pair $X,Y \in 2^V$. 

 Submodular function 
   is an important concept particularly in the context of combinatorial optimization, and 
  has many applications in economics, machine learning, etc. 
 It is well-known that 
  minimizing a submodular function given as its function value oracle 
  is solved efficiently, 
  by calling the value oracle (strongly) polynomial times~\cite{Schrijver00,IFF01,IO09,LSC15}. 
 In contrast, 
  maximizing submodular function, e.g., max cut, is NP-hard, 
  and approximation algorithms have been developed e.g.,~\cite{NWF78,FMV07}. 

 A celebrated characterization of a submodular function 
   is described by the Lov\'{a}sz extension (see e.g., \cite{Bach,Fujishige,Murota}). 
 For a set function $f \colon 2^V \to \mathbb{R}$, 
  the {\em Lov\'{a}sz extension} $\widehat{f} \colon \mathbb{R}^V \to \mathbb{R}$ is 
  defined for $\Vec{x} = (x(v)) \in \mathbb{R}^V$ which satisfies $x(v_1) \geq x(v_2) \geq \cdots \geq x(v_{|V|})$
  by 
  $\widehat{f}(\Vec{x}) \defeq 
   \sum_{i=1}^{|V|} x(v_i) (f(\{v_j \mid j \leq i \}) - f(\{v_j \mid j \leq i-1 \}) ) + f(\emptyset)$. 
 Lov\'{a}sz~\cite{Lovasz83} showed that 
  a set function $f$ is submodular if and only if $\widehat{f}$ is convex. 
 There are many other arguments 
    to regard submodular functions 
   as a discrete analogy of convex functions see e.g., \cite{Lovasz83,Murota}.

\paragraph{Convex function in continuous space.}
 A function $f \colon \mathbb{R}^n \to \mathbb{R}$ in a continuous space is {\em convex} 
  if $\lambda f(\Vec{x}) + (1-\lambda)f(\Vec{y}) \geq f(\lambda \Vec{x} + (1-\lambda)\Vec{y})$ 
  holds for any $\Vec{x},\Vec{y} \in \mathbb{R}^n$ and $\lambda \in [0,1]$ (see e.g.,~\cite{Rockafellar70,Murota}). 
 An important property of a convex function (even on a convex set) 
   is that local minimality guarantees the global minimality, 
  and convexity is regarded as a tractable and useful class in the context of optimization. 
 As another property, 
  convexity is invariant under an {\em affine map}; 
  Let $h \colon \mathbb{R}^n \to \mathbb{R}^n$ be an affine map
    given by $h(\Vec{x}) \defeq A\Vec{x}+\Vec{b}$ with some $A \in \mathbb{R}^{n \times n}$ and $\Vec{b} \in \mathbb{R}^n$ and 
  let $f \colon \mathbb{R}^n \to \mathbb{R}$ be a convex function.
  Then, the composition $g \defeq f \circp h$, i.e., $g(\Vec{x}) = f(A\Vec{x}+\Vec{b})$, is again a convex function. 

\paragraph{Change-of-variables for submodular function.}
 A change-of-variables is a fundamental technique for a function. 
 For instance, it may not be trivial whether a continuous function 
   $f(x,y)=8x^2+10xy+3y^2$ is convex or not. 
 Let $x=s-2t$ and $y=-s+3t$, 
  then we get another function $g(s,t) = f(s-2t,-s+3t)=s^2-t^2$. 
 It is relatively easy to see that $g$ is not convex; 
  that is confirmed by 
   $\tfrac{1}{2}g(0,1)+\tfrac{1}{2}g(0,-1) < g(0,0)$ 
   where $g(0,1)=-1$, $g(0,-1)=-1$ and $g=(0,0)=0$. 
 Since convexity is invariant under an affine map, 
  we see that $f$ is not convex.

 This paper is motivated by a ``change-of-variables'' for submodular functions, 
   as a discrete analogy.
 As the counter pert of affine transformations of convex functions, 
  we will investigate  symmetric difference transformations (SD-transformations) of submodular functions, 
  which we will describe just below. 

\subsection{SD-transformation of a submodular function}

 Let $\sigma_S \colon 2^V \to 2^V$ denote 
  the {\em symmetric difference map} ({\em SD-map}\/) by a set $S \subseteq V$, 
  which is given by 
\begin{eqnarray}
 \sigma_S(X) \defeq X \symmdiff S
\end{eqnarray}
for any $X \subseteq V$,
 where $X \symmdiff S = (X \setminus Y) \cup (Y \setminus X) $ is the symmetric difference between $X$ and $S$.  
 For a set function $f \colon 2^V \to \mathbb{R}$ and a set $S \subseteq V$, 
  we say $g = f \circp \sigma_S$ is 
  a {\em symmetric difference transformation} ({\em SD-transformation}\/) of $f$ by $S$, 
 i.e., the SD-transformation is the set function 
  $g \colon 2^V \to \mathbb{R}$ given by $g(X) = f(X \symmdiff S)$ for any $X \subseteq V$. 

 It is not difficult to see that 
  any bijective map on $2^V$ preserving the $1$-skeleton of the hypercube (``topology'')
  is given by a combination of an SD-map (``origin-shift'') and renaming elements of $V$. 
 Obviously, submodularity is invariant under renaming elements of $V$. 
 Thus, 
  the SD-transformation is essential in a change-of-variables for submodular functions. 

\begin{figure}[tbp]
\begin{center}
 \includegraphics[width=5cm,pagebox=cropbox,clip]{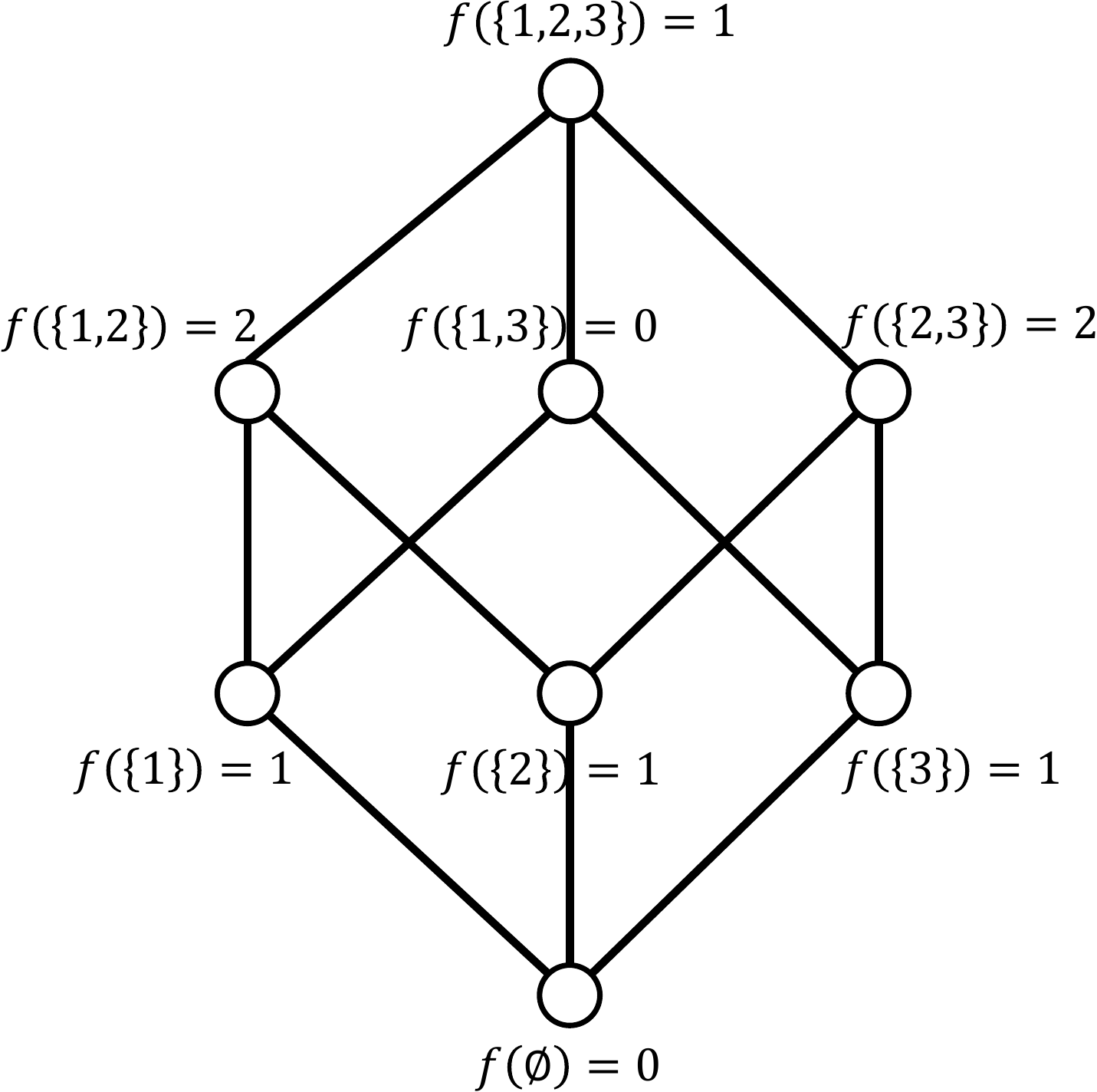}
\hspace{3em}
 \includegraphics[width=5.15cm,pagebox=cropbox,clip]{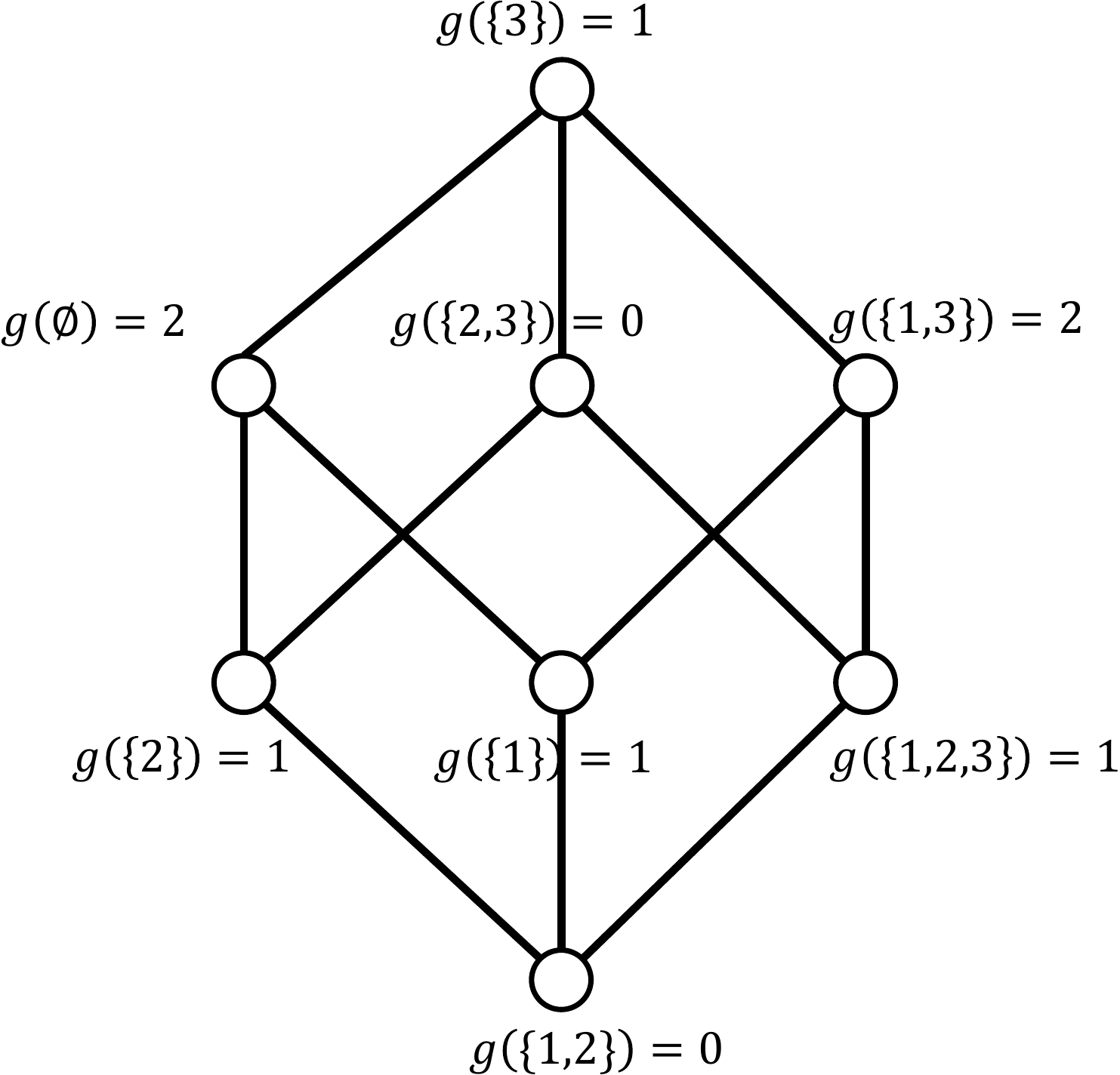}
\end{center}
 \caption{(Left) A submodular function $f$ on $V=\{1,2,3\}$. (Right) $g = f \circ \sigma_{\{1,2\}}$.}\label{fig:1}
\end{figure}
 However, 
  an SD-transformation of a submodular function is not submodular in general, 
  in contrast to the fact that convexity is invariant under affine maps.  
 Figure~\ref{fig:1} shows an example. 
 The left figure shows a submodular function $f \colon 2^{\{1,2,3\}} \to \mathbb{R}$, and 
  the right figure shows its SD-transformation $g$ by the set $\{1,2\}$. 
 We can check exhaustively that $f$ is submodular, 
  while $g$ is not submodular since $\Phi_g(\{1\},\{3\})=g(\{1\}) + g(\{3\}) - g(\{1,3\}) - g(\emptyset) <0$. 

\subsection{Contribution}
 This paper characterizes SD-maps preserving the submodularity, 
 i.e., 
   given a submodular function $f$, 
   we characterize $S \in 2^V$ for which $f \circp \sigma_S$ is again submodular. 
 In Section~\ref{sec:main}, 
  we present a characterization described by a Boolean system
   (Theorem~\ref{thm:char1}), and 
 rephrase it using a graph defined from~$f$ (Theorem~\ref{thm:char2}).  
 By a similar and much simpler argument, we also remark that 
  the modularity is invariant under SD-transformations (Proposition~\ref{prop:modular}). 

Then, we are concerned with the following problem.  
\begin{problem}\label{prob:CS}
 Let $g\colon 2^V \to \mathbb{R}$ be an SD-transformation of a submodular function. 
 Provided that $g$ is given by its function value oracle, 
  the goal is to find a subset $T \subseteq V$ such that  $h = g \circp \sigma_T$ is submodular. 
\end{problem}
 We call a solution $T$ to Problem~\ref{prob:CS} a {\em canonical set} of $g$. 
 Notice that a canonical set is not unique. 
 In fact, we will show that 
  if $T$ is a canonical set then $V \setminus T$ is also a canonical set 
  (see Proposition~\ref{prop:complement2} in Section~\ref{subsec:complement}). 
 Once we find a canonical set $T$, 
  we can apply many algorithms for submodular functions, 
  such as minimization or maximization, to $g \circp \sigma_T$. 

 Unfortunately, Problem~\ref{prob:CS} requires exponentially many oracle calls, in the worst case. 
 An easy example is given as follows (see e.g., \cite{GILKB15}).
 Let $U \subseteq V$, then we define a set function $g \colon 2^V \to \mathbb{R}$ by 
  $g(U) = -1$ and $g(X)=0$ for any other subset $X \subseteq V$. 
 Then, the canonical sets are only $U$ and $V \setminus U$. 
 Thus, 
  it is not difficult (intuitively) to see that 
  we need $2^{|V|}-2$ oracle calls in the worst case 
 to solve Problem~\ref{prob:CS} (see also the proof of Proposition~\ref{thm:min-CS}, for a detailed argument). 

 In Section~\ref{sec:CS}, 
   we present a complete characterization of canonical sets (Theorem~\ref{thm:probCS}). 
 As an interesting consequence, 
  we show that 
   Problem~\ref{prob:CS} is solved by calling the function value oracle $\Order(|V|)$ times  
	if $f$ is {\em strictly} submodular (Theorem~\ref{thm:strict-submo}).
 Once we find a canonical set of an SD-transformation $g$ of a submodular function, 
  minimization of $g$ is easy using submodular function minimization, as we stated above.  
 However, the converse is not true; 
  we give an example in which 
   Problem~\ref{prob:CS} 
   requires exponentially many function value oracle calls 
    even if we have all minimizers (or maximizers) of $g$ (Section~\ref{sec:bad-example}). 

\subsection{Related works}
\paragraph{Recognizing submodularity.}
 It takes exponential time to check naively 
    if a set function given by its function value oracle is submodular, in general.  
 To be precise, 
  the submodularity is confirmed in $2^{|V|} \cdotp \poly(|V|)$ time, 
  instead of checking $\Phi_f(X,Y) \geq 0$ for all ${2^{|V|} \choose 2} \simeq (2^{|V|})^2$ pairs $X,Y \in 2^{|V|}$  
   (see Section~\ref{sec:preliminary}). 

 Goemans et al.~\cite{GHIM09} is concerned with 
   approximating a submodular function with polynomially many oracle calls. 
 For nonnegative monotone submodular functions $f$, 
  they showed that an approximate function $\widetilde{f}$ is constructed 
   by calling $\poly(|V|)$ times the function value oracle of $f$, 
   such that $\widetilde{f}(X) \leq f(X) \leq \alpha \widetilde{f}(X)$ for any $X \in 2^V$ 
   with an approximation factor $\alpha =\Order(\sqrt{|V|} \log |V|)$. 
 Notice that $\widetilde{f}$ may not be submodular. 
 They also gave a lower bound $\Omega(\sqrt{|V|}/\log |V|)$ 
  of the approximation ratio 
  with polynomially many oracle calls. 
 
\paragraph{SD-transformation of a submodular function.} 
 Gillenwater et al.~\cite{GILKB15} are concerned with {\em submodular Hamming distance} 
   $d_f(A,B) = f(A \symmdiff B)$ for $A,B \subseteq V$ 
  given by a positive polymatroid function $f$, 
   that is a monotone nondecreasing positive submodular function $f$ satisfying $f(\emptyset) =0$. 
 Giving some applications in  machine learning, such as clustering, structured prediction, and diverse $k$-best, 
  they investigated the hardness and approximations of problems
   SH-min: $\min_{A \in {\cal C}} \sum_{i=1}^m f_i(A \symmdiff B_i)$ and  
   SH-max: $\max_{A \in {\cal C}} \sum_{i=1}^m f_i(A \symmdiff B_i)$, 
   where $f_i$ is a positive polymatroid, $B_i \subseteq V$, and ${\cal C}$ denotes a combinatorial constraint.

\subsection{Organization}
 This paper is organized as follows. 
 As a preliminary step, 
  Section~\ref{sec:preliminary} is concerned with the 2-faces of $0$-$1$ hypercube. 
 More precisely,   
  Section~\ref{sec:2.1} mentions the known fact that 
  the submodularity is confirmed only by checking 
  the submodularity on all 2-faces. 
 Section~\ref{sec:2.2} explicitly writes
  some basic facts of SD-map $\sigma_S$ on 2-faces in concrete terms,  
  to avoid a confusion in the following arguments. 

 Section~\ref{sec:main} provides characterizations of SD-maps preserving the submodularity. 
 Prior to the main theorems, 
  Section~\ref{sec:key-lemma} proves a key lemma using the argument in Section~\ref{sec:2.2}. 
 Sections~\ref{sec:char1} and~\ref{sec:char2} 
  respectively show the main theorems. 
 Section~\ref{sec:remark3} make some remarks on Section~\ref{sec:main}. 

 Then, 
 Section~\ref{sec:CS} is concerned with Problem~\ref{prob:CS}. 
 Section~\ref{sec:charCS} characterizes canonical sets using a Boolean system. 
 Section~\ref{sec:strict-submo} presents 
   a linear-time algorithm for Problem~\ref{prob:CS} provided that $f$ is strictly submodular. 
 Section~\ref{sec:bad-example} gives some bad examples in general case. 
 Finally, Section~\ref{sec:concl} concludes this paper. 

\section{Preliminary: 2-faces of a $0$-$1$ hypercube}\label{sec:preliminary}
\subsection{Submodularity is determined on 2-faces}\label{sec:2.1}
 Let $X \in 2^V$, and let $u,v \in V$ be a distinct pair. 
 For convenience, let $X' = X \setminus \{u,v\}$ 
 then
  the four distinct subsets $X'$, $X' \cup \{u\}$, $X' \cup \{v\}$ and $X' \cup \{u,v\}$ of $V$
  form a {\em 2-face} (a.k.a.\ {polygonal face}) of 
   the $n$-dimensional $0$-$1$ hypercube of the vertex set $2^V$. 
 Let 
\begin{eqnarray}
  {\cal P} \defeq \left\{ (X,\{u,v\}) \;\middle|\; X \subseteq V,  \{u,v\} \in {V \setminus X \choose 2}  \right\}, 
\label{def:calF}
\end{eqnarray}
  representing the whole set of 2-faces of $n$-dimensional hypercube  
  where $(X,\{u,v\})$ corresponds to the 2-face consisting of $X,X\cup \{u\},X\cup \{v\},X\cup \{u,v\}$. 
 Notice that 
 $|{\cal P}| = 2^{n-2}{n\choose 2}$ holds (cf., \cite{Coxeter}). 

 For convenience, 
 let $\check{\Phi}_f \colon {\cal P} \to \mathbb{R}$ be defined by 
\begin{eqnarray}
 \check{\Phi}_f(X,\{u,v\}) 
 &\defeq& \Phi_f(X\cup\{u\},X\cup\{v\}) 
 \label{def:checkPhi}
 \\
 &=& f(X \cup \{u\})+f(X \cup \{u\})-f(X \cup \{u,v\})-f(X)
\nonumber
\end{eqnarray}
 for any $(X,\{u,v\}) \in {\cal P}$.\footnote{
  For the simplicity of the notation, 
   we use the notation $\check{\Phi}(X,\{u,v\})$ instead of $\check{\Phi}((X,\{u,v\}))$. 
  At the same time, 
   we also use the notation $\check{\Phi}(p)$ for $p = (X,\{u,v\}) \in {\cal P}$. 
 }
 The following characterization of submodular functions is known. 
\begin{theorem}[\cite{Schrijver}]\label{prop:submo-2face}
A set function $f \colon 2^V \to \mathbb{R}$ is submodular if and only if 
 $\check{\Phi}_f(p) \geq 0$ holds for any $p \in {\cal P}$. 
\end{theorem}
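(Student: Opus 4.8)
The plan is to prove both directions, with the "only if" direction being essentially trivial and the "if" direction requiring the real work. For the "only if" direction: if $f$ is submodular, then by definition $\Phi_f(X,Y) \geq 0$ for every pair $X,Y \subseteq V$; in particular, taking $X \cup \{u\}$ and $X \cup \{v\}$ for any $(X,\{u,v\}) \in {\cal P}$ gives $\check{\Phi}_f(X,\{u,v\}) = \Phi_f(X \cup \{u\}, X \cup \{v\}) \geq 0$. So the 2-face condition holds.

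For the "if" direction, assume $\check{\Phi}_f(p) \geq 0$ for all $p \in {\cal P}$, and we must show $\Phi_f(X,Y) \geq 0$ for arbitrary $X, Y \subseteq V$. First I would dispose of the degenerate case: if $X \subseteq Y$ or $Y \subseteq X$, then $X \cup Y$ and $X \cap Y$ are just $\{X,Y\}$ in some order, so $\Phi_f(X,Y) = 0$ and we are done. Otherwise both $X \setminus Y$ and $Y \setminus X$ are nonempty. The key step is an induction on $|X \symmdiff Y|$, i.e., on the "distance" between $X$ and $Y$ in the hypercube. The base case $|X \symmdiff Y| = 2$ means $X \setminus Y = \{u\}$ and $Y \setminus X = \{v\}$ for a distinct pair $u,v$; writing $W = X \cap Y$ we have $X = W \cup \{u\}$, $Y = W \cup \{v\}$, and $(W, \{u,v\}) \in {\cal P}$, so $\Phi_f(X,Y) = \check{\Phi}_f(W,\{u,v\}) \geq 0$ by hypothesis.

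For the inductive step, suppose $|X \symmdiff Y| \geq 3$; without loss of generality $|X \setminus Y| \geq 2$, so pick $u \in X \setminus Y$ and set $X_1 = X \setminus \{u\}$. The idea is to combine two instances of the submodularity inequality on pairs that are strictly closer together. Concretely, I would apply the inductive hypothesis to the pair $(X_1, Y)$ — note $X_1 \cup Y = (X \cup Y) \setminus \{u\}$ and $X_1 \cap Y = X \cap Y$, and $|X_1 \symmdiff Y| = |X \symmdiff Y| - 1$ — and to the pair $(X, X_1 \cup Y)$, whose symmetric difference is $\{u\}$, giving $\Phi_f(X, X_1 \cup Y) = \check{\Phi}_f\bigl((X \cap Y) \cup (Y \setminus X), \{u, \cdot\}\bigr)$... here I need to be slightly careful: the pair $(X, X_1 \cup Y)$ has $X \setminus (X_1 \cup Y) = \{u\}$ and $(X_1 \cup Y) \setminus X = Y \setminus X$, which is nonempty, so this pair also has $|\cdot \symmdiff \cdot| \geq 2$ and I should handle it either as a base case (if $|Y \setminus X| = 1$) or recursively. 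Adding the two inequalities $\Phi_f(X_1, Y) \geq 0$ and $\Phi_f(X, X_1 \cup Y) \geq 0$ and checking that the telescoping of the six function-value terms leaves exactly $f(X) + f(Y) - f(X \cup Y) - f(X \cap Y) = \Phi_f(X,Y) \geq 0$ completes the step. The main obstacle — and the only place demanding genuine care — is organizing this telescoping/exchange argument cleanly so that each inequality invoked is genuinely on a pair of strictly smaller Hamming distance (or is a base case), and verifying that the intermediate sets $X_1 \cup Y$, $X_1 \cap Y$ behave as claimed; everything else is bookkeeping. Since Theorem~\ref{prop:submo-2face} is quoted from \cite{Schrijver}, I would in practice just cite it, but the induction above is the standard proof.
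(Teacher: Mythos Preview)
Your argument is correct and is the standard textbook proof (indeed, essentially the one in Schrijver). The telescoping identity
\[
  \Phi_f(X_1,Y) + \Phi_f(X,\,X_1\cup Y) = \Phi_f(X,Y),
\]
which you set up via $X_1 = X\setminus\{u\}$ with $u\in X\setminus Y$, holds exactly because $X_1\cap Y = X\cap Y$, $X\cup(X_1\cup Y)=X\cup Y$, and $X\cap(X_1\cup Y)=X_1$; both summands on the left involve pairs of strictly smaller Hamming distance (you verified this), so the induction closes. The paper itself offers no proof of Theorem~\ref{prop:submo-2face} --- it is simply quoted as a known fact from \cite{Schrijver} --- so there is nothing to compare against beyond noting that your proof matches the cited source, and your own closing remark (``in practice just cite it'') is exactly what the authors do.
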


\subsection{SD-map on 2-faces}\label{sec:2.2}
 In this section, we are concerned with 
   the map over ${\cal P}$ 
   provided by an SD-map $\sigma_S$ for a subset $S \subseteq V$, 
  as a preliminary step of the arguments in the following sections.
  It is not be difficult to see that 
\begin{eqnarray*}
 \left\{ \sigma_S(X), \sigma_S(X \cup \{u\}), \sigma_S(X \cup \{v\}), \sigma_S(X \cup \{u,v\}) \right\} 
\end{eqnarray*}
  again forms a 2-face of a $0$-$1$ hypercube. 
 To be precise, 
   we can show the following proposition. 
\begin{proposition}\label{prop:face}
 For each $(X,\{u,v\}) \in {\cal P}$,
\begin{eqnarray*}
 && \left\{ \sigma_S(X), \sigma_S(X \cup \{u\}), \sigma_S(X \cup \{v\}), \sigma_S(X \cup \{u,v\}) \right\} \\ 
 && = \left\{ Y, Y\cup \{u\}, Y \cup \{v\}, Y \cup \{u,v\} \right\}
\end{eqnarray*}
 holds 
  where $Y = (X \symmdiff S) \setminus \{u,v\}$. 
\end{proposition}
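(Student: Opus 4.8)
The plan is to reduce everything to the elementwise behaviour of the symmetric difference, using crucially that $X$ is disjoint from $\{u,v\}$. I would write the four vertices of the 2-face uniformly as $Z=X\cup A$ with $A\subseteq\{u,v\}$, so that $A$ ranges over $\emptyset,\{u\},\{v\},\{u,v\}$ as $Z$ ranges over $X,\, X\cup\{u\},\, X\cup\{v\},\, X\cup\{u,v\}$. Then I would compute $\sigma_S(Z)=Z\symmdiff S$ by restricting separately to $V\setminus\{u,v\}$ and to $\{u,v\}$.

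First I would record the obvious fact that for any partition $V=V_1\cup V_2$ and any $Z,S\subseteq V$ one has $(Z\symmdiff S)\cap V_i=(Z\cap V_i)\symmdiff(S\cap V_i)$, since membership in a symmetric difference is decided coordinate by coordinate. Taking $V_1=V\setminus\{u,v\}$ and $V_2=\{u,v\}$, and using $Z\cap(V\setminus\{u,v\})=X$ and $Z\cap\{u,v\}=A$ (both immediate from $X\cap\{u,v\}=\emptyset$ and $A\subseteq\{u,v\}$), I get
\[
\sigma_S(Z)=\big[(X\symmdiff S)\cap(V\setminus\{u,v\})\big]\ \cup\ \big[A\symmdiff(S\cap\{u,v\})\big]=Y\cup\big(A\symmdiff(S\cap\{u,v\})\big),
\]
where the last step is just the definition $Y=(X\symmdiff S)\setminus\{u,v\}$, which also shows $Y\cap\{u,v\}=\emptyset$.

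Second, for the fixed set $C=S\cap\{u,v\}$ the map $B\mapsto B\symmdiff C$ is an involution, hence a bijection, on the four-element set $2^{\{u,v\}}$; so as $A$ runs over $2^{\{u,v\}}$, so does $A\symmdiff(S\cap\{u,v\})$. Combining this with the displayed formula, $\{\sigma_S(X\cup A)\mid A\subseteq\{u,v\}\}=\{Y\cup B\mid B\subseteq\{u,v\}\}=\{Y,\, Y\cup\{u\},\, Y\cup\{v\},\, Y\cup\{u,v\}\}$, which is exactly the asserted set; and since $u,v\notin Y$ these are genuinely the four vertices of a 2-face, i.e.\ $(Y,\{u,v\})\in{\cal P}$. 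I do not expect any real obstacle here: the only point requiring care is isolating the $\{u,v\}$-coordinates from the rest. A more pedestrian alternative would be to split into the four cases $S\cap\{u,v\}\in\{\emptyset,\{u\},\{v\},\{u,v\}\}$ and verify each by hand, but the coordinatewise/bijection argument sidesteps that case analysis.
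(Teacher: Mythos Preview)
Your proof is correct. The decomposition $\sigma_S(X\cup A)=Y\cup\bigl(A\symmdiff(S\cap\{u,v\})\bigr)$ is the heart of the matter, and the involution $B\mapsto B\symmdiff C$ on $2^{\{u,v\}}$ cleanly delivers the set equality without any case splitting.

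This is a genuinely different route from the paper. The paper carries out exactly the ``pedestrian alternative'' you mention at the end: it splits into the three cases $|S\cap\{u,v\}|=0,1,2$ and, in each, writes down explicitly which of the four sets $Y,\,Y\cup\{u\},\,Y\cup\{v\},\,Y\cup\{u,v\}$ equals each of $\sigma_S(X),\,\sigma_S(X\cup\{u\}),\,\sigma_S(X\cup\{v\}),\,\sigma_S(X\cup\{u,v\})$, labelling each of the twelve resulting identities. Your coordinatewise argument is shorter and more conceptual, and it implicitly contains the same pointwise bijection (namely $A\mapsto A\symmdiff(S\cap\{u,v\})$). The advantage of the paper's explicit enumeration is downstream: the proof of the key lemma (Lemma~\ref{lem:key-lemma}) quotes those labelled identities to compute the sign of $\check{\Phi}_{f\circ\sigma_S}$ on the image face in each case. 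If you continue with your approach, you would want to record the formula $\sigma_S(X\cup A)=Y\cup(A\symmdiff(S\cap\{u,v\}))$ as a reusable identity, since that single line replaces the paper's twelve equations and makes the later sign computation a one-liner as well.
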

\begin{proof}
 We are concerned with three cases depending on whether $| \{u,v\} \cap S | = 0$, $1$, or $2$. 

Case i)  Suppose that $| \{u,v\} \cap S | = 0$, i.e., $u \not\in S$ and $v \not\in S$. 
 Notice that $u \not\in  X \symmdiff S$ and $v \not\in  X \symmdiff S$. 
 Thus,  
\begin{align}
& Y && &&=X \symmdiff S \label{case:0-1}\\
& Y \cup \{u\} &&= (X \symmdiff S) \cup \{u\}  &&= (X \cup \{u\}) \symmdiff S \label{case:0-2}\\
& Y \cup \{v\} &&= (X \symmdiff S) \cup \{v\}  &&= (X \cup \{v\}) \symmdiff S \label{case:0-3}\\
& Y \cup \{u,v\} &&= (X \symmdiff S) \cup \{u,v\} &&= (X \cup \{u,v\} ) \symmdiff S \label{case:0-4}
\end{align} 
 hold, where the right hand sides are respectively 
  $\sigma_S(X)$, $\sigma_S(X \cup \{u\})$,  $\sigma_S(X \cup \{v\})$ and $\sigma_S(X \cup \{u,v\})$.
 Thus,  we obtain the claim in this case. 

Case ii) Suppose that $| \{u,v\} \cap S | = 1$. 
 Without loss of generality, 
 we may assume that $u \in S$ and $v \not\in S$.  
 Then, 
\begin{align}
& Y &&= (X \symmdiff S) \setminus \{u\} &&=(X \cup \{u\}) \symmdiff S \label{case:1-1}\\ 
& Y \cup \{u\} && &&= X \symmdiff S \label{case:1-2}\\ 
& Y \cup \{v\} &&= ((X \symmdiff S)\setminus\{u\}) \cup \{v\} &&= (X \cup \{u,v\}) \symmdiff S \label{case:1-3}\\ 
& Y \cup \{u,v\} &&= (X \symmdiff S) \cup \{v\} &&= (X \cup \{v\}) \symmdiff S \label{case:1-4}
\end{align} 
 hold, where the right hand sides are respectively 
  $\sigma_S(X \cup \{u\})$, $\sigma_S(X)$,  $\sigma_S(X \cup \{u,v\})$ and $\sigma_S(X \cup \{v\})$.
 Thus,  we obtain the claim in this case. 

Case iii) Suppose that $| \{u,v\} \cap S | = 2$, i.e., $u \in S$ and $v \in S$. 
 Then,  
\begin{align}
& Y &&= (X \symmdiff S) \setminus\{u,v\} &&=(X \cup \{u,v\})  \symmdiff S \label{case:2-1}\\ 
& Y \cup \{u\} &&= (X \symmdiff S) \setminus\{v\} &&= (X \cup \{v\}) \symmdiff S \label{case:2-2}\\ 
& Y \cup \{v\} &&= (X \symmdiff S) \setminus\{u\} &&= (X \cup \{u\}) \symmdiff S \label{case:2-3}\\ 
& Y \cup \{u,v\} && &&= X \symmdiff S \label{case:2-4}
\end{align} 
 hold, where the right hand sides are respectively 
  $\sigma_S(X \cup \{u,v\})$, $\sigma_S(X \cup \{v\})$,  $\sigma_S(X \cup \{u\})$ and $\sigma_S(X)$.
 Thus,  we obtain the claim. 
\end{proof}

 Let 
  $\check{\sigma}_S \colon {\cal P} \to {\cal P}$ for $S \subseteq V$ be defined by 
\begin{eqnarray} 
 \check{\sigma}_S(X,\{u,v\}) \defeq \left(Y,\{u,v\}\right)
 \label{def:sigmaF}
\end{eqnarray}
 for any $(X,\{u,v\}) \in {\cal P}$ where $Y = (X \symmdiff S) \setminus \{u,v\}$.\footnote{
   For the simplicity of the notation we use $\check{\sigma}_S(X,\{u,v\})$, instead of $\check{\sigma}_S((X,\{u,v\}))$. 
   At the same time, 
    we also use the notation $\check{\sigma}(p)$ for $p = (X,\{u,v\}) \in {\cal P}$. 
 }
 Then, $\check{\sigma}_S$ is the map on ${\cal P}$ provided by $\sigma_S$ 
  by Proposition~\ref{prop:face}. 
 Since $\sigma_S$ is bijective on $2^V$, 
  Proposition~\ref{prop:face} also implies the following. 
\begin{corollary}\label{cor:sigmaF-bijective}
 $\check{\sigma}_S$ is bijective. 
\end{corollary}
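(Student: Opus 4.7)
The plan is to observe that $\check{\sigma}_S$ is in fact an involution on $\mathcal{P}$ (that is, $\check{\sigma}_S \circ \check{\sigma}_S$ is the identity), which immediately yields bijectivity. The motivation is the well-known fact that $\sigma_S$ itself is an involution on $2^V$, since $(X \symmdiff S) \symmdiff S = X$, and one would like to inherit this structure on the set of 2-faces.

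The first step is to rewrite the definition of $\check{\sigma}_S$ in a form that makes the involution property transparent. For $(X, \{u,v\}) \in \mathcal{P}$ we have $X \cap \{u,v\} = \emptyset$, so that
\begin{equation*}
 (X \symmdiff S) \setminus \{u,v\} \;=\; X \symmdiff (S \setminus \{u,v\}).
\end{equation*}
Writing $S' \defeq S \setminus \{u,v\} \subseteq V \setminus \{u,v\}$, this gives $\check{\sigma}_S(X,\{u,v\}) = (X \symmdiff S', \{u,v\})$. Since $X \symmdiff S' \subseteq V \setminus \{u,v\}$, the image is indeed in $\mathcal{P}$.

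The second step is to apply $\check{\sigma}_S$ a second time. Because the second coordinate $\{u,v\}$ is preserved, the same simplification applies, and we obtain
\begin{equation*}
 \check{\sigma}_S(\check{\sigma}_S(X,\{u,v\})) \;=\; \bigl((X \symmdiff S') \symmdiff S', \{u,v\}\bigr) \;=\; (X, \{u,v\}).
\end{equation*}
Hence $\check{\sigma}_S \circ \check{\sigma}_S = \mathrm{id}_{\mathcal{P}}$, so $\check{\sigma}_S$ is a bijection. Alternatively, since $\mathcal{P}$ is finite, one could appeal to injectivity alone: if $\check{\sigma}_S(X,\{u,v\}) = \check{\sigma}_S(X',\{u,v\})$ then $X \symmdiff S' = X' \symmdiff S'$, and cancelling $S'$ yields $X = X'$.

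There is essentially no obstacle here; the corollary is a direct consequence of the disjointness condition $X \cap \{u,v\} = \emptyset$ built into the definition of $\mathcal{P}$, combined with the involutive nature of symmetric difference. The only care needed is to check that the image of $\check{\sigma}_S$ lies in $\mathcal{P}$, which is immediate from the equation $\check{\sigma}_S(X,\{u,v\}) = (X \symmdiff S',\{u,v\})$ since $X \symmdiff S'$ is disjoint from $\{u,v\}$.
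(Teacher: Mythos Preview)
Your proof is correct. The paper's argument is the one-line remark that $\sigma_S$ is bijective on $2^V$ and, by Proposition~\ref{prop:face}, maps 2-faces to 2-faces, so the induced map $\check{\sigma}_S$ on $\mathcal{P}$ is bijective.

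Your route is slightly different and somewhat more direct: rather than invoking Proposition~\ref{prop:face} (whose proof is a three-case analysis on $|\{u,v\}\cap S|$), you rewrite the defining formula as $\check{\sigma}_S(X,\{u,v\}) = (X \symmdiff (S\setminus\{u,v\}),\{u,v\})$ using the disjointness $X\cap\{u,v\}=\emptyset$, and then the involution property is immediate from $(A\symmdiff B)\symmdiff B = A$. This buys you a self-contained argument that does not depend on the preceding proposition, and it makes explicit that $\check{\sigma}_S$ is not just a bijection but an involution. The paper's approach, on the other hand, emphasizes the conceptual picture that $\check{\sigma}_S$ is the map on 2-faces \emph{induced} by the vertex bijection $\sigma_S$, which is the viewpoint used in the subsequent lemmas.
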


\section{SD-maps preserving submodulraity}
\label{sec:main}
 This section characterizes $S \subseteq V$ for which $f \circp \sigma_S$ is submodular. 
 Theorem~\ref{thm:char1} describes it using a Boolean system, and 
 Theorem~\ref{thm:char2} rephrases it using a graph. 
 As a preliminary argument, we give a key lemma in Section~\ref{sec:key-lemma}

\subsection{Key lemma}\label{sec:key-lemma}
 This section mainly proves a technical lemma (Lemma~\ref{lem:key-lemma}), 
  which intuitively characterizes SD-maps $\sigma_S$ preserving {\em submodularity on a 2-face}. 
 We also remark 
  that any SD-map $\sigma_S$ preserves {\em modularlity on a 2-face} 
  by a similar argument, at Lemma~\ref{lem:key-lemma2}.   
\begin{lemma}\label{lem:key-lemma}
 Let $f \colon 2^V \to \mathbb{R}$ be a submodular function. 
 Suppose that a 2-face $p =(X,\{u,v\}) \in {\cal P}$ satisfies that 
\begin{eqnarray*}
  \check{\Phi}_f(p) > 0. 
\end{eqnarray*}
  Then for any subset $S \subseteq V$, 
\begin{eqnarray}
 \check{\Phi}_{f \circp \sigma_S}(\check{\sigma}_S(p)) >  0 
 \label{eq:key-lemma}
\end{eqnarray}
 holds 
 if and only if  $|S \cap \{u,v\}| \equiv 0 \pmod{2}$. 
\end{lemma}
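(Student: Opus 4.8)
The plan is to work directly with the four values of $f$ on the 2-face $p = (X,\{u,v\})$ and on its image $\check\sigma_S(p)$, and to track how $\check\Phi$ transforms case by case according to $|S\cap\{u,v\}|$. Write $a = f(X)$, $b = f(X\cup\{u\})$, $c = f(X\cup\{v\})$, $d = f(X\cup\{u,v\})$, so that $\check\Phi_f(p) = b + c - d - a$. The hypothesis is $b + c - d - a > 0$. Now I would compute $\check\Phi_{f\circ\sigma_S}(\check\sigma_S(p))$ by expanding the definition: by Proposition~\ref{prop:face}, the four vertices $Y, Y\cup\{u\}, Y\cup\{v\}, Y\cup\{u,v\}$ of the image face are, up to relabeling, exactly $\sigma_S(X), \sigma_S(X\cup\{u\}), \sigma_S(X\cup\{v\}), \sigma_S(X\cup\{u,v\})$, hence the four values of $f\circ\sigma_S$ on the image face are $a,b,c,d$ in some permuted order, the permutation depending on which of Cases i), ii), iii) of Proposition~\ref{prop:face} applies.

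First, the even cases ($|S\cap\{u,v\}| = 0$ or $2$). In Case i), equations~\eqref{case:0-1}--\eqref{case:0-4} show the identity correspondence $Y\leftrightarrow X$, $Y\cup\{u\}\leftrightarrow X\cup\{u\}$, etc., so $\check\Phi_{f\circ\sigma_S}(\check\sigma_S(p)) = b + c - d - a = \check\Phi_f(p) > 0$. In Case iii), equations~\eqref{case:2-1}--\eqref{case:2-4} show the ``antipodal on the face'' correspondence $Y\leftrightarrow X\cup\{u,v\}$, $Y\cup\{u\}\leftrightarrow X\cup\{v\}$, $Y\cup\{v\}\leftrightarrow X\cup\{u\}$, $Y\cup\{u,v\}\leftrightarrow X$; plugging into the definition of $\check\Phi$ for the image face gives (value at $Y\cup\{u\}$) + (value at $Y\cup\{v\}$) $-$ (value at $Y\cup\{u,v\}$) $-$ (value at $Y$) $= c + b - a - d = \check\Phi_f(p) > 0$. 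So in both even cases~\eqref{eq:key-lemma} holds unconditionally — this gives the ``if'' direction.

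For the odd case ($|S\cap\{u,v\}| = 1$, say $u\in S$, $v\notin S$), equations~\eqref{case:1-1}--\eqref{case:1-4} give the ``swap one coordinate'' correspondence: $Y\leftrightarrow X\cup\{u\}$, $Y\cup\{u\}\leftrightarrow X$, $Y\cup\{v\}\leftrightarrow X\cup\{u,v\}$, $Y\cup\{u,v\}\leftrightarrow X\cup\{v\}$. Then $\check\Phi_{f\circ\sigma_S}(\check\sigma_S(p)) = $ (value at $Y\cup\{u\}$) + (value at $Y\cup\{v\}$) $-$ (value at $Y\cup\{u,v\}$) $-$ (value at $Y$) $= a + d - c - b = -\check\Phi_f(p) < 0$, so~\eqref{eq:key-lemma} fails. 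This gives the ``only if'' direction, completing the equivalence. The symmetric subcase $v\in S$, $u\notin S$ is handled by interchanging the roles of $u$ and $v$, which leaves $\check\Phi$ unchanged.

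I do not expect a genuine obstacle here; the content is entirely in bookkeeping the vertex correspondences from Proposition~\ref{prop:face} and reading off which linear combination of $a,b,c,d$ results. The one point requiring a little care is that $\check\Phi$ of the image face must be computed with respect to \emph{its own} $u,v$-labeling as recorded in~\eqref{def:sigmaF} and~\eqref{def:checkPhi}, not by naively matching vertices to the original face; the permutations in Cases i)--iii) are exactly what encode this, and the strictness hypothesis $\check\Phi_f(p)>0$ is precisely what lets us conclude a strict inequality (of the appropriate sign) in each case rather than merely ``$\geq$'' or ``$\leq$''. It may also be worth noting explicitly that the lemma's conclusion is a clean parity statement: an SD-map flips the sign of $\check\Phi$ on a 2-face exactly when it toggles an odd number of the two ``free'' coordinates of that face.
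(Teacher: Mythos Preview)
Your proof is correct and follows essentially the same approach as the paper: a case analysis on $|S\cap\{u,v\}|\in\{0,1,2\}$, using the vertex correspondences from Proposition~\ref{prop:face} (equations \eqref{case:0-1}--\eqref{case:2-4}) to show that $\check\Phi_{f\circ\sigma_S}(\check\sigma_S(p))$ equals $\check\Phi_f(p)$ in the even cases and $-\check\Phi_f(p)$ in the odd case. Your $a,b,c,d$ shorthand is a minor presentational difference, but the argument is the same.
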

\begin{proof}
 By Proposition~\ref{prop:face} and the definition \eqref{def:checkPhi} of $\check{\Phi}_{f}$, 
 the condition \eqref{eq:key-lemma} holds if and only if 
\begin{eqnarray*}
 f \circp \sigma_S (Y \cup \{u\}) + f \circp \sigma_S (Y \cup \{v\}) 
   > f \circp \sigma_S(Y \cup \{u,v\}) + f \circp \sigma_S(Y)
\end{eqnarray*}
   holds on the 2-face $(Y,\{u,v\}) = \check{\sigma}(X,\{u,v\})$, 
   where $Y= (X \symmdiff S) \setminus \{u,v\}$. 

($\Leftarrow$)
 We show that \eqref{eq:key-lemma} holds if $|S \cap \{u,v\}|=0$ or 2. 
 If $| \{u,v\} \cap S | = 0$, then 
\begin{eqnarray*}
\lefteqn{\check{\Phi}_{f \circp \sigma_S}(\check\sigma_S(X,\{u,v\}))}\\
&& = f \circp \sigma_S (Y \cup \{u\}) + f \circp \sigma_S (Y \cup \{v\}) 
   - f \circp \sigma_S(Y \cup \{u,v\}) - f \circp \sigma_S(Y) \\
&& =
     f ((Y \cup \{u\}) \symmdiff S) + f ((Y \cup \{v\}) \symmdiff S)
    -f ((Y \cup \{u,v\}) \symmdiff S) - f (Y \symmdiff S) \\
&& =
   f (X \cup \{u\}) + f (X\cup \{v\}) 
 - f (X \cup \{u,v\}) - f (X) 
 \hspace{3em} \mbox{(by \eqref{case:0-1}--\eqref{case:0-4})}\\
&& =
  \check{\Phi}_f (X,\{u,v\}) \\
&& >0
\end{eqnarray*}
  hold, where the last inequality follows from the hypothesis that $\check{\Phi}_f(p) > 0$.

 If $| \{u,v\} \cap S | = 2$, i.e., $u \in S$ and $v \in S$, 
  then 
\begin{eqnarray*}
\lefteqn{\check{\Phi}_{f \circp \sigma_S}(\check\sigma_S(X,\{u,v\}))}\\
&& = f \circp \sigma_S (Y \cup \{u\}) + f \circp \sigma_S (Y \cup \{v\}) 
   - f \circp \sigma_S(Y \cup \{u,v\}) - f \circp \sigma_S(Y) \\
&& =
   f ((Y \cup \{u\}) \symmdiff S) + f ((Y \cup \{v\}) \symmdiff S)
 - f ((Y \cup \{u,v\}) \symmdiff S) - f (Y \symmdiff S) \\
&& =
   f (X \cup \{v\}) + f (X\cup \{u\}) 
 - f (X) - f (X \cup \{u,v\}) 
  \hspace{3em} \mbox{(by \eqref{case:2-1}--\eqref{case:2-4})}\\
&& =
  \check{\Phi}_f (X,\{u,v\}) \\
&& >0
\end{eqnarray*}
  hold, where the last inequality follows from the hypothesis that $\check{\Phi}_f(p) > 0$.
 We obtain the claim. 

($\Rightarrow$)
 We prove the contrapositive: 
  if $|S \cap \{u,v\}| = 1$ then \eqref{eq:key-lemma} does not hold. 
 Without loss of generality we may assume that $u \in S$ and $v \not\in S$. 
 Then 
\begin{eqnarray*}
\lefteqn{\check{\Phi}_{f \circp \sigma_S}(\check\sigma_S(X,\{u,v\}))}\\
&& = f \circp \sigma_S (Y \cup \{u\}) + f \circp \sigma_S (Y \cup \{v\}) 
   - f \circp \sigma_S(Y \cup \{u,v\}) - f \circp \sigma_S(Y) \\
&& =
   f (Y \cup \{u\}) \symmdiff S) + f ((Y \cup \{v\}) \symmdiff S)
 - f ((Y \cup \{u,v\}) \symmdiff S) - f (Y \symmdiff S) \\
&& =
   f (X) + f (X \cup \{u,v\}) 
 - f (X \cup \{u\}) - f (X\cup \{v\}) 
 \hspace{3em} \mbox{(by \eqref{case:1-1}--\eqref{case:1-4})}\\
&& =
 -\check{\Phi}_f (X,\{u,v\}) \\
&& < 0
\end{eqnarray*}
  hold, where the last inequality follows from the hypothesis that $\check{\Phi}_f(p) > 0$. 
 Now, we obtain the claim. 
\end{proof}

 The proof of the following lemma is similar to and much easier than the proof of Lemma~\ref{lem:key-lemma}, 
 so it is omitted. 
\begin{lemma}\label{lem:key-lemma2}
 Let $f \colon 2^V \to \mathbb{R}$ be a submodular function. 
 Suppose that a 2-face $p \in {\cal P}$ satisfies that  
\begin{eqnarray*}
 \check{\Phi}_f(p) = 0. 
\end{eqnarray*}
  Then for any subset $S \subseteq V$, 
\begin{eqnarray*}
 \check{\Phi}_{f \circp \sigma_S}(\check{\sigma}_S(p)) = 0
\end{eqnarray*}
 holds. 
\end{lemma}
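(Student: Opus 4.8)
The plan is to mimic the proof of Lemma~\ref{lem:key-lemma} almost verbatim, simply replacing every strict inequality by an equality. First I would invoke Proposition~\ref{prop:face} together with the definition \eqref{def:checkPhi} of $\check\Phi$ to rewrite the hypothesis $\check\Phi_f(p)=0$ and the desired conclusion $\check\Phi_{f\circp\sigma_S}(\check\sigma_S(p))=0$ in terms of function values of $f$ on the two 2-faces $(X,\{u,v\})$ and $(Y,\{u,v\})$, where $Y=(X\symmdiff S)\setminus\{u,v\}$. The point is that $\check\sigma_S$ just permutes the four vertices of the 2-face, so $\check\Phi_{f\circp\sigma_S}(\check\sigma_S(p))$ is, up to sign, equal to $\check\Phi_f(p)$.

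Next I would split into the three cases $|\{u,v\}\cap S|=0,1,2$, exactly as in Proposition~\ref{prop:face}. In the cases $|\{u,v\}\cap S|=0$ and $|\{u,v\}\cap S|=2$, the substitutions \eqref{case:0-1}--\eqref{case:0-4} and \eqref{case:2-1}--\eqref{case:2-4} give $\check\Phi_{f\circp\sigma_S}(\check\sigma_S(X,\{u,v\})) = \check\Phi_f(X,\{u,v\}) = 0$; in the case $|\{u,v\}\cap S|=1$, the substitutions \eqref{case:1-1}--\eqref{case:1-4} give $\check\Phi_{f\circp\sigma_S}(\check\sigma_S(X,\{u,v\})) = -\check\Phi_f(X,\{u,v\}) = 0$. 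In every case the expression equals $\pm\check\Phi_f(p)$, which is $0$ by hypothesis, so the parity of $|S\cap\{u,v\}|$ is now irrelevant and we conclude in all three cases. A clean way to present this without re-doing the three displayed computations is to observe that the proof of Lemma~\ref{lem:key-lemma} in fact establishes, for every $S\subseteq V$, the identity $\check\Phi_{f\circp\sigma_S}(\check\sigma_S(p)) = \varepsilon_S(p)\,\check\Phi_f(p)$ with $\varepsilon_S(p)=+1$ if $|S\cap\{u,v\}|$ is even and $\varepsilon_S(p)=-1$ if it is odd; substituting $\check\Phi_f(p)=0$ immediately yields the claim.

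There is essentially no obstacle here: the only thing to be careful about is that the identity $\check\Phi_{f\circp\sigma_S}(\check\sigma_S(p))=\pm\check\Phi_f(p)$ used in the proof of Lemma~\ref{lem:key-lemma} was derived purely from Proposition~\ref{prop:face} and did not use submodularity of $f$ or the sign of $\check\Phi_f(p)$ anywhere, so it applies verbatim when $\check\Phi_f(p)=0$. Hence the statement that submodularity of $f$ is assumed in Lemma~\ref{lem:key-lemma2} is a convenience for the intended application rather than something the argument needs; I would either keep it for uniformity with Lemma~\ref{lem:key-lemma} or note in passing that the conclusion holds for arbitrary set functions $f$ with $\check\Phi_f(p)=0$. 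Either way, the proof is a one-line reduction to the computations already carried out, which is why the authors omit it.
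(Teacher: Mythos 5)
Your proposal is correct and matches the paper's intent exactly: the paper omits this proof precisely because it is, as you say, the same three-case computation as Lemma~\ref{lem:key-lemma} with the sign $\pm\check\Phi_f(p)$ rendered irrelevant by the hypothesis $\check\Phi_f(p)=0$. Your side remark that submodularity of $f$ is not actually used here is also accurate.
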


\subsection{A characterization by a Boolean system}\label{sec:char1}
\begin{figure}[tbp]
\begin{center}
 $M_f = 
  \begin{pmatrix}
  0&0&0 \\
  1&0&1 \\ 
  0&0&0 \\
  0&0&0 \\
  1&0&1 \\ 
  0&0&0
  \end{pmatrix}
  \hspace{0.5em}
  \begin{array}{lr}
  \leftarrow&(\emptyset,\{1,2\})\mbox{-th} \\
  \leftarrow&(\emptyset,\{1,3\})\mbox{-th} \\
  \leftarrow&(\emptyset,\{2,3\})\mbox{-th} \\
  \leftarrow&(\{1\},\{2,3\})\mbox{-th} \\
  \leftarrow&(\{2\},\{1,3\})\mbox{-th} \\
  \leftarrow&(\{3\},\{1,2\})\mbox{-th} 
  \end{array}
  $
\end{center}
 \caption{
   The matrix $M_f$ of the submodular function $f$ given in Figure~\ref{fig:1}.}\label{fig:2.5}
\end{figure}
 This section presents 
  a characterization of SD-maps preserving submodularity,  
  by using Lemmas~\ref{lem:key-lemma} and \ref{lem:key-lemma2}. 
 For any set function $f \colon 2^V \to \mathbb{R}$, 
  let $M_f \in 2^{{\cal P} \times V}$ be a matrix 
  whose $(X,\{u,v\})$-th row vector is given by 
\begin{eqnarray}
  M_f\left[(X,\{u,v\}), \cdot\right] = \begin{cases}
  \Vec{\chi}_{\{u,v\}}^\top & 
   \text{if $\check{\Phi}_f(X,\{u,v\}) \neq 0$,} \\
  \Vec{0}^{\top} & \text{otherwise,}
  \end{cases}
\label{def:Mf}
\end{eqnarray}
 for each $(X,\{u,v\}) \in {\cal P}$, 
  where $\Vec{\chi}_S \in 2^V$ denotes the characteristic column-vector of $S \subseteq V$, 
  i.e., $\chi_S(w)=1$ if $w \in S$; otherwise $\chi_S(w)=0$.  
 Figure~\ref{fig:2.5} shows the matrix $M_f$ of the submodular function $f$ given in Figure~\ref{fig:1}. 
 Then an SD-map $\sigma_S$ that preserves 
     the submodularity of a submodular function $f$ is characterized by the next theorem. 
\begin{theorem} \label{thm:char1}
 Let $f \colon 2^V \to \mathbb{R}$ be a submodular function. 
 For any $S \subseteq V$,  
  $f \circp \sigma_S$ is submodular 
   if and only if  $M_f  \Vec{\chi}_S \equiv \Vec{0} \pmod{2}$ holds.
\end{theorem}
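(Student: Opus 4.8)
The plan is to reduce submodularity of $g \defeq f \circp \sigma_S$ to a condition checked 2-face by 2-face, using Theorem~\ref{prop:submo-2face}, and then to evaluate that condition on each 2-face by transporting it through the bijection $\check{\sigma}_S$ and invoking Lemmas~\ref{lem:key-lemma} and~\ref{lem:key-lemma2}. First I would note that by Theorem~\ref{prop:submo-2face}, $g$ is submodular if and only if $\check{\Phi}_g(q) \geq 0$ for every $q \in {\cal P}$. Since $\check{\sigma}_S$ is a bijection on ${\cal P}$ by Corollary~\ref{cor:sigmaF-bijective}, the set $\{\check{\sigma}_S(p) \mid p \in {\cal P}\}$ equals ${\cal P}$, so this is equivalent to requiring $\check{\Phi}_g(\check{\sigma}_S(p)) \geq 0$ for every $p = (X,\{u,v\}) \in {\cal P}$.

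Next I would partition ${\cal P}$ according to the value of $\check{\Phi}_f(p)$, which is nonnegative since $f$ is submodular (Theorem~\ref{prop:submo-2face}), so $\check{\Phi}_f(p) \neq 0$ is the same as $\check{\Phi}_f(p) > 0$. For a face $p$ with $\check{\Phi}_f(p) = 0$, Lemma~\ref{lem:key-lemma2} gives $\check{\Phi}_g(\check{\sigma}_S(p)) = 0 \geq 0$ for every $S$, so such faces impose no constraint; and the corresponding row of $M_f$ is $\Vec{0}^\top$ by \eqref{def:Mf}, so the $p$-th equation of $M_f \Vec{\chi}_S \equiv \Vec{0}\pmod{2}$ is the trivial $0 \equiv 0$. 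For a face $p$ with $\check{\Phi}_f(p) > 0$, Lemma~\ref{lem:key-lemma} shows $\check{\Phi}_g(\check{\sigma}_S(p)) > 0$ precisely when $|S \cap \{u,v\}| \equiv 0 \pmod{2}$; and the $(\Rightarrow)$ part of its proof exhibits the stronger fact $\check{\Phi}_g(\check{\sigma}_S(p)) = -\check{\Phi}_f(p) < 0$ when $|S \cap \{u,v\}| \equiv 1 \pmod{2}$. Hence, for such $p$, the requirement $\check{\Phi}_g(\check{\sigma}_S(p)) \geq 0$ is equivalent to $|S \cap \{u,v\}| \equiv 0 \pmod{2}$, i.e. to $\Vec{\chi}_{\{u,v\}}^\top \Vec{\chi}_S \equiv 0 \pmod{2}$, which is exactly the $p$-th equation of $M_f \Vec{\chi}_S \equiv \Vec{0} \pmod{2}$ because the $p$-th row of $M_f$ is $\Vec{\chi}_{\{u,v\}}^\top$.

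Putting the two cases together and quantifying over all $p \in {\cal P}$: $g$ is submodular if and only if every row equation of $M_f \Vec{\chi}_S \equiv \Vec{0} \pmod{2}$ is satisfied, that is, if and only if $M_f \Vec{\chi}_S \equiv \Vec{0} \pmod{2}$. The step I expect to need the most care is the ``only if'' direction: I must ensure that a single offending 2-face actually destroys submodularity, not merely strict submodularity on that face. This is where the hypothesis $\check{\Phi}_f(p) > 0$ (rather than $\geq 0$) in Lemma~\ref{lem:key-lemma} is essential, since its proof delivers the strict violation $\check{\Phi}_g(\check{\sigma}_S(p)) = -\check{\Phi}_f(p) < 0$; and it is precisely why the faces with $\check{\Phi}_f(p) = 0$ must be separated out and handled by Lemma~\ref{lem:key-lemma2}, where the value stays $0$ and no constraint is created.
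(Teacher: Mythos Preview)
Your proposal is correct and follows essentially the same route as the paper: the paper packages your 2-face case analysis into an intermediate lemma (Lemma~\ref{lem:char1}), which states that $\check{\Phi}_{f\circp\sigma_S}(\check{\sigma}_S(p))\geq 0$ iff $M_f[p,\cdot]\Vec{\chi}_S\equiv 0\pmod 2$, and then concludes via Corollary~\ref{cor:sigmaF-bijective} and Theorem~\ref{prop:submo-2face} exactly as you do. If anything, your treatment is slightly more careful in making explicit that the odd-parity case yields $\check{\Phi}_{f\circp\sigma_S}(\check{\sigma}_S(p))=-\check{\Phi}_f(p)<0$ (not merely $\leq 0$), a point the paper leaves implicit when it says the claim ``follows from Lemma~\ref{lem:key-lemma}.''
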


 To prove Theorem~\ref{thm:char1}, we show the following lemma. 
\begin{lemma}\label{lem:char1}
 Let $f \colon 2^V \to \mathbb{R}$ be a submodular function. 
 For any $S \subseteq V$ and for any $p=(X,\{u,v\}) \in {\cal P}$, 
 $\check{\Phi}_{f \circp \sigma_S}(\check{\sigma}_S(p)) \geq 0$
  if and only if 
\begin{eqnarray*}
 M_f[p,\cdot] \Vec{\chi}_S \equiv 0\pmod{2}
\end{eqnarray*}
holds. 
\end{lemma}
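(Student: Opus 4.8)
The statement is, essentially, a case analysis that the three cases of Lemma~\ref{lem:key-lemma} plus Lemma~\ref{lem:key-lemma2} exactly correspond to the rows of $M_f$. I would structure the proof around the definition~\eqref{def:Mf} of the $p$-th row of $M_f$, splitting on whether $\check{\Phi}_f(p) = 0$ or $\check{\Phi}_f(p) \neq 0$.

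First I would dispose of the degenerate case. If $\check{\Phi}_f(p) = 0$, then by definition $M_f[p,\cdot] = \Vec{0}^\top$, so $M_f[p,\cdot]\Vec{\chi}_S = 0 \equiv 0 \pmod 2$ holds trivially for every $S$; and on the other side, Lemma~\ref{lem:key-lemma2} gives $\check{\Phi}_{f\circp\sigma_S}(\check\sigma_S(p)) = 0 \geq 0$ for every $S$. So both conditions are vacuously true and the equivalence holds.

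Next, the main case: $\check{\Phi}_f(p) \neq 0$. Since $f$ is submodular, Theorem~\ref{prop:submo-2face} forces $\check{\Phi}_f(p) \geq 0$, hence in fact $\check{\Phi}_f(p) > 0$, which is precisely the hypothesis of Lemma~\ref{lem:key-lemma}. Here $M_f[p,\cdot] = \Vec{\chi}_{\{u,v\}}^\top$, so $M_f[p,\cdot]\Vec{\chi}_S = |S \cap \{u,v\}|$ (as an integer), and the congruence $M_f[p,\cdot]\Vec{\chi}_S \equiv 0 \pmod 2$ says exactly $|S\cap\{u,v\}| \equiv 0 \pmod 2$. Now I would invoke Lemma~\ref{lem:key-lemma}: it gives $\check{\Phi}_{f\circp\sigma_S}(\check\sigma_S(p)) > 0 \iff |S\cap\{u,v\}| \equiv 0\pmod 2$. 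It remains only to note that $\check{\Phi}_{f\circp\sigma_S}(\check\sigma_S(p)) \geq 0$ is equivalent to $\check{\Phi}_{f\circp\sigma_S}(\check\sigma_S(p)) > 0$ in this situation: indeed, from the explicit computations in the proof of Lemma~\ref{lem:key-lemma} (cases~(i)--(iii)), $\check{\Phi}_{f\circp\sigma_S}(\check\sigma_S(p)) = \pm\check{\Phi}_f(p)$ with sign $+$ when $|S\cap\{u,v\}|$ is even and sign $-$ when it is odd; since $\check{\Phi}_f(p) > 0$, the transformed value is never $0$, so "$\geq 0$" and "$> 0$" coincide. Chaining these equivalences yields $\check{\Phi}_{f\circp\sigma_S}(\check\sigma_S(p)) \geq 0 \iff |S\cap\{u,v\}| \equiv 0 \pmod 2 \iff M_f[p,\cdot]\Vec{\chi}_S \equiv 0 \pmod 2$, which is the claim.

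The only subtlety — and the one point I would be careful to spell out — is the passage from the strict inequality of Lemma~\ref{lem:key-lemma} to the non-strict inequality appearing in Lemma~\ref{lem:char1}; this is not a real obstacle but relies on observing that $\check\Phi_{f\circp\sigma_S}(\check\sigma_S(p))$ equals $\pm\check\Phi_f(p)$ and hence is nonzero whenever $\check\Phi_f(p)\neq 0$, so the case $\check\Phi_{f\circp\sigma_S}(\check\sigma_S(p)) = 0$ simply does not arise when $\check\Phi_f(p)>0$. Everything else is a direct translation of the key lemmas into the language of the matrix $M_f$, so no genuinely hard step is expected.
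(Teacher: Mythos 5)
Your proposal is correct and follows essentially the same route as the paper: split on $\check{\Phi}_f(p)=0$ versus $\check{\Phi}_f(p)\neq 0$, dispatch the first case via Lemma~\ref{lem:key-lemma2}, and reduce the second to Lemma~\ref{lem:key-lemma} after noting that submodularity forces $\check{\Phi}_f(p)>0$. The one place you are more explicit than the paper is the passage from the strict inequality in Lemma~\ref{lem:key-lemma} to the non-strict one in the lemma statement; the paper just writes ``the claim follows from Lemma~\ref{lem:key-lemma},'' and your observation that $\check{\Phi}_{f\circp\sigma_S}(\check\sigma_S(p))=\pm\check{\Phi}_f(p)\neq 0$ is exactly the detail that makes that step airtight.
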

\begin{proof}
 The proof is by a case analysis of $(X,\{u,v\}) \in {\cal P}$. 
 If $\check{\Phi}_f(X,\{u,v\}) = 0$ holds for  $(X,\{u,v\}) \in {\cal P}$, 
  then $M_f[(X,\{u,v\}),\cdot] = \Vec{0}$ holds by the definition of $M_f$. 
 Using Lemma~\ref{lem:key-lemma2}, 
  the claim is easy in this case. 

 Suppose that $\check{\Phi}_f(X,\{u,v\}) \neq 0$  holds for  $(X,\{u,v\}) \in {\cal P}$. 
 Then,  
\begin{eqnarray*}
 M_f[(X,\{u,v\}),\cdot] \Vec{\chi}_S 
 &=& \chi_S(u) + \chi_S(v) \\
  &=& |\{u,v\} \cap S|
\end{eqnarray*}
 by the definition of $M_f$. 
 Now the claim follows from Lemma~\ref{lem:key-lemma}.
\end{proof}

\begin{proof}[Proof of Theorem~\ref{thm:char1}]
 Since $\check{\sigma}_S$ is bijective on ${\cal P}$ by Corollary~\ref{cor:sigmaF-bijective}, 
  $\check{\Phi}_{f \circp \sigma_S}(\check{\sigma}_S(p)) \geq 0$ holds for any $p \in {\cal P}$ 
  implies that  
  $\check{\Phi}_{f \circp \sigma_S}(p) \geq 0$ holds for any $p \in {\cal P}$. 
 Now, 
  Theorem~\ref{thm:char1} is immediate from Lemma~\ref{lem:char1}, using Theorem~\ref{prop:submo-2face}. 
\end{proof}

\subsection{An interpretation of Theorem~\ref{thm:char1} by a graph}\label{sec:char2}
\begin{figure}[tbp]
\begin{center}
\begin{tabular}{c@{\hspace{5em}}c}
\raisebox{3em}{
 $\overline{M}_f = 
  \begin{pmatrix}
  0&0&0 \\
  1&0&1 \\ 
  0&0&0 
  \end{pmatrix}
  \hspace{0.5em}
  \begin{array}{l}
  \leftarrow\{1,2\}\mbox{-th} \\
  \leftarrow\{1,3\}\mbox{-th} \\
  \leftarrow\{2,3\}\mbox{-th} 
  \end{array}
  $
}  &
\includegraphics[width=3cm,pagebox=cropbox,clip]{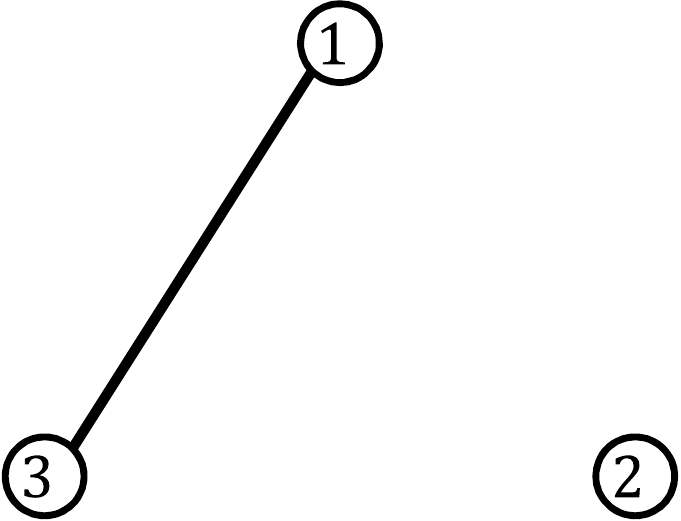}
\end{tabular}
\end{center}
 \caption{
   The reduced matrix $\overline{M}_f$ (left) and 
   the inequality graph $G_f$ (right) 
   of the submodular function $f$ given in Figure~\ref{fig:1}.}\label{fig:2}
\end{figure}
 This section interprets the characterization in Theorem~\ref{thm:char1} 
   in terms of a graph defined from $f$. 
 To begin with, 
  we remark that 
  the Boolean system $M_f \Vec{\chi_S} \equiv \Vec{0} \pmod{2}$ considered in Theorem~\ref{thm:char1} 
  contains many redundant constraints since  
   the rank of the matrix $M_f \in 2^{{\cal P} \times V}$ is at most $|V|$. 
 Then, we define the {\em reduced matrix}\footnote{ 
  The {\em reduced} matrix $\overline{M}_f$ itself does not imply any ``improvement of computational complexity'' to $M_f$: 
  to construct $\overline{M}_f$ we have to check (almost) all $(X,\{u,v\}) \in {\cal P}$ in the worst case, 
  to confirm if $\exists X \subseteq V \setminus \{u,v\}$ satisfying the condition. 
  See also Proposition~\ref{thm:min-CS}. 
} $\overline{M}_f \in 2^{{V \choose 2} \times V}$ of $M_f$ 
  where its $\{u,v\}$-th row vector ($\{u,v\} \in {V \choose 2}$) is given by
\begin{eqnarray}
  \overline{M}_f[\{u,v\}, \cdot] 
 = \begin{cases}
  \Vec{\chi}_{\{u,v\}}^\top & 
   \text{if $\exists(X,\{u,v\}) \in {\cal P}$, 
   $\check{\Phi}_f(X,\{u,v\}) \neq 0$,} \\
  \Vec{0} & \text{otherwise,}
  \end{cases}
\label{eq:redMf}
\end{eqnarray}
 for each $\{u,v\} \in {V \choose 2}$. 
 We now make an observation, 
   where the ``only-if'' part is trivial, and 
   ``if'' part is not difficult by the definitions of $\overline{M}_f$ (see \eqref{eq:redMf}). 
\begin{observation}\label{obs:1}
 For any $\Vec{\chi} \in 2^V $,
 $M_f \Vec{\chi} \equiv \Vec{0} \pmod{2}$ if and only if $\overline{M}_f \Vec{\chi} \equiv \Vec{0} \pmod{2}$. 
\end{observation}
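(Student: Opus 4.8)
\textbf{Proof proposal for Observation~\ref{obs:1}.}
The plan is to prove the two directions separately, both via the natural row-by-row correspondence between $M_f$ and $\overline{M}_f$. Fix $\Vec{\chi} \in 2^V$. The key observation is that for each pair $\{u,v\} \in {V \choose 2}$, every row of $M_f$ indexed by some $(X,\{u,v\}) \in {\cal P}$ is, by the definition \eqref{def:Mf}, either $\Vec{\chi}_{\{u,v\}}^\top$ or $\Vec{0}^\top$, and the single row $\overline{M}_f[\{u,v\},\cdot]$ in \eqref{eq:redMf} is $\Vec{\chi}_{\{u,v\}}^\top$ precisely when at least one of those $M_f$-rows equals $\Vec{\chi}_{\{u,v\}}^\top$. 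Since for a fixed $\{u,v\}$ all the nonzero $M_f$-rows are literally the same vector $\Vec{\chi}_{\{u,v\}}^\top$, the inner product $M_f[(X,\{u,v\}),\cdot]\,\Vec{\chi}$ takes only two possible values as $X$ ranges over $V \setminus \{u,v\}$: it is $0$ for the all-zero rows, and it is $\Vec{\chi}_{\{u,v\}}^\top\Vec{\chi} = \chi(u)+\chi(v)$ for the nonzero rows (when any exist).

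For the ``only-if'' direction (which is essentially trivial, as noted in the excerpt): suppose $M_f \Vec{\chi} \equiv \Vec{0} \pmod 2$. Take any $\{u,v\}$ with $\overline{M}_f[\{u,v\},\cdot] = \Vec{\chi}_{\{u,v\}}^\top$. By \eqref{eq:redMf} there is some $(X,\{u,v\}) \in {\cal P}$ with $\check{\Phi}_f(X,\{u,v\}) \neq 0$, hence $M_f[(X,\{u,v\}),\cdot] = \Vec{\chi}_{\{u,v\}}^\top$, and the hypothesis gives $\Vec{\chi}_{\{u,v\}}^\top \Vec{\chi} \equiv 0 \pmod 2$, which is exactly the $\{u,v\}$-th entry of $\overline{M}_f \Vec{\chi}$. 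The rows of $\overline{M}_f$ that are $\Vec{0}$ contribute $0$ trivially. Hence $\overline{M}_f \Vec{\chi} \equiv \Vec{0} \pmod 2$.

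For the ``if'' direction: suppose $\overline{M}_f \Vec{\chi} \equiv \Vec{0} \pmod 2$, and consider an arbitrary row $(X,\{u,v\})$ of $M_f$. If $\check{\Phi}_f(X,\{u,v\}) = 0$ then that row is $\Vec{0}^\top$ and its contribution to $M_f \Vec{\chi}$ is $0$. If $\check{\Phi}_f(X,\{u,v\}) \neq 0$, then by \eqref{eq:redMf} the existential condition defining $\overline{M}_f[\{u,v\},\cdot]$ is witnessed by this very $X$, so $\overline{M}_f[\{u,v\},\cdot] = \Vec{\chi}_{\{u,v\}}^\top$, and the hypothesis tells us $\Vec{\chi}_{\{u,v\}}^\top \Vec{\chi} \equiv 0 \pmod 2$; but $M_f[(X,\{u,v\}),\cdot]\,\Vec{\chi} = \Vec{\chi}_{\{u,v\}}^\top \Vec{\chi}$, so this entry of $M_f \Vec{\chi}$ is also $\equiv 0 \pmod 2$. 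Since every entry of $M_f \Vec{\chi}$ vanishes mod $2$, we conclude $M_f \Vec{\chi} \equiv \Vec{0} \pmod 2$.

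There is no real obstacle here; the statement is a bookkeeping lemma whose content is entirely that the nonzero rows of $M_f$ sharing a column-index pair $\{u,v\}$ are all identical, so collapsing them to a single representative row loses no information about the mod-$2$ kernel. The only point requiring a modicum of care is making explicit that the existential quantifier in \eqref{eq:redMf} is matched, in the ``if'' direction, by the specific $X$ at hand — which is immediate — and noting in the ``only-if'' direction that a vanishing $\{u,v\}$-row of $\overline{M}_f$ imposes no constraint and hence cannot be violated.
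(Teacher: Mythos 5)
Your proof is correct and fills in precisely the row-correspondence argument the paper gestures at: every nonzero row $\Vec{\chi}_{\{u,v\}}^\top$ of $\overline{M}_f$ appears among the rows of $M_f$ (giving the only-if direction), and every nonzero row of $M_f$ indexed by $(X,\{u,v\})$ coincides with the $\{u,v\}$-row of $\overline{M}_f$ (giving the if direction). This is the same approach the paper leaves implicit when it calls the only-if part trivial and the if part an easy consequence of the definition \eqref{eq:redMf}.
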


 We can regard $\overline{M}_f$ as the (redundant) incidence matrix
     of what is called the {\it inequality graph} $G_f$ of $f$.
 Precisely, 
 for any set function $f \colon 2^V \to \mathbb{R}$, 
 let
   $G_f=(V, E_f)$ be an undirected graph with the edge set 
\begin{equation}
  E_f \defeq \left\{ \{u,v\} \in \binom{V}{2} \;\middle|\;
  \mbox{$\exists (X,\{u,v\}) \in {\cal P}$,\ 
   $\check{\Phi}_f(X,\{u,v\}) \neq 0$}
  \right\}. 
\label{def:Ef}
\end{equation}
 Figure~\ref{fig:2} shows the reduced matrix $\overline{M}_f$ and the inequality graph $G_f$ of the submodular function $f$ given in Figure~\ref{fig:1}.

 The following observation is trivial, too (see also the arguments on the graphic matroid~\cite{Fujishige}).
\begin{observation}\label{obs:2}
 For any $S \subseteq V$, 
  $\overline{M}_f \Vec{\chi}_S \equiv \Vec{0} \pmod{2} $ holds 
  if and only if $\chi_S(u) = \chi_S(v)$ holds for any $\{u,v\} \in E_f$. 
\end{observation}

 Observation~\ref{obs:2} implies that $S$ is a canonical set 
  if and only if every connected component of~$G_f$ is included in or completely excluded from $S$. 
 To be precise, 
  let $U_i \subseteq V$ ($i=1,\ldots,k$) be 
  the connected components of $G_f$ 
  where $k$ is the number of connected components of $G_f$.
 Let ${\cal U}(f)$ denote the 
  whole set family of unions of $U_i$ ($i=1,\ldots,k$), i.e., 
\begin{equation}
 {\cal U}(f) = \left\{ \bigcup_{i \in I} U_i \;\middle|\; I \subseteq \{1,2,\dots,k\} \right\}.
 \label{def:calu}
\end{equation}
 Then, we can conclude the following theorem as an easy
     consequence of Theorem~\ref{thm:char1} and Observations~\ref{obs:1} and~\ref{obs:2}. 
\begin{theorem} \label{thm:char2}
 For any submodular function $f \colon 2^V \to \mathbb{R}$,  
  $f \circp \sigma_S$ is submodular if and only if $S \in {\cal U}(f)$. 
\end{theorem}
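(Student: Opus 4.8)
The plan is to derive Theorem~\ref{thm:char2} directly from Theorem~\ref{thm:char1} by chaining the two observations and the definition of ${\cal U}(f)$. The skeleton of the argument is the sequence of equivalences: $f \circp \sigma_S$ is submodular $\iff M_f \Vec{\chi}_S \equiv \Vec{0} \pmod 2$ (Theorem~\ref{thm:char1}) $\iff \overline{M}_f \Vec{\chi}_S \equiv \Vec{0} \pmod 2$ (Observation~\ref{obs:1}) $\iff \chi_S(u) = \chi_S(v)$ for every $\{u,v\} \in E_f$ (Observation~\ref{obs:2}) $\iff S \in {\cal U}(f)$. So the only thing left to prove is the last equivalence, which is a purely graph-theoretic statement about the inequality graph $G_f = (V, E_f)$.

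For that last step I would argue as follows. Suppose $\chi_S(u) = \chi_S(v)$ holds for every edge $\{u,v\} \in E_f$. Then $\chi_S$ is constant on any path in $G_f$, hence constant on each connected component $U_i$. Let $I = \{ i \mid U_i \subseteq S\}$; then $S = \bigcup_{i \in I} U_i$ because every vertex lies in exactly one component and that component is either entirely inside $S$ (when $\chi_S$ is identically $1$ on it) or entirely outside $S$ (when $\chi_S$ is identically $0$ on it). Hence $S \in {\cal U}(f)$. Conversely, if $S = \bigcup_{i \in I} U_i$ for some $I \subseteq \{1,\dots,k\}$, then for any edge $\{u,v\} \in E_f$ both endpoints lie in the same component $U_j$ (an edge cannot cross components), and that component is either wholly in $S$ (if $j \in I$) or wholly outside $S$ (if $j \notin I$); either way $\chi_S(u) = \chi_S(v)$. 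This establishes the equivalence.

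Then I would assemble the full chain: starting from Theorem~\ref{thm:char1} applied to the submodular function $f$, pass through Observation~\ref{obs:1} (with $\Vec{\chi} = \Vec{\chi}_S$), then Observation~\ref{obs:2}, then the graph-theoretic equivalence just proved, to conclude that $f \circp \sigma_S$ is submodular if and only if $S \in {\cal U}(f)$. The write-up is essentially a one-paragraph concatenation of displayed equivalences followed by the short component-wise argument.

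I do not expect any serious obstacle here: every link in the chain is either quoted from an earlier result or is the elementary observation that connectivity components are exactly the equivalence classes on which a vertex-labelling that agrees across every edge must be constant. The only point requiring a word of care is making sure the correspondence between $S$ and the index set $I$ is well defined (it is, since the $U_i$ partition $V$), and noting explicitly that $E_f$ has no edges between distinct components — both trivial. So the ``main obstacle'' is really just bookkeeping: stating the equivalences in the right order and invoking Theorem~\ref{prop:submo-2face} implicitly through Theorem~\ref{thm:char1} rather than re-deriving it.
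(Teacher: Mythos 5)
Your proof is correct and follows exactly the route the paper intends: chaining Theorem~\ref{thm:char1}, Observation~\ref{obs:1}, and Observation~\ref{obs:2}, and then filling in the purely graph-theoretic fact that a vertex-labelling agreeing across every edge of $G_f$ is constant on connected components, which is precisely what the paper calls an ``easy consequence.'' No discrepancy.
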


\subsection{Remarks on Results in Section~\ref{sec:main}}\label{sec:remark3}
 This section makes some remarks concerning the arguments in Section~\ref{sec:main}. 
 Sections~\ref{subsec:modular} and~\ref{subsec:complement} are easy implications of 
  Lemmas~\ref{lem:key-lemma} and \ref{lem:key-lemma2}. 
 Sections~\ref{sec:nontrivial}, \ref{sec:connected-comp} and \ref{sec:inseparable} 
  are remarks on Theorem~\ref{thm:char2}. 
\subsubsection{SD-transformation of a modular function}\label{subsec:modular}
\begin{proposition} \label{prop:modular}
 If a set function $f \colon 2^V \to \mathbb{R}$ is {\em modular} 
  then $f \circp \sigma_S$ is modular for any $S \subseteq V$. 
\end{proposition}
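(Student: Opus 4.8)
The plan is to reduce modularity to a condition on the 2-faces of the hypercube, mirroring the reduction of submodularity in Theorem~\ref{prop:submo-2face}, and then to apply Lemma~\ref{lem:key-lemma2} together with the bijectivity of $\check{\sigma}_S$.

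First I would record the auxiliary fact that a set function $h \colon 2^V \to \mathbb{R}$ is modular if and only if $\check{\Phi}_h(p) = 0$ for every $p \in {\cal P}$. The ``only if'' direction is immediate from the definition of modularity (applied to the pairs $X \cup \{u\}, X \cup \{v\}$). For the ``if'' direction, observe that $\check{\Phi}_h(p) \geq 0$ for all $p$ gives, by Theorem~\ref{prop:submo-2face}, that $h$ is submodular; applying the same theorem to $-h$, whose 2-face values are $\check{\Phi}_{-h}(p) = -\check{\Phi}_h(p) = 0 \geq 0$, shows that $-h$ is submodular as well, i.e.\ $h$ is both submodular and supermodular, hence $\Phi_h(X,Y) = 0$ for every pair $X, Y \subseteq V$.

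Next, since $f$ is modular, the auxiliary fact yields $\check{\Phi}_f(p) = 0$ for all $p \in {\cal P}$. By Lemma~\ref{lem:key-lemma2}, $\check{\Phi}_{f \circp \sigma_S}(\check{\sigma}_S(p)) = 0$ for every $p \in {\cal P}$. Because $\check{\sigma}_S$ is a bijection on ${\cal P}$ by Corollary~\ref{cor:sigmaF-bijective}, the argument $\check{\sigma}_S(p)$ ranges over all of ${\cal P}$ as $p$ does, so $\check{\Phi}_{f \circp \sigma_S}(q) = 0$ for all $q \in {\cal P}$. Applying the auxiliary fact once more, now to $h = f \circp \sigma_S$, gives that $f \circp \sigma_S$ is modular, which is the claim. (Alternatively, one can bypass the auxiliary lemma altogether: modularity of $f$ forces $M_f$ to be the all-zero matrix, so $M_f \Vec{\chi}_S \equiv \Vec{0} \pmod 2$ holds trivially and Theorem~\ref{thm:char1} makes $f \circp \sigma_S$ submodular; running the same reasoning on the modular function $-f$ shows $-(f \circp \sigma_S)$ is submodular, i.e.\ $f \circp \sigma_S$ is supermodular, hence modular.)

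I do not expect a genuine obstacle here: the only point requiring a short argument is the ``2-faces suffice for modularity'' lemma, and that is a one-line corollary of Theorem~\ref{prop:submo-2face} applied to $\pm f$. This is exactly the sense in which the proof is ``similar to and much easier than'' that of Lemma~\ref{lem:key-lemma}, since the delicate parity bookkeeping of $|S \cap \{u,v\}|$ — the heart of Lemma~\ref{lem:key-lemma} — collapses in the modular case: all relevant values are already $0$, and Lemma~\ref{lem:key-lemma2} preserves them regardless of $S$.
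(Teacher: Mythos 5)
Your proof is correct and follows the same route as the paper, which states only that the claim is ``immediate from Lemma~\ref{lem:key-lemma2}''; you have simply spelled out the intermediate steps (the 2-face characterization of modularity via applying Theorem~\ref{prop:submo-2face} to $\pm f$, and the use of bijectivity of $\check{\sigma}_S$) that the paper leaves implicit. The only minor point worth noting explicitly is that a modular function is in particular submodular, so Lemma~\ref{lem:key-lemma2} applies to $f$.
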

\begin{proof}
 The claim is immediate from Lemma~\ref{lem:key-lemma2}. 
\end{proof}
 We will use Proposition~\ref{prop:modular} in Section~\ref{sec:bad-example}. 

\subsubsection{Complement of a canonical set}\label{subsec:complement}
\begin{proposition} \label{prop:complement}
 If a set function $f \colon 2^V \to \mathbb{R}$ is submodular 
  then $f \circp \sigma_V$ is submodular. 
\end{proposition}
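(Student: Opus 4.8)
The plan is to unwind the definition. Since $\sigma_V(X) = X \symmdiff V = V \setminus X$ for every $X \subseteq V$, the set function $g \defeq f \circp \sigma_V$ is simply $g(X) = f(V \setminus X)$, the ``complement'' (reflection) of $f$. So the statement amounts to the folklore fact that the complement of a submodular function is submodular, and I would prove it by checking $\Phi_g(X,Y) \geq 0$ directly from the definition \eqref{def:Phi}.

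First I would record the De Morgan identities $(V\setminus X)\cup(V\setminus Y) = V\setminus(X\cap Y)$ and $(V\setminus X)\cap(V\setminus Y) = V\setminus(X\cup Y)$. Substituting these into $g$ gives $g(X\cup Y) + g(X\cap Y) = f\bigl(V\setminus(X\cap Y)\bigr) + f\bigl(V\setminus(X\cup Y)\bigr) = f\bigl((V\setminus X)\cup(V\setminus Y)\bigr) + f\bigl((V\setminus X)\cap(V\setminus Y)\bigr)$. Then I would apply submodularity of $f$ to the pair $V\setminus X,\ V\setminus Y \subseteq V$, namely $f(V\setminus X)+f(V\setminus Y) \geq f\bigl((V\setminus X)\cup(V\setminus Y)\bigr)+f\bigl((V\setminus X)\cap(V\setminus Y)\bigr)$, whose left-hand side is exactly $g(X)+g(Y)$. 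Chaining the two displays yields $g(X)+g(Y) \geq g(X\cup Y)+g(X\cap Y)$, i.e.\ $g$ is submodular.

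An alternative route, and the one more in the spirit of this section, is to invoke the already-established characterization: by Theorem~\ref{thm:char1}, $f\circp\sigma_V$ is submodular iff $M_f\,\Vec{\chi}_V \equiv \Vec{0}\pmod 2$, and since $\Vec{\chi}_V$ is the all-ones vector, each nonzero row $\Vec{\chi}_{\{u,v\}}^\top$ of $M_f$ contributes $\chi_V(u)+\chi_V(v)=2\equiv 0$ while each zero row contributes $0$; hence the congruence holds automatically. Equivalently, $V = \bigcup_{i=1}^{k} U_i \in {\cal U}(f)$ by \eqref{def:calu}, so $V$ is a canonical set by Theorem~\ref{thm:char2}. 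I do not expect a genuine obstacle here: the proposition is essentially a corollary, and the only thing requiring a little care is the De Morgan bookkeeping in the first approach, or — in the second — noting that the parity argument uniformly handles both the nonzero and the zero rows of $M_f$.
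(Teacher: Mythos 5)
Your proposal is correct, and your second route (via Theorem~\ref{thm:char1}) is exactly the paper's proof: the paper simply observes that $M_f\Vec{\chi}_V \equiv \Vec{0} \pmod 2$ by the definition \eqref{def:Mf} and invokes Theorem~\ref{thm:char1}. You spell out the parity bookkeeping a little more carefully than the paper does — the paper writes ``$M_f\Vec{\chi}_V = \Vec{0}$'', which is really a congruence mod~$2$, since each nonzero row contributes $2$ rather than $0$ over the integers — and your version is, if anything, more precise. Your first route, the direct De Morgan argument showing $g(X) = f(V\setminus X)$ and then checking $\Phi_g(X,Y)\geq 0$ against $\Phi_f(V\setminus X, V\setminus Y)\geq 0$, is a genuinely different and more elementary proof: it needs none of the machinery of Section~\ref{sec:main} (no $M_f$, no 2-face reduction, no key lemma) and would stand on its own even before Theorem~\ref{thm:char1} is proved. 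The paper's choice to derive the proposition as a corollary of Theorem~\ref{thm:char1} is presumably for economy and to illustrate the usefulness of the characterization, but the De Morgan proof has the advantage of being self-contained and of making transparent \emph{why} the result is true (complementation swaps $\cup$ and $\cap$, under which the defining inequality is symmetric). Either route is a complete and correct proof.
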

\begin{proof}
 Notice that $M_f \Vec{\chi}_V = \Vec{0}$ holds by the definition \eqref{def:Mf} of $M_f$. 
 The claim is immediate from Theorem~\ref{thm:char1}. 
\end{proof}
 The following proposition for Problem~\ref{prob:CS} immediately follows from Proposition~\ref{prop:complement}. 
\begin{proposition} \label{prop:complement2}
 Let $g \colon 2^V \to \mathbb{R}$ be an SD-transformation of a submodular function. 
 If $T \subseteq V$ is a canonical set of $g$, so is $V \setminus T$. 
\end{proposition}
\begin{proof}
 By the hypothesis, 
  $h =g \circp \sigma_T$ is submodular. 
 Proposition~\ref{prop:complement} implies 
  that $h' = h \circp \sigma_V$ is submodular.  
 Since the symmetric difference is commutative and associative, 
  we see that  
  $h' = h \circp \sigma_V =  g \circp \sigma_T \circp \sigma_V = g \circp \sigma_{V \setminus T}$, and we obtain the claim. 
\end{proof}

\subsubsection{Nontrivial example of many canonical sets}\label{sec:nontrivial}
 Using Theorem~\ref{thm:char2}, 
  we give a nontrivial example of submodular functions 
  which have many SD-transformations that are submodular. 
 For an arbitrary finite set $V$ and an arbitrary partition $U_1, U_2, \ldots , U_k$ of $V$, 
  let $f \colon 2^V \to \mathbb{R}$
        be a set function defined by
\begin{eqnarray}
  f(X) = \min_{W \in {\cal U}}  |X \symmdiff W| 
\label{ex:part}
\end{eqnarray}
 for any $X \subseteq V$, 
 where ${\cal U} = \left\{ \bigcup_{i \in I} U_i \;\middle|\; I \subseteq \{1,2,\dots,k\} \right\}$. 
 This function represents the edit distance from the nearest point in the Boolean sublattice ${\cal U}$ of $2^V$. 
\begin{proposition}\label{prop:ex-part1}
 The set function $f$ given by \eqref{ex:part} is submodular. 
\end{proposition}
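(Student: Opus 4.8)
The plan is to reduce the statement to an elementary fact about concave functions of one integer variable. First I would rewrite $f$ in a form that is separable over the partition. For $W = \bigcup_{i\in I} U_i \in {\cal U}$ and any $X \subseteq V$ we have $|X \symmdiff W| = \sum_{i=1}^k |(X\cap U_i) \symmdiff (W\cap U_i)|$, and since $W \cap U_i$ equals $U_i$ when $i \in I$ and $\emptyset$ otherwise, the $i$-th summand equals $|U_i \setminus X|$ if $i \in I$ and $|X\cap U_i|$ if $i \notin I$. Minimizing independently over each index then gives
\begin{eqnarray*}
 f(X) = \sum_{i=1}^k \min(|X\cap U_i|,\ |U_i \setminus X|).
\end{eqnarray*}
Because a sum of submodular functions is submodular, it suffices to prove that each term $g_i(X) \defeq \min(|X\cap U_i|,\ |U_i\setminus X|)$ is submodular.

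Next I would note that $g_i(X) = \phi_i(|X\cap U_i|)$, where $\phi_i\colon\{0,1,\dots,|U_i|\}\to\mathbb{R}$ is given by $\phi_i(j) = \min(j,\ |U_i|-j)$, and that $\phi_i$ is concave, in the sense that the successive differences $\phi_i(j+1)-\phi_i(j)$ form a non-increasing sequence (each difference is $+1$, $0$, or $-1$). Submodularity of $g_i$ is then the standard fact that a concave function composed with a modular function is submodular, which I would prove in two lines: fix $X,Y\subseteq V$, put $a = |X\cap U_i|$, $b=|Y\cap U_i|$, $p = |X\cap Y\cap U_i|$, $q = |(X\cup Y)\cap U_i|$; then $p+q = a+b$ and $p \le a,b \le q$, so writing $a = (1-t)p + tq$ and $b = tp+(1-t)q$ with $t\in[0,1]$ and applying concavity of (the piecewise-linear extension of) $\phi_i$ yields $\phi_i(a)+\phi_i(b) \ge \phi_i(q)+\phi_i(p)$, i.e., $\Phi_{g_i}(X,Y) \ge 0$. (Alternatively, one can invoke Theorem~\ref{prop:submo-2face} and check $\check{\Phi}_{g_i}(X,\{u,v\}) \ge 0$ on every 2-face by splitting on $|\{u,v\}\cap U_i|\in\{0,1,2\}$: the two smaller cases give $\check{\Phi}_{g_i} = 0$, and the case $|\{u,v\}\cap U_i| = 2$ gives $2\phi_i(j+1)-\phi_i(j+2)-\phi_i(j)\ge 0$ with $j = |X\cap U_i|$, again by concavity of $\phi_i$.)

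Summing over $i$ then gives submodularity of $f$, as claimed. I do not expect a genuine obstacle here: the only step requiring care is the separable rewriting of $f$ over the blocks of the partition, after which everything reduces to the concavity of $j\mapsto\min(j,\,|U_i|-j)$ together with closure of submodularity under addition.
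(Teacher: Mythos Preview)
Your proof is correct and follows the same overall route as the paper: both rewrite $f$ as the separable sum $f(X)=\sum_{i=1}^k \min(|X\cap U_i|,\,|U_i\setminus X|)$ and then reduce to showing that each summand $h_{U_i}(X)=\min(|X\cap U_i|,\,|U_i\setminus X|)$ is submodular, concluding by closure of submodularity under addition.

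The only difference is in how that per-block lemma is established. The paper (Lemma~\ref{lem:partMinSubmo}) argues by a direct three-case analysis on whether $|X\cap U_i|$ and $|Y\cap U_i|$ lie below or above $|U_i|/2$. Your argument is more conceptual: you observe that $h_{U_i}(X)=\phi_i(|X\cap U_i|)$ with $\phi_i(j)=\min(j,\,|U_i|-j)$ concave (as a minimum of two affine functions), and then invoke the general fact that a concave function of a modular function is submodular, proving it via the convex-combination trick $a=(1-t)p+tq$, $b=tp+(1-t)q$. This buys you a shorter, case-free proof and makes transparent \emph{why} the inequality holds; the paper's case analysis, while longer, is entirely self-contained and avoids even the one-line appeal to concavity. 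Either way the content is the same, and your alternative 2-face verification is also sound.
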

\begin{proposition}\label{prop:ex-part2}
 For the submodular function $f$ given by \eqref{ex:part}, 
 $f \circp \sigma_S$ is submodular if and only if $S \in \mathcal{U}$. 
\end{proposition}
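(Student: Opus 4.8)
The plan is to apply the graph-theoretic characterization of Theorem~\ref{thm:char2} and reduce the claim to identifying the connected components of the inequality graph $G_f$. By Theorem~\ref{thm:char2}, $f \circp \sigma_S$ is submodular if and only if $S \in {\cal U}(f)$, where ${\cal U}(f)$ is the family of unions of connected components of $G_f$. Since the statement to be proved asserts that the set of submodularity-preserving $S$ is exactly $\mathcal{U}$ (the family of unions of the $U_i$ fixing $f$), it suffices to show that the connected components of $G_f$ are precisely $U_1, U_2, \ldots, U_k$; then ${\cal U}(f) = \mathcal{U}$ and we are done. Throughout we use Proposition~\ref{prop:ex-part1}, which guarantees $f$ is submodular so that Theorem~\ref{thm:char2} applies.

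First I would show that no edge of $G_f$ joins two distinct blocks $U_a$ and $U_b$. Equivalently, for every $\{u,v\}$ with $u \in U_a$, $v \in U_b$, $a \neq b$, and every $X \subseteq V \setminus \{u,v\}$, we must check $\check{\Phi}_f(X,\{u,v\}) = 0$, i.e.\ that $f$ is modular on that 2-face. The key computational fact is that for the nearest-point-in-sublattice function $f(X) = \min_{W \in {\cal U}} |X \symmdiff W|$, one has the explicit closed form $f(X) = \sum_{i=1}^k \min\{ |X \cap U_i|, |U_i \setminus X| \}$, because the choice of whether to include $U_i$ in the optimal $W$ decouples across blocks: including $U_i$ costs $|U_i \setminus X|$ and excluding it costs $|X \cap U_i|$. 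Given this decomposition, $f$ is a sum of set functions each depending on a single block $U_i$; when $u$ and $v$ lie in different blocks, adding $u$ or $v$ to $X$ affects two different summands independently, so the four function values on the 2-face $(X,\{u,v\})$ satisfy the modular identity and $\check{\Phi}_f(X,\{u,v\}) = 0$. Hence $\{u,v\} \notin E_f$.

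Second I would show each block $U_i$ with $|U_i| \geq 2$ is connected in $G_f$; it suffices to exhibit, for any pair $u,v \in U_i$, \emph{some} 2-face $(X,\{u,v\}) \in {\cal P}$ with $\check{\Phi}_f(X,\{u,v\}) \neq 0$, which by \eqref{def:Ef} puts $\{u,v\} \in E_f$. Using the closed form above, the contribution of block $U_i$ to $\check{\Phi}_f(X,\{u,v\})$ is $g_i(A \cup \{u\}) + g_i(A \cup \{v\}) - g_i(A \cup \{u,v\}) - g_i(A)$ where $g_i(Z) = \min\{|Z|, |U_i| - |Z|\}$ as a function of $Z \subseteq U_i$ and $A = X \cap U_i$, while all other blocks contribute $0$. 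Choosing $X$ with $X \cap U_i = \emptyset$ (so $A = \emptyset$) gives contribution $g_i(\{u\}) + g_i(\{v\}) - g_i(\{u,v\}) - g_i(\emptyset) = 1 + 1 - \min\{2, |U_i|-2\} - 0$, which equals $2 - 0 = 2 > 0$ when $|U_i| = 2$, and $2 - 2 = 0$ when $|U_i| \geq 4$; so for larger blocks one instead picks $A$ near the ``boundary'' $|A| \approx |U_i|/2$ so that the local curvature of $g_i$ is strictly positive — concretely, take $A \subseteq U_i \setminus \{u,v\}$ with $|A| = \lceil |U_i|/2 \rceil - 1$, making $g_i$ strictly concave at that step. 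I would verify this gives a strictly positive value, completing the argument that $U_i$ is connected.

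The main obstacle will be the second step: handling arbitrary block sizes uniformly, since the truncation function $g_i(Z) = \min\{|Z|,|U_i|-|Z|\}$ is modular in the ``interior'' away from the threshold and only exhibits strict submodular curvature right at the threshold $|Z| \approx |U_i|/2$. Getting a single clean choice of $X$ (and a clean case split on the parity of $|U_i|$) that works for every pair $u,v$ in every block, and then assembling the per-block contributions correctly into $\check\Phi_f$, is the delicate part; the first step, by contrast, is a routine decoupling computation. Once both inclusions on the component structure of $G_f$ are established — components lie inside blocks (step one) and each block is a single component (step two) — the connected components of $G_f$ are exactly $U_1,\dots,U_k$, hence ${\cal U}(f) = \mathcal{U}$, and Theorem~\ref{thm:char2} yields Proposition~\ref{prop:ex-part2}.
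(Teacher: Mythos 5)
Your proposal is correct in outline, but it takes a genuinely different route from the paper. The paper proves the claim directly without invoking Theorem~\ref{thm:char2}: for the $(\Leftarrow)$ direction it observes that $W \mapsto W \symmdiff S$ permutes $\mathcal{U}$ when $S \in \mathcal{U}$, so $f \circp \sigma_S(X) = \min_{W' \in \mathcal{U}} |X \symmdiff W'| = f(X)$, i.e.\ $g=f$ and submodularity is automatic; for the $(\Rightarrow)$ direction it picks a block $U_i$ split by $S$, sets $X = S \symmdiff U_i$, and shows $\Phi_g(X,S) = 0 + 0 - f(U_i \setminus S) - f(S \cap U_i) < 0$, a single explicit violation. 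Your approach instead identifies the full inequality graph $G_f$ (showing no edges cross blocks, and each block is in fact a clique) and then applies Theorem~\ref{thm:char2}. Both are valid. What your route buys is a sharper structural fact — the exact edge set of $G_f$, hence that the blocks are precisely the inseparable components of $f$ — and a demonstration that the graph characterization is genuinely usable; what it costs is the parity casework in step two and the need to verify the decoupled closed form $f = \sum_i h_{U_i}$ (which the paper also proves, but only to establish Proposition~\ref{prop:ex-part1}). Your computation with $|A| = \lceil |U_i|/2 \rceil - 1$ does check out (giving $\check{\Phi}_f = 2$ for even $|U_i|$ and $1$ for odd $|U_i|$), and since $f$ is already known submodular, nonzero on a 2-face is the same as strictly positive, so the verification you deferred is routine. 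The paper's direct argument is shorter and avoids the graph machinery entirely, but both proofs establish the proposition.
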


See Appendix~\ref{apx:nontrivial} for the proofs of Propositions~\ref{prop:ex-part1} and~\ref{prop:ex-part2}.  

\subsubsection{A connected component of an inequality graph is not a clique, in general}\label{sec:connected-comp}
\begin{figure}[tbp]
\begin{center}
 \includegraphics[width=5cm,pagebox=cropbox,clip]{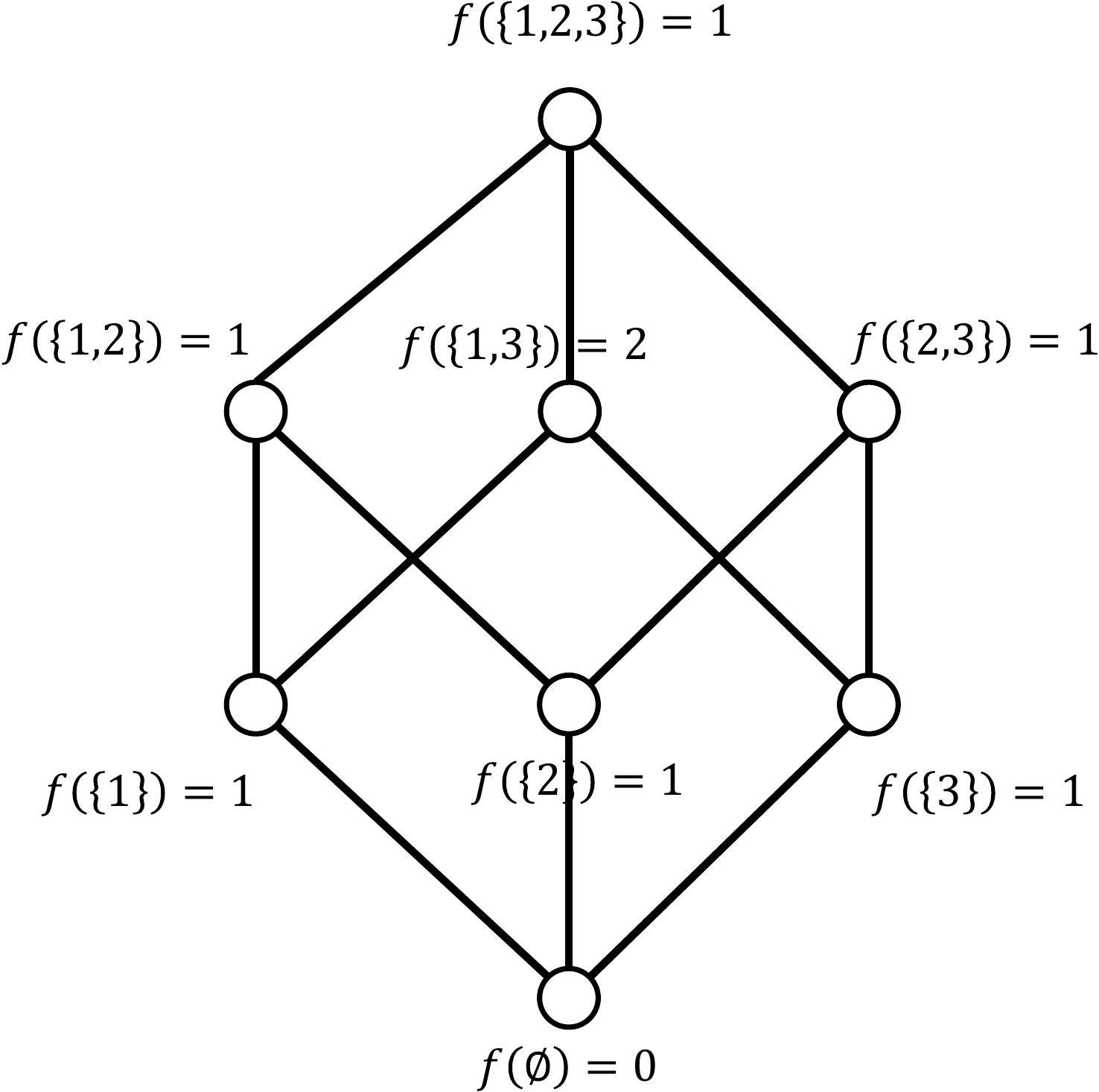}
 \hspace{5em}
 \includegraphics[width=3cm,pagebox=cropbox,clip]{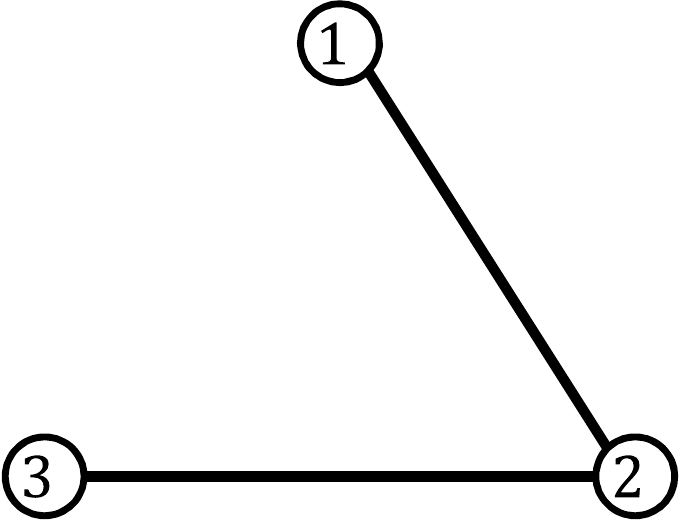}
\end{center}
 \caption{
  (Left) Submodular function $f$ given by \eqref{eq:not_clique} on $V=\{1,2,3\}$. 
  (Right) The inequality graph $G_f$. 
 }\label{fig:not_clique}
\end{figure}
 As for the inequality graph of a submodular function defined by \eqref{def:Ef}, 
  it would be natural to ask if $U_i$ is a clique, considering the transitivity of $=$. 
 However, it is not true in general. 
 Let $V=\{1,2,3\}$, and 
 let $f \colon 2^V \to \mathbb{R}$ be a set function given by 
\begin{eqnarray}
 f(X) = \begin{cases}
  0& \mbox{if $X=\emptyset$},\\
  2& \mbox{if $X=\{1,3\}$},\\
  1& \mbox{otherwise},
 \end{cases}
\label{eq:not_clique}
\end{eqnarray}
  for $X \in 2^V$ (see Figure~\ref{fig:not_clique} left). 
 We can check that $f$ is submodular by Theorem~\ref{prop:submo-2face}. 
 Then, its inequality graph is given by 
  $G_f = (V,\{\{1,2\},\{2,3\}\})$, 
 since 
   $\check{\Phi}_f(\emptyset, \{1,3\}) = \Phi_f(\{2\}, \{1,3\}) = 0$, 
   $\check{\Phi}_f(\emptyset, \{x,y\}) = 1 \neq 0$, 
   $\check{\Phi}_f(\emptyset, \{y,z\}) = 1 \neq 0$ (see Figure~\ref{fig:not_clique} right). 
 Clearly, the unique connected component of $G_f$ is not a clique.

\subsubsection{Connection to the inseparable decomposition}\label{sec:inseparable}
 In fact, 
  SD-transformations preserving the submodularity are closely related to  
  the {\em inseparable decomposition} of a submodular function. 
 Let $\rho \colon 2^V \to \mathbb{R}$ be 
   a submodular function satisfying $\rho(\emptyset)=0$.\footnote{
    For any submodular function $f$, 
     let $\rho(X) = f(X) - f(\emptyset)$ for $X \in 2^V$, 
     then $\rho$ is submodular satisfying $\rho(\emptyset)=0$.
     } 
 A nonempty subset $U \subseteq V$ is {\em separable} 
   if there exists $X \subset U$ ($X \neq \emptyset$) such that $\rho(U) = \rho(X)+\rho(U \setminus X)$
   holds; otherwise $U$ is {\em inseparable}~\cite{BCT85,Queyranne98,Bach,Kamiyama,Fujishige,Fujishige17}. 
\begin{theorem}[see e.g., \cite{Kamiyama,Fujishige17}]\label{thm:inseparable}
 For a submodular function $f \colon 2^V \to \mathbb{R}$, 
  $V$ is uniquely partitioned into inseparable subsets $U_1,\ldots,U_k$ with an appropriate $k$. 
 For this partition, 
\begin{eqnarray}
  \rho(X) = \sum_{i=1}^k \rho(X \cap U_i)
\label{eq:171111B1}
\end{eqnarray}
  holds for any $X \in 2^V$.   
 Moreover, this partition is constructible in polynomial time\footnote{
   It takes $\Order(n^2)$ time if we have a base of $f$ (see \cite{Fujishige17,Fujishige}).
 }. 
\end{theorem}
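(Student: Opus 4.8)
The plan is to realize the partition as the atoms of the lattice of \emph{separators}. Put
\[
 \mathcal{S} \defeq \left\{ A \subseteq V \;\middle|\; \rho(A) + \rho(V \setminus A) = \rho(V) \right\}.
\]
Submodularity together with $\rho(\emptyset)=0$ gives $\rho(Z) + \rho(V \setminus Z) \geq \rho(V)$ for every $Z$, so $\mathcal{S}$ is exactly the family of minimizers of $Z \mapsto \rho(Z)+\rho(V\setminus Z)$. I would first verify that $\mathcal{S}$ is a Boolean subalgebra of $2^V$: it contains $\emptyset$ and $V$ and is visibly closed under complementation, and for $A,B\in\mathcal{S}$, adding the two defining identities and uncrossing both $\rho(A)+\rho(B)\geq\rho(A\cup B)+\rho(A\cap B)$ and $\rho(V\setminus A)+\rho(V\setminus B)\geq\rho(V\setminus(A\cap B))+\rho(V\setminus(A\cup B))$, then applying $\rho(Z)+\rho(V\setminus Z)\geq\rho(V)$ with $Z=A\cup B$ and $Z=A\cap B$, forces every inequality to an equality, so $A\cup B,A\cap B\in\mathcal{S}$. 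Consequently the minimal nonempty members $U_1,\dots,U_k$ of $\mathcal{S}$ form a partition of $V$ and every member of $\mathcal{S}$ is a union of the $U_i$.

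The crux is a splitting lemma: \emph{if $A\in\mathcal{S}$ then $\rho(X)=\rho(X\cap A)+\rho(X\setminus A)$ for all $X\subseteq V$.} Here ``$\geq$'' is immediate from submodularity on the disjoint pair $X\cap A,\,X\setminus A$ (using $\rho(\emptyset)=0$); for ``$\leq$'' I would combine submodularity on the pair $(X,A)$ with submodularity on the pair $(X\cup A,\,V\setminus A)$ and then invoke the separator identity $\rho(V)-\rho(V\setminus A)=\rho(A)$ to cancel the extraneous terms. Granting the lemma, \eqref{eq:171111B1} follows by peeling atoms off one at a time: apply it with $A=U_1$, then with $A=U_2$ to the residual set $X\setminus U_1$, and so on; after $k$ steps the leftover term is $\rho(\emptyset)=0$.

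Inseparability of the $U_i$ and uniqueness of the partition then both reduce to the splitting lemma together with \eqref{eq:171111B1}. If some $U_i$ were separable, say $\rho(U_i)=\rho(Y)+\rho(U_i\setminus Y)$ with $\emptyset\neq Y\subsetneq U_i$, then expanding $\rho(Y)+\rho(V\setminus Y)$ by \eqref{eq:171111B1} and using $\rho(Y)+\rho(U_i\setminus Y)=\rho(U_i)$ would give $Y\in\mathcal{S}$, contradicting that $U_i$ is a minimal nonempty member. For uniqueness, if $W_1,\dots,W_m$ is any partition into inseparable sets satisfying \eqref{eq:171111B1}, then \eqref{eq:171111B1} forces every union of the $W_l$ into $\mathcal{S}$, while the splitting lemma forbids any $A\in\mathcal{S}$ from properly cutting a $W_l$; hence $\mathcal{S}$ coincides with the family of unions of the $W_l$, whose minimal nonempty members are the $W_l$ themselves, so $\{W_l\}=\{U_i\}$.

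For the algorithmic claim, note that $u$ and $v$ lie in distinct atoms exactly when some separator $A$ has $u\in A$ and $v\notin A$, i.e.\ when $\min\{\rho(A)+\rho(V\setminus A)\;:\;u\in A\subseteq V\setminus\{v\}\}=\rho(V)$. The objective $A\mapsto\rho(A)+\rho(V\setminus A)$ is submodular, being the sum of $\rho$ and of $\rho\circp\sigma_V$, the latter submodular by Proposition~\ref{prop:complement}; so each of the $\binom{|V|}{2}$ tests is a submodular minimization over an interval of $2^V$, hence runs in polynomial time, and ``lying in a common atom'' is routinely checked to be an equivalence relation, whose classes are the $U_i$. (The sharper $\Order(n^2)$ bound when a base of $f$ is available comes from reading the partition off the dependence digraph of the base, as in \cite{Fujishige17,Fujishige}.) I expect the splitting lemma — picking the two uncrossing inequalities so that the separator identity annihilates exactly the unwanted terms — to be the only genuinely delicate step; the remainder is bookkeeping with the Boolean subalgebra $\mathcal{S}$.
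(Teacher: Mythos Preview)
The paper treats this theorem as a known result from the literature and does not give a full proof; Appendix~\ref{apx:inseparable} supplies only the splitting lemma (Theorem~\ref{thm:Bach1o}): if $\rho(V)=\rho(U)+\rho(\overline U)$ then $\rho(X)=\rho(X\cap U)+\rho(X\cap\overline U)$ for all $X$. Your argument for this lemma is essentially the paper's --- the paper sums the three submodular inequalities \eqref{eq:171111b}--\eqref{eq:171111d} to obtain $\rho(X)\geq\rho(X\cap U)+\rho(X\cap\overline U)$, whereas you use only two (submodularity on $(X,A)$ and on $(X\cup A,\,V\setminus A)$), but the mechanism is identical. One cosmetic slip: your ``$\geq$'' and ``$\leq$'' labels are interchanged, since submodularity on the disjoint pair $X\cap A,\,X\setminus A$ yields $\rho(X\cap A)+\rho(X\setminus A)\geq\rho(X)$, which is the ``$\leq$'' direction of the identity $\rho(X)=\rho(X\cap A)+\rho(X\setminus A)$, while your two-inequality chain delivers the reverse bound. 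Both directions are present, so the lemma stands.

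Beyond the splitting lemma the paper simply cites \cite{Kamiyama,Fujishige17}, so your lattice-of-separators construction, the verification that the atoms are inseparable, the uniqueness argument, and the submodular-minimization algorithm all go further than what the paper itself proves. They are correct and constitute the standard route to the result; in particular your observation that $A\mapsto\rho(A)+\rho(V\setminus A)$ is submodular because it equals $\rho+\rho\circp\sigma_V$ (invoking Proposition~\ref{prop:complement} for the second summand) is a pleasant way to tie the algorithmic claim back into the paper's own framework.
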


 We can show that $U_1,\ldots,U_k$ form an inseparable decomposition of $\rho$ 
  if and only if each $U_i$ is the vertex set of a connected component of the inequality graph~$G_{\rho}$. 
 See Appendix~\ref{apx:inseparable} for more details. 

 Thus, the following theorem is an easy consequence of Theorems~\ref{thm:char2} and \ref{thm:inseparable}. 
\begin{theorem}\label{thm:poly}
 Given a subdmodular function $f \colon 2^V \to \mathbb{R}$ by its function value oracle, and 
  given $S \subseteq V$, 
  the question of $f \circp \sigma_S$ is subdmodular is decidable in polynomial time. 
\end{theorem}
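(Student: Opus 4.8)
The plan is to combine the graph-theoretic characterization of Theorem~\ref{thm:char2} with the connection to the inseparable decomposition, exploiting the fact that the latter is polynomial-time constructible by Theorem~\ref{thm:inseparable}. First I would reduce to a submodular function $\rho$ with $\rho(\emptyset)=0$ by setting $\rho(X)=f(X)-f(\emptyset)$; this costs two oracle calls and does not change $\Phi$, hence does not change the inequality graph, so $G_f=G_\rho$ and ${\cal U}(f)={\cal U}(\rho)$. By Theorem~\ref{thm:char2}, $f\circp\sigma_S$ is submodular if and only if $S\in{\cal U}(f)$, i.e., if and only if $S$ is a union of connected components of $G_f$; equivalently (Observation~\ref{obs:2}), $\chi_S$ is constant on each connected component. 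So the decision problem is: given $S$, is $\chi_S$ constant on every connected component of $G_f$?

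Second, I would invoke the claim announced just before the theorem (proved in Appendix~\ref{apx:inseparable}) that the connected components of $G_\rho$ are exactly the blocks $U_1,\dots,U_k$ of the inseparable decomposition of $\rho$. By Theorem~\ref{thm:inseparable} this partition is constructible in polynomial time (indeed in $\Order(n^2)$ time given a base of $f$), using only polynomially many oracle calls. Once the partition $\{U_1,\dots,U_k\}$ is in hand, deciding whether $S$ is a union of some of the $U_i$ is trivial: for each block $U_i$ check that $S\cap U_i$ is either $\emptyset$ or $U_i$, which takes $\Order(n)$ additional time and no further oracle calls. Outputting ``yes'' precisely when all blocks pass this test then answers the question, by Theorem~\ref{thm:char2} together with the identification of blocks with connected components.

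The only genuine content is the identification of the inseparable blocks with the connected components of the inequality graph, and the polynomial-time constructibility of the inseparable decomposition; both are supplied, respectively, by the appendix and by Theorem~\ref{thm:inseparable}, so the proof itself is a short assembly. The main obstacle — which is dispatched by citing Theorem~\ref{thm:inseparable} rather than reproved here — is that one must not compute $G_f$ directly: the footnote after \eqref{eq:redMf} warns that building $\overline{M}_f$ (equivalently, reading off $E_f$ from the definition \eqref{def:Ef}) may require inspecting almost all of ${\cal P}$, hence exponentially many oracle calls. The whole point is that the inseparable decomposition gives the same partition of $V$ without ever testing $\check\Phi_f$ on individual $2$-faces, which is what keeps the procedure polynomial.
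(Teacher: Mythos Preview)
Your proposal is correct and follows essentially the same approach as the paper: compute the inseparable decomposition in polynomial time (Theorem~\ref{thm:inseparable}), identify its blocks with the connected components of $G_f$ (Proposition~\ref{prop:separable}), and then check in linear time whether $S\in{\cal U}(f)$ via Theorem~\ref{thm:char2}. Your additional remarks---the normalization $\rho(X)=f(X)-f(\emptyset)$ and the explicit warning that one must not compute $G_f$ from its definition~\eqref{def:Ef}---are sound elaborations but do not alter the argument.
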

\begin{proof}
 The inseparable decomposition of $f$ can be found in polynomial time by Theorem~\ref{thm:inseparable}, 
  where the decomposition $U_1,\ldots,U_k$ corresponds to the connected components of the inequality graph $G_f$ 
  (see Proposition~\ref{prop:separable}). 
 By Theorem~\ref{thm:char2}, 
   $f \circp \sigma_S$ is subdmodular if and only if $S \in {\cal U}(f)$. 
 The latter condition is checkable in linear time. 
\end{proof}

 We emphasize that 
   Theorem~\ref{thm:poly} does not imply that 
   Problem~\ref{prob:CS}, which is to find {\em unknown} $S$, is solvable in polynomial time. 
 The next section is concerned with Problem~\ref{prob:CS}, 
  using the characterizations given in this section. 

\section{Finding A Canonical Set}\label{sec:CS}
\subsection{A characterization of canonical sets}\label{sec:charCS}
 We are now concerned with Problem~\ref{prob:CS}. 
 Section~\ref{sec:charCS} presents a characterization of   
   canonical sets of an SD-transformation of a submodular function. 
 For any set function $g \colon 2^V \to \mathbb{R}$, 
 let $\Vec{b}_{g} \in 2^{{\cal P}}$ be defined by  
\begin{equation}
  \Vec{b}_{g} [(Z,\{u,v\})] = \begin{cases}
    0 & \mbox{if } \check{\Phi}_g(Z,\{u,v\})\geq 0, \\
    1 & \mbox{otherwise}, 
  \end{cases}
\label{def:bg}
\end{equation}
 for any $(Z,\{u,v\}) \in {\cal P}$. 

\begin{theorem}\label{thm:probCS}
 Let $g \colon 2^V \to \mathbb{R}$ be an SD-transformation of a submodular function. 
 Then, 
  $h=g \circp \sigma_T$ for $T \subseteq V$ is submodular if and only if 
\begin{eqnarray*}
  M_g \Vec{\chi}_T \equiv \Vec{b}_{g} \pmod{2}
\end{eqnarray*}
holds where $\Vec{\chi}_T$ is the characteristic vector of $T$. 
\end{theorem}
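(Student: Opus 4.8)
The plan is to localize submodularity to the 2-faces and then read off one congruence per face, in the same spirit as Theorem~\ref{thm:char1}. First I would invoke Theorem~\ref{prop:submo-2face}: $h = g\circp\sigma_T$ is submodular if and only if $\check{\Phi}_h(p)\ge 0$ for every $p\in{\cal P}$. Since $\check{\sigma}_T$ is a bijection of ${\cal P}$ (Corollary~\ref{cor:sigmaF-bijective}), this is equivalent to $\check{\Phi}_{g\circp\sigma_T}\!\left(\check{\sigma}_T(q)\right)\ge 0$ for every $q\in{\cal P}$, which is the form I want to analyse face by face.

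The computational heart of the argument is the sign identity
\[
 \check{\Phi}_{g\circp\sigma_T}\!\left(\check{\sigma}_T(q)\right)
 \;=\; (-1)^{|T\cap\{u,v\}|}\,\check{\Phi}_g(q)
 \qquad\text{for every } q=(Z,\{u,v\})\in{\cal P}.
\]
This is exactly the arithmetic carried out inside the proof of Lemma~\ref{lem:key-lemma}, now applied to the set function $g$ and the set $T$: the three cases $|T\cap\{u,v\}|=0,1,2$ together with the identifications \eqref{case:0-1}--\eqref{case:2-4} of Proposition~\ref{prop:face} express $\check{\Phi}_{g\circp\sigma_T}(\check{\sigma}_T(q))$ as $+\check{\Phi}_g(q)$ in the two even cases and as $-\check{\Phi}_g(q)$ in the odd case. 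These are formal identities that never invoke submodularity of $g$, so the hypothesis that $g$ is an SD-transformation of a submodular function plays no role here; it only serves (elsewhere) to guarantee that the resulting Boolean system is solvable.

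With the identity available I would fix $q=(Z,\{u,v\})$ and compare the two sides of the claimed equivalence entrywise, splitting on the sign of $\check{\Phi}_g(q)$. If $\check{\Phi}_g(q)=0$, then the $q$-th row of $M_g$ is $\Vec{0}$ and $\Vec{b}_{g}[q]=0$ by \eqref{def:Mf} and \eqref{def:bg}, while $\check{\Phi}_{g\circp\sigma_T}(\check{\sigma}_T(q))=0$; both sides hold. If $\check{\Phi}_g(q)>0$, then $\Vec{b}_{g}[q]=0$ and the $q$-th row of $M_g$ is $\Vec{\chi}_{\{u,v\}}^\top$, so $(M_g\Vec{\chi}_T)[q]\equiv\chi_T(u)+\chi_T(v)=|T\cap\{u,v\}|\pmod 2$, and by the identity the left side holds iff $|T\cap\{u,v\}|$ is even, i.e.\ iff $(M_g\Vec{\chi}_T)[q]\equiv 0=\Vec{b}_{g}[q]$. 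If $\check{\Phi}_g(q)<0$, then $\Vec{b}_{g}[q]=1$ and again the $q$-th row of $M_g$ is $\Vec{\chi}_{\{u,v\}}^\top$, and by the identity the left side holds iff $|T\cap\{u,v\}|$ is odd, i.e.\ iff $(M_g\Vec{\chi}_T)[q]\equiv 1=\Vec{b}_{g}[q]$. In every case $\check{\Phi}_{g\circp\sigma_T}(\check{\sigma}_T(q))\ge 0 \iff (M_g\Vec{\chi}_T)[q]\equiv\Vec{b}_{g}[q]\pmod 2$, and taking the conjunction over all $q\in{\cal P}$, combined with the reduction of the first paragraph, yields the theorem.

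The only real care needed is bookkeeping: one must keep the reduction phrased at $\check{\sigma}_T(q)$ rather than at $q$ (this is what makes the bijectivity of $\check{\sigma}_T$ do its work), and one must recall that by \eqref{def:Mf} the $q$-th row of $M_g$ is nonzero exactly when $\check{\Phi}_g(q)\neq 0$, which is what makes the three sign-cases of $\check{\Phi}_g(q)$ line up with the three possibilities for the pair consisting of the $q$-th row of $M_g$ and the entry $\Vec{b}_{g}[q]$. A slightly longer alternative would be to write $h = g\circp\sigma_T = f\circp\sigma_{S\symmdiff T}$ for the underlying submodular $f$ and its canonical $S$, apply Theorem~\ref{thm:char1} to obtain $M_f\Vec{\chi}_{S\symmdiff T}\equiv\Vec{0}$, i.e.\ $M_f\Vec{\chi}_T\equiv M_f\Vec{\chi}_S\pmod 2$, and then transport both sides along the bijection $\check{\sigma}_S$ to identify $M_f\Vec{\chi}_T$ with $M_g\Vec{\chi}_T$ and $M_f\Vec{\chi}_S$ with $\Vec{b}_{g}$ (the latter using submodularity of $f$, so that $\check{\Phi}_f\neq 0$ means $\check{\Phi}_f>0$); I would keep the direct argument above as the main proof.
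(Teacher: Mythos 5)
Your proof is correct, and it takes a genuinely different and in fact stronger route than the paper's. The paper proves Theorem~\ref{thm:probCS} by decomposing $g = f\circp\sigma_S$, first establishing the unconditional congruence $M_g\Vec{\chi}_S \equiv \Vec{b}_g \pmod 2$ via Lemma~\ref{lem:char1} together with the row-permutation identity $M_g[\check{\sigma}_S(p),\cdot]=M_f[p,\cdot]$, and then combining this with the hypothesis $M_g\Vec{\chi}_T\equiv\Vec{b}_g$ to reduce to Theorem~\ref{thm:char1} applied to $f\circp\sigma_{S\symmdiff T}=g\circp\sigma_T$. You instead work entirely at the level of $g$: you isolate the formal sign identity $\check{\Phi}_{g\circp\sigma_T}\bigl(\check{\sigma}_T(q)\bigr)=(-1)^{|T\cap\{u,v\}|}\check{\Phi}_g(q)$, which is the arithmetic hidden inside the proofs of Lemmas~\ref{lem:key-lemma} and~\ref{lem:key-lemma2} but stripped of any submodularity hypothesis, and then compare with $M_g$ and $\Vec{b}_g$ entrywise by a three-way split on the sign of $\check{\Phi}_g(q)$. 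Because your argument never invokes that $g$ is an SD-transformation of a submodular function, it directly proves the stronger Theorem~\ref{thm:probCS-2}, which the paper states afterwards but only sketches as a remark. You correctly identified the paper's decomposition-based route as the ``slightly longer alternative'' at the end; your main argument is the cleaner one, and it is the proof the authors gesture at when they say Theorem~\ref{thm:probCS-2} follows ``by Theorem~\ref{thm:char1} (and Theorem~\ref{prop:submo-2face})\ldots in a similar way.''
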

\begin{proof}
 Suppose that $g$ is given by $g = f \circp \sigma_S$ for a submodular function $f$ and $S \subseteq V$. 
 Firstly, we claim that 
\begin{eqnarray}
  M_g \Vec{\chi}_S \equiv \Vec{b}_{g} \pmod{2}
\label{eq:20171019}
\end{eqnarray}
 holds. 
  By Lemma~\ref{lem:char1}, 
  $\check{\Phi}_g(\check{\sigma}_S(X,\{u,v\}))\geq 0$ 
   if and only if 
  $  M_f [(X,\{u,v\}),\cdot]  \Vec{\chi}_S\equiv 0 \pmod{2}$.  
 By the definition \eqref{def:Mf} of $M_f$, and 
 Lemmas~\ref{lem:key-lemma} and~\ref{lem:key-lemma2},
\begin{eqnarray}
  M_g[\check{\sigma}_S(X,\{u,v\}), \cdot] =   M_f[(X,\{u,v\}),\cdot]
\label{eq:20171019b}
\end{eqnarray}
  holds for any $(X,\{u,v\}) \in {\cal P}$. 
 Thus, 
  $\check{\Phi}_g(\check{\sigma}_S(X,\{u,v\}))\geq 0$ 
   if and only if 
  $    M_g[\check{\sigma}_S(X,\{u,v\}), \cdot] \Vec{\chi}_S \equiv 0 \pmod{2}$. 
 This implies~\eqref{eq:20171019}, 
  since $\check{\sigma}_S$ is bijective on ${\cal P}$ by Corollary~\ref{cor:sigmaF-bijective}. 

 Then, 
 \eqref{eq:20171019} and the hypothesis that  $ M_g \Vec{\chi}_T  = \Vec{b}_g$ imply that 
\begin{eqnarray}
  M_g \Vec{\chi}_S + M_g \Vec{\chi}_T \equiv \Vec{b}_g+\Vec{b}_g \equiv \Vec{0} \pmod{2} 
\label{eq:171105a}
\end{eqnarray}
 holds. Meanwhile, 
\begin{eqnarray}
  M_g \Vec{\chi}_T + M_g \Vec{\chi}_S = M_g (\Vec{\chi}_T + \Vec{\chi}_S) 
\label{eq:171105b}
\end{eqnarray}
 holds. 
 Notice that $\Vec{\chi}_{S \symmdiff T} \equiv \Vec{\chi}_S + \Vec{\chi}_T \pmod{2}$ holds. 
Thus, \eqref{eq:171105a} and \eqref{eq:171105b} imply that 
\begin{eqnarray}
 M_g \Vec{\chi}_{S \symmdiff T} \equiv \Vec{0} \pmod{2}
\label{eq:20171019c}
\end{eqnarray} 
holds. 
By \eqref{eq:20171019b}, 
 \eqref{eq:20171019c} also implies 
\begin{eqnarray}
 M_f \Vec{\chi}_{S \symmdiff T} \equiv \Vec{0} \pmod{2}
\label{eq:20171019d}
\end{eqnarray} 
 holds. 
 Then,  $f \circp \sigma_{S \symmdiff T}$ is submodular by Theorem~\ref{thm:char1}. 
 It is easy to observe that 
\begin{eqnarray*}
g \circp \sigma_T  
= f \circp \sigma_S \circp \sigma_T 
= f \circp \sigma_{S \symmdiff T}
\end{eqnarray*}
 holds, meaning that $T$ is a canonical set. We obtain the claim. 
\end{proof}

\subsection{Linear-time algorithm for strictly submodular function}\label{sec:strict-submo}
 Interestingly, Problem 1 is solvable 
      in linear time for strictly submodular function. 
 Precisely, it is described as follows. 
\begin{theorem}\label{thm:strict-submo}
  Problem~\ref{prob:CS} is solved in $2|V|\cdotp \EO+\Order(|V|)$ time 
  if the set function $f$ is {\em strictly} submodular, 
 where $\EO$ denotes the time complexity of an oracle call to know the value of $g(X)$ for a set $X \subseteq V$. 
\end{theorem}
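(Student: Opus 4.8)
The plan is to exploit the structure that strict submodularity forces on the inequality graph.The plan is to exploit the rigidity that strict submodularity imposes on the inequality graph. If $f$ is strictly submodular, then for every $2$-face $p=(X,\{u,v\})\in{\cal P}$ we have $\check{\Phi}_f(p)=\Phi_f(X\cup\{u\},X\cup\{v\})>0$, because $(X\cup\{u\})\setminus(X\cup\{v\})=\{u\}$ and $(X\cup\{v\})\setminus(X\cup\{u\})=\{v\}$ are both nonempty. Hence $E_f=\binom{V}{2}$, the graph $G_f$ is the complete graph on $V$, it is connected, and ${\cal U}(f)=\{\emptyset,V\}$. Writing $g=f\circp\sigma_S$ for the (unknown) set $S$, so that $S$ is itself a canonical set of $g$, the argument in the proof of Theorem~\ref{thm:probCS} together with Theorem~\ref{thm:char2} shows that $T$ is a canonical set of $g$ if and only if $S\symmdiff T\in{\cal U}(f)=\{\emptyset,V\}$, i.e.\ $T=S$ or $T=V\setminus S$. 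So it suffices to recover $S$ up to complementation.

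Next I would read $S$ off from the signs of $\check{\Phi}_g$ on small $2$-faces via Lemma~\ref{lem:key-lemma}. Since $\check{\sigma}_S$ is a bijection on ${\cal P}$ fixing the second coordinate $\{u,v\}$ (Corollary~\ref{cor:sigmaF-bijective}), for every $(Z,\{u,v\})\in{\cal P}$ there is a unique $X$ with $\check{\sigma}_S(X,\{u,v\})=(Z,\{u,v\})$, and Lemma~\ref{lem:key-lemma} combined with $\check{\Phi}_f(X,\{u,v\})>0$ gives $\check{\Phi}_g(Z,\{u,v\})=\check{\Phi}_f(X,\{u,v\})>0$ when $|S\cap\{u,v\}|$ is even and $\check{\Phi}_g(Z,\{u,v\})=-\check{\Phi}_f(X,\{u,v\})<0$ when $|S\cap\{u,v\}|$ is odd. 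In particular $\check{\Phi}_g(\emptyset,\{u,v\})=g(\{u\})+g(\{v\})-g(\{u,v\})-g(\emptyset)$ is never zero, and its sign tells us exactly whether $\chi_S(u)=\chi_S(v)$ or $\chi_S(u)\neq\chi_S(v)$.

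The algorithm is then: fix an ordering $v_1,\dots,v_n$ of $V$ ($n=|V|$), query the $2n$ values $g(\emptyset)$, $g(\{v_1\})$ and $g(\{v_i\}),g(\{v_1,v_i\})$ for $i=2,\dots,n$, and set $c_i\defeq g(\{v_1\})+g(\{v_i\})-g(\{v_1,v_i\})-g(\emptyset)$. Put $\chi_T(v_1)=0$, and for $i\geq2$ put $\chi_T(v_i)=0$ if $c_i>0$ and $\chi_T(v_i)=1$ if $c_i<0$ (no other case arises). Then $T$ matches $S$ on the parity of $|S\cap\{v_1,v_i\}|$ for every $i$, so $T=S$ or $T=V\setminus S$; in the first case $g\circp\sigma_T=f$, and in the second $g\circp\sigma_T=f\circp\sigma_V$, which is submodular by Proposition~\ref{prop:complement}. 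The procedure uses exactly $2|V|$ oracle calls plus $\Order(|V|)$ arithmetic, yielding the bound $2|V|\cdotp\EO+\Order(|V|)$.

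There is no deep obstacle here, but two points deserve care. The first is verifying that none of the $c_i$ vanishes: this is the only place the strictness hypothesis is used, through the chain ``$\check{\Phi}_f>0$ on every $2$-face, hence by Lemma~\ref{lem:key-lemma} $\check{\Phi}_g$ equals a nonzero value $\pm\check{\Phi}_f$ on every $2$-face''. The second is the bookkeeping needed to keep the number of oracle calls at $2|V|$ rather than roughly $4|V|$, namely reusing the values $g(\emptyset)$ and $g(\{v_1\})$ across all pairs $\{v_1,v_i\}$, i.e.\ probing a spanning star of $K_{|V|}$ instead of $|V|-1$ independent $2$-faces.
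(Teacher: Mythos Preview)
Your proof is correct and follows essentially the same route as the paper: you fix a pivot vertex, use the signs of $\check{\Phi}_g(\emptyset,\{u^*,v\})$ to determine the parity of $|S\cap\{u^*,v\}|$ for each $v$, and recover $T$ with $2|V|$ oracle calls. Your justification is slightly more explicit (you observe $G_f$ is the complete graph and invoke Lemma~\ref{lem:key-lemma} directly for the sign, whereas the paper just says $G_f$ is connected and routes the argument through the Boolean system of Theorem~\ref{thm:probCS}), but the algorithm and the underlying idea are identical.
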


\begin{proof}
 Since $f$ is strictly submodular, $G_f$ is connected. 
 In particular, 
  let  $u^* \in V$ be arbitrary. 
 Then $\check{\Phi}_g(\emptyset,\{u^*,v\}) \neq 0$ holds for any $v \in V \setminus \{u^*\}$. 
 Thus, we can obtain a canonical set $T \subseteq V$ 
  by solving the Boolean system 
   $M_g[(\emptyset,\{u^*,v\}),\cdot] \Vec{\chi}_T \equiv \Vec{b}_g[(\emptyset,\{u^*,v\})] \pmod{2}$ 
  for $v \in V \setminus \{u^*\}$. 
  (Recall \eqref{def:Mf} and \eqref{def:bg} for the definitions of $M_g$ and $\Vec{b}_g$.) 
 It is not difficult to see that the solution of the Boolean system is 
  a solution of Problem~\ref{prob:CS} 
 by Theorems~\ref{thm:probCS} and \ref{thm:char2}. 

 In fact, 
  the solution of the Boolean system is simply given as follows: 
 Set $T := \emptyset$ for initialization. 
 For each $v \in V \setminus \{u^*\}$, 
   set $T:=T \cup \{v\}$ if $\check{\Phi}_g(\emptyset,\{u^*,v\}) < 0$. 
 See Algorithm~\ref{alg:strict} for a formal description.
 It is not difficult to observe that 
   the obtained $T$ provides a solution of the Boolean system by Observation~\ref{obs:2}. 
 Computing $\check{\Phi}_g(\emptyset,\{u^*,v\})$ 
  requires the values of 
   $g(\emptyset)$, $g(\{u^*\})$, $g(\{v\})$ and $g(\{u^*,v\})$ 
  for $v \in V \setminus \{u^*\}$. 
 Now the time complexity is easy. 
\end{proof}

\begin{Algorithm}\label{alg:strict}{\normalfont \ 

\qquad  Given a function value oracle of $g \colon 2^V \to \mathbb{R}$. 

\qquad  Set $T := \emptyset$. Choose $u^* \in V$ arbitrarily. 

\qquad  Get the values $g(\emptyset)$ and $g(\{u^*\})$. 

\qquad  For each $v \in V \setminus \{u^*\}$,  

\qquad  \qquad Get the values $g(\{v\})$ and $g(\{u^*,v\})$. 
  
\qquad  \qquad If $\check{\Phi}_g(\emptyset,\{u^*,v\}) < 0$, then set $T:=T \cup \{v\}$. 

\qquad  Output $T$. 
}\end{Algorithm}

\subsection{Minimizer/Maximize is helpless for finding a canonical set}\label{sec:bad-example}
 Once we obtain a canonical set $T$ for an SD-transformation $g$ of a submodular function, 
  we can find the minimum value of $g$ using a submodular function
     minimization algorithm.
 However the opposite is not always true; 
  finding a canonical set is sometimes hard even if all minimizers of $g$ are given. 
\begin{proposition} \label{thm:min-CS}
 Problem~\ref{prob:CS} requires $2^{|V|} - 2$ function value oracle calls in the worst case,
        even if all minimizers of $g$ are given.
\end{proposition}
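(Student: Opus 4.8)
The plan is to establish the lower bound by an adversary argument, exhibiting a family of SD-transformations of submodular functions that pairwise differ at a single point (so that ruling out a member costs one dedicated query), all share the same set of minimizers (so that handing the algorithm all minimizers of $g$ gives it nothing), and yet have pairwise incompatible canonical sets. The bare ``dip'' function $g(U)=-1$, $g(X)=0$ otherwise, which suffices for the plain lower bound, is useless here because its unique minimizer $U$ is itself a canonical set; so I would shift it by a steep modular function. Concretely, for $|V|\ge 3$ fix $b(X)=2|X|$ (modular) and the submodular function $f_0$ with $f_0(\emptyset)=-1$, $f_0(X)=0$ for $X\ne\emptyset$ (submodularity is immediate from Theorem~\ref{prop:submo-2face}: $\check{\Phi}_{f_0}(X,\{u,v\})$ is $1$ when $X=\emptyset$ and $0$ otherwise). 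For every nonempty proper $U\subsetneq V$ set
\[
  g_U(X)\;=\;2|X|-[X=U].
\]

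Next I would verify three properties of $g_U$. \emph{(i) $g_U$ is an SD-transformation of a submodular function.} Writing $b_U=b\circp\sigma_U$, Proposition~\ref{prop:modular} makes $b_U$ modular, so $b_U+f_0$ is submodular, and since $\sigma_U\circp\sigma_U$ is the identity one checks $g_U=(b_U+f_0)\circp\sigma_U$. \emph{(ii) The canonical sets of $g_U$ are exactly $U$ and $V\setminus U$.} As $b_U$ is modular we have $\check{\Phi}_{b_U+f_0}=\check{\Phi}_{f_0}$ on every $2$-face, so the inequality graph of $b_U+f_0$ equals that of $f_0$, which is the complete graph on $V$ (every pair $\{u,v\}$ is witnessed by the $2$-face at $X=\emptyset$); this graph is connected, hence $\mathcal{U}(b_U+f_0)=\{\emptyset,V\}$. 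By Theorem~\ref{thm:char2}, $(b_U+f_0)\circp\sigma_W$ is submodular iff $W\in\{\emptyset,V\}$; since $g_U\circp\sigma_T=(b_U+f_0)\circp\sigma_{U\symmdiff T}$, this is submodular iff $T\in\{U,V\setminus U\}$. (In particular $g_U$ itself is not submodular, so Problem~\ref{prob:CS} is nontrivial on this family, and the complement-pair form of the answer is consistent with Proposition~\ref{prop:complement2}.) \emph{(iii) $g_U$ has the unique minimizer $\emptyset$, independently of $U$:} $g_U(\emptyset)=0$ while $g_U(X)\ge 2|X|-1\ge 1>0$ for $X\ne\emptyset$.

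Then comes the adversary. Against any algorithm it answers every value query $X$ with $2|X|$ and, for free, reports $\{\emptyset\}$ as the set of all minimizers; by (iii) these answers are consistent with the modular $b$ and with $g_U$ for every still-unqueried $U$. While fewer than $2^{|V|}-2$ values have been revealed, the adversary can keep two mutually incompatible instances alive: pick two unqueried nonempty proper subsets $U_1,U_2$ with $U_2\notin\{U_1,V\setminus U_1\}$ (possible for $|V|\ge 3$). Both $g_{U_1}$ and $g_{U_2}$ agree with all answers so far, but by (ii) their canonical-set families $\{U_1,V\setminus U_1\}$ and $\{U_2,V\setminus U_2\}$ are disjoint, so whatever $T$ the algorithm outputs is not a canonical set of at least one of them; the adversary declares that one to be the true input. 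Hence $2^{|V|}-2$ oracle calls are necessary in the worst case, even with all minimizers of $g$ given.

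The step I expect to be the main obstacle is reconciling (i) with the ``minimizers are useless'' requirement: each instance must be a genuine SD-transformation of a submodular function, yet the family must be arranged so the minimizers leak nothing. The resolving device is the steep modular perturbation $b$: by Proposition~\ref{prop:modular} it changes none of the $\check{\Phi}$ values, hence neither the SD-transformation structure nor the canonical-set family of the dip, but it pushes the dip strictly above the value at $\emptyset$, forcing $\emptyset$ to be the single, instance-independent minimizer. What remains is routine: the case analysis verifying (i)--(iii), and pinning down the precise constant in the adversary step, where one must account for the fact that $g_U$ and $g_{V\setminus U}$ carry the same canonical set, so that the surviving instances have to be made to straddle two distinct complement-pairs.
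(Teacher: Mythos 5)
Your construction and argument are essentially the paper's, with only cosmetic variations: the paper writes $g_U(X)=|X|-\tfrac12[X=U]$ and decomposes $g_U=r_U+d$ with a dip $r_U$ at $U$ plus the modular $d(X)=|X|$, checking that $r_U\circp\sigma_U$ is submodular and $d\circp\sigma_U$ is modular so that $U$ is a canonical set, then arguing $G_{g_U}=G_{r_U}$ is connected so only $U$ and $V\setminus U$ are canonical, and finishing with the same ``answer cardinalities'' adversary; you instead take $g_U(X)=2|X|-[X=U]$, factor it as $(b_U+f_0)\circp\sigma_U$ with the dip moved to $\emptyset$, and run the inequality-graph argument on the already-submodular $b_U+f_0$. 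These are the same proof seen from two angles, and both establish the same set of facts (SD-transformation, unique minimizer $\emptyset$, canonical sets $\{U,V\setminus U\}$).

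One substantive deviation to flag: you restrict to nonempty \emph{proper} $U\subsetneq V$, while the paper allows $U=V$. That exclusion actually shrinks the adversary's pool, and it interacts with the gap you yourself flag at the end. Your parenthetical ``(possible for $|V|\ge 3$)'' is not actually true as stated: with $\le 2^{|V|}-3$ queries there are $\ge 3$ unqueried sets, but those three could be $\{\emptyset,B,V\setminus B\}$ (or, under your restriction, $\{V,B,V\setminus B\}$), in which case the only unqueried nonempty proper subsets are a single complement pair and no $U_1,U_2$ of the kind you need exist. The paper's version of this step (label the three leftovers so $X$ is complementary to neither of the others, then set $U=Y$ or $U=X$) has an analogous soft spot when $X=\emptyset$, so you are in the same boat as the paper here; but since you call the claim ``possible'' outright rather than handling the degenerate configurations, it reads as a gap rather than an acknowledged loose end. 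If you keep $U=V$ in the family (as the paper does), the adversary has $V$ available as a fallback dip and the $\{V,B,V\setminus B\}$ case disappears; the $\{\emptyset,B,V\setminus B\}$ case still needs a sentence.

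Everything else — the verification that $g_U=(b_U+f_0)\circp\sigma_U$, that $\check\Phi_{b_U+f_0}=\check\Phi_{f_0}$ by modularity of $b_U$, that $G_{f_0}$ is complete hence connected, and that $g_U\circp\sigma_T$ is submodular iff $T\in\{U,V\setminus U\}$ — is correct and matches the paper's reasoning in substance.
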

\begin{proof}
 We give an instance of Problem~\ref{prob:CS}  with a unique minimizer, 
  for which any algorithm needs to call the function value oracle at least $2^{|V|} - 2$ times to solve Problem~\ref{prob:CS}.
 For any $U \subseteq V$ such that $U \neq \emptyset$, 
  let $g_U \colon 2^V \to \mathbb{R}$ be a set function defined by 
\begin{eqnarray}
  g_U (X) = \begin{cases}
    |X|-\dfrac{1}{2}  & (\mbox{if } X = U), \\
    |X| & (\mbox{otherwise}), 
  \end{cases}
\label{ex:min}
\end{eqnarray}
 for $X \in 2^V$. 
 Observe that $g_U(X) > 0$ for any $X \neq \emptyset$,
  meaning that $\emptyset$ is the unique minimizer of $g_U$ with the minimum value $g_U(\emptyset) = 0$. 
 We claim that exactly $U$ and $V \setminus U$ are the canonical sets of $g_U$. 
 Let 
\begin{eqnarray}
 r_U (X) &=& 
\begin{cases}
  -\dfrac{1}{2}  & (\mbox{if } X = U), \\
  0 & (\mbox{otherwise}) 
\end{cases} 
\end{eqnarray}
 for $X \in 2^V$, and let $d(X) \defeq |X|$ for $X \in 2^V$. 
 Then,  
\begin{eqnarray*}
g_U(X) = r_U(X) + d(X)
\end{eqnarray*}
 holds. 
 Clearly $r_U \circp \sigma_U$ is submodular. 
 Since $d$ is a modular function, $d \circp \sigma_U$ is again modular by Proposition~\ref{prop:modular}. 
 Notice that 
\begin{eqnarray*}
 g_U \circp \sigma_U = (r_U + d) \circp \sigma_U =   r_U \circp \sigma_U + d \circp \sigma_U
\end{eqnarray*}
 holds. 
 Since the sum of submodular functions is submodular~\cite{Fujishige}, 
 $g_U \circp \sigma_U$ is submodular, meaning that $U$ is a canonical set of $g$. 
 It is easy to observe $G_{r_U}$ is connected, 
   and hence $G_{g_U} = G_{f}$ is connected 
   since $d$ is modular. 
 By Theorem~\ref{thm:char2},  
  we see that only $U$ and $V \setminus U$ are canonical sets of~$g$. 

 To prove that 
   no algorithm finds a canonical set of $g_U$ with at most $2^{|V|} - 3$ function value oracle calls,  
  we show the existence of an adversarial oracle. 
 Suppose that an arbitrary algorithm calls the value oracle of $g_U$ $2^{|V|}-3$ times. 
 For the $2^{|V|}-3$ queries, our adversarial oracle answers their cardinalities. 
 Let $X,Y,Z \in 2^V$ be the remaining sets. 
 Without loss of generality, 
  we may assume that both $X \neq V \setminus Y$ and $X \neq V \setminus Z$ hold. 
 (Note that $Z= V\setminus Y$ may hold.) 
 Since only $U$ and $V \setminus U$ are the canonical sets of $g_U$, 
  both $X$ and $Y$ cannot be canonical sets at the same time. 
 This implies that the algorithm cannot determine $X$, $Y$ or $Z$; 
  if the algorithm answers $X$ then our oracle can set $Y=U$, meaning that $X$ is a wrong answer, and  
  if the algorithm answers $Y$ or $Z$ then our oracle can set $X=U$. 
\end{proof}

 In contrast to minimization, 
   maximization of a submodular function, e.g., max cut, is NP-hard. 
 Even if all maximizers are given, finding a canonical set is hard, too. 
 The SD-transformation $g_U$ given in the proof of Proposition~\ref{thm:min-CS} 
  also witnesses it. 
\begin{corollary} \label{thm:max-CS}
 Problem~\ref{prob:CS} requires $2^{|V|} - 2$ function value oracle calls in the worst case,
        even if all maximizer of $g$ are given.
\end{corollary}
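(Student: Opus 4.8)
The plan is to reuse the construction $g_U$ from the proof of Proposition~\ref{thm:min-CS} essentially verbatim, observing that the only properties of $g_U$ that were exploited there are (i) $g_U$ is an SD-transformation of a submodular function, (ii) its canonical sets are exactly $U$ and $V \setminus U$, and (iii) away from these two sets the oracle value is determined by cardinality alone. None of these facts depends on whether the extra information handed to the algorithm is ``all minimizers'' or ``all maximizers''; what matters for the adversary argument is only that revealing this auxiliary information does not let the algorithm distinguish $U$ from the bulk of $2^V$. So the first step is to identify the maximizers of $g_U$ and check that they carry no useful information about $U$.

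Concretely, I would first argue that $V$ is the unique maximizer of $g_U$: for every $X \neq V$ we have $g_U(X) \le |X| \le |V|-1 < |V| = g_U(V)$ (recall $g_U(U) = |U| - \tfrac12$, which only makes the value smaller), so the maximizer is $V$ regardless of the choice of $U \neq \emptyset$. Hence ``all maximizers of $g_U$ are given'' conveys nothing — it is the single set $V$ for every instance in the family. Then I would invoke, without change, the adversarial-oracle argument from the proof of Proposition~\ref{thm:min-CS}: the adversary answers each of the first $2^{|V|}-3$ queries by its cardinality, leaves three sets $X,Y,Z$ undetermined with $X \neq V \setminus Y$ and $X \neq V \setminus Z$, and notes that since only $U$ and $V \setminus U$ are canonical sets, whichever of $X,Y,Z$ the algorithm commits to can be falsified by choosing $U$ to be one of the other two. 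The only new sentence needed is the remark that handing over the maximizer $V$ in advance is consistent with every surviving choice of $U$, so it does not constrain the adversary.

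Since the excerpt already states the corollary with the pointer ``The SD-transformation $g_U$ given in the proof of Proposition~\ref{thm:min-CS} also witnesses it,'' the proof really is just these two observations: identify $\{V\}$ as the (constant, instance-independent) maximizer set, and then cite the proof of Proposition~\ref{thm:min-CS}. I do not expect any genuine obstacle here; the mild care needed is purely bookkeeping — making sure the maximizer computation accounts for the $-\tfrac12$ perturbation at $U$ and that the adversary's freedom to set $U$ is not curtailed by the released maximizer. A clean way to phrase the whole thing is:

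\begin{proof}
 Consider again the family $g_U$ defined by \eqref{ex:min} in the proof of Proposition~\ref{thm:min-CS}.
 For any $X \subsetneq V$ we have $g_U(X) \le |X| \le |V|-1 < |V| = g_U(V)$, so $V$ is the unique maximizer of $g_U$, irrespective of $U$.
 Therefore revealing all maximizers of $g_U$ amounts to revealing the single set $V$, which is the same for every member of the family and hence gives the algorithm no information about $U$.
 Consequently the adversarial-oracle argument in the proof of Proposition~\ref{thm:min-CS} applies verbatim: answering the first $2^{|V|}-3$ queries by their cardinalities leaves three sets $X,Y,Z$ undecided with $X \neq V\setminus Y$ and $X \neq V\setminus Z$, and since $U$ and $V \setminus U$ are the only canonical sets of $g_U$, whichever of these the algorithm outputs can be made incorrect by an appropriate choice of $U$ that is still consistent with all answers (including the released maximizer $V$).
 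Hence $2^{|V|}-2$ function value oracle calls are necessary in the worst case.
\end{proof}
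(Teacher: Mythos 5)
Your proof is correct and follows essentially the same route as the paper: identify $V$ as the unique maximizer of $g_U$ (for $U \neq V$), observe that this released set is instance-independent and hence uninformative about $U$, and then cite the adversary argument from Proposition~\ref{thm:min-CS} verbatim. The paper's own proof is terser but has exactly this content; your added bookkeeping (the explicit chain $g_U(X) \le |X| \le |V|-1 < |V|$ and the remark that the $-\tfrac12$ perturbation at $U$ can only lower a value) is a faithful expansion of it.
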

\begin{proof}
 Let $U \subseteq V$ satisfy $U \neq V$ and 
 let $g_U$ be a set function defined by \eqref{ex:min}. 
 Clearly, $V$ is the unique maximizer of $g$ with the maximum value $g(V)=|V|$. 
 Finding canonical set of $g$ requires $2^{|V|}-2$ oracle calls,
  by the same argument as Proposition~\ref{thm:min-CS}. 
\end{proof}

\subsection{A Remark on Theorem~\ref{thm:probCS}}
 In fact, 
  the hypothesis 
  ``Let $g \colon 2^V \to \mathbb{R}$ be an SD-transformation of a submodular function.''  
  in Theorem~\ref{thm:probCS-2} is redundant. 
 An enhanced theorem is described as follows. 
\begin{theorem}\label{thm:probCS-2}
 Let $g \colon 2^V \to \mathbb{R}$ be an {\em arbitrary set function}. 
 Then, 
  $h=g \circp \sigma_T$ for $T \subseteq V$ is submodular if and only if 
\begin{eqnarray*}
  M_g \Vec{\chi}_T \equiv \Vec{b}_{g} \pmod{2}
\end{eqnarray*}
holds where $\Vec{\chi}_T$ is the characteristic vector of $T$. 
\end{theorem}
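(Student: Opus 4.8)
The plan is to drop the hypothesis that $g$ is an SD-transformation of a submodular function and argue directly from the definitions of $M_g$ and $\Vec{b}_g$, together with the 2-face characterization of submodularity (Theorem~\ref{prop:submo-2face}) and the key lemmas (Lemmas~\ref{lem:key-lemma} and~\ref{lem:key-lemma2}). The point is that, although Lemma~\ref{lem:key-lemma} is stated for a submodular $f$, what it really uses on each individual 2-face is just the sign of $\check\Phi$ on that face, and this sign information is exactly what $M_g$ and $\Vec{b}_g$ encode. So the strategy is to establish a per-2-face statement analogous to Lemma~\ref{lem:char1}, but now tracking whether $\check\Phi_{g}$ is nonnegative or negative on the starting face rather than assuming it is nonnegative, and then invoke bijectivity of $\check\sigma_T$ (Corollary~\ref{cor:sigmaF-bijective}) to globalise.

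First I would fix $T \subseteq V$ and a 2-face $p = (Z,\{u,v\}) \in {\cal P}$, and write $q = \check\sigma_T^{-1}(p)$, so that $p = \check\sigma_T(q)$. The goal is to show that $\check\Phi_{g \circ \sigma_T}(p) \geq 0$ if and only if the $q$-th row of the congruence holds, i.e. $M_g[q,\cdot]\Vec{\chi}_T \equiv \Vec{b}_g[q] \pmod 2$. I would split into the two cases $\check\Phi_g(q) = 0$ and $\check\Phi_g(q) \neq 0$. When $\check\Phi_g(q) = 0$: then $M_g[q,\cdot] = \Vec{0}$ and $\Vec{b}_g[q] = 0$, so the row congruence holds trivially; and by the sign-tracking analogue of Lemma~\ref{lem:key-lemma2} (which holds verbatim since that lemma only uses $\check\Phi = 0$ on the one face), $\check\Phi_{g\circ\sigma_T}(p) = \check\Phi_{g\circ\sigma_T}(\check\sigma_T(q)) = 0 \geq 0$, so both sides are true. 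When $\check\Phi_g(q) \neq 0$: then $M_g[q,\cdot]\Vec{\chi}_T = |\{u',v'\}\cap T| \bmod 2$ where $\{u',v'\}$ is the element pair of $q$ (note the pair is the same for $q$ and $p$ by the definition of $\check\sigma_T$), and by the computation inside the proof of Lemma~\ref{lem:key-lemma} — which, reading the three displayed case-computations, shows $\check\Phi_{g\circ\sigma_T}(\check\sigma_T(q)) = \check\Phi_g(q)$ when $|\{u',v'\}\cap T|$ is even and $= -\check\Phi_g(q)$ when it is odd, regardless of the sign of $\check\Phi_g(q)$ — the transformed value is nonnegative iff either ($\check\Phi_g(q) > 0$ and the intersection is even) or ($\check\Phi_g(q) < 0$ and the intersection is odd), which is precisely $|\{u',v'\}\cap T| \equiv \Vec{b}_g[q] \pmod 2$.

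Then I would globalise: by Theorem~\ref{prop:submo-2face}, $h = g\circ\sigma_T$ is submodular iff $\check\Phi_h(p) \geq 0$ for all $p \in {\cal P}$; since $\check\sigma_T$ is a bijection on ${\cal P}$ (Corollary~\ref{cor:sigmaF-bijective}), this is equivalent to $\check\Phi_h(\check\sigma_T(q)) \geq 0$ for all $q \in {\cal P}$, which by the per-face equivalence above is equivalent to $M_g[q,\cdot]\Vec{\chi}_T \equiv \Vec{b}_g[q] \pmod 2$ for all $q$, i.e. $M_g\Vec{\chi}_T \equiv \Vec{b}_g \pmod 2$.

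The main obstacle — really the only subtle point — is that Lemma~\ref{lem:key-lemma} as stated assumes $\check\Phi_f(p) > 0$, so I must check that its internal computations (the three case-by-case displays with $|S\cap\{u,v\}| = 0, 1, 2$) actually establish the stronger bookkeeping identity $\check\Phi_{f\circ\sigma_S}(\check\sigma_S(p)) = \pm\check\Phi_f(p)$ with the sign governed by the parity of $|S\cap\{u,v\}|$, with no positivity used until the very last step. A quick inspection of those displays confirms this: the substitutions from Proposition~\ref{prop:face} are purely set-theoretic and the positivity hypothesis is only invoked to conclude ``$>0$'' or ``$<0$'' at the end. So it suffices to re-read Lemma~\ref{lem:key-lemma}'s proof as proving this signed identity; no genuinely new argument is needed, only a slightly more careful phrasing. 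The rest is routine, and indeed Theorem~\ref{thm:probCS} becomes the special case where $g$ is already known to arise as $f\circ\sigma_S$ (in which case $\Vec{b}_g$ has an additional interpretation via \eqref{eq:20171019}), but that extra structure plays no role in the proof above.
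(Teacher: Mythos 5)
Your proof is correct, and its core observation — that the three case-by-case displays inside the proof of Lemma~\ref{lem:key-lemma} really establish the sign-free identity $\check{\Phi}_{g\circ\sigma_T}(\check{\sigma}_T(q)) = (-1)^{|\{u,v\}\cap T|}\,\check{\Phi}_g(q)$ for an \emph{arbitrary} set function $g$, with the submodularity hypothesis invoked only at the final ``$>0$''/``$<0$'' step — is exactly the point on which a full proof of Theorem~\ref{thm:probCS-2} must turn, and you articulate it cleanly. The paper omits the proof and suggests handling the two directions separately: the ``only-if'' direction by applying Theorem~\ref{thm:probCS} to $g = h\circ\sigma_T$ (legitimate, since if $h$ is submodular then $g$ is by definition an SD-transformation of a submodular function), and the ``if'' direction by a per-face argument parallel to the proof of Theorem~\ref{thm:probCS}. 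Your version instead proves a single per-face biconditional, $\check{\Phi}_{g\circ\sigma_T}(\check{\sigma}_T(q))\geq 0 \iff M_g[q,\cdot]\Vec{\chi}_T\equiv\Vec{b}_g[q]\pmod 2$, and globalises once via Corollary~\ref{cor:sigmaF-bijective} and Theorem~\ref{prop:submo-2face}, so both directions of the theorem drop out simultaneously without any appeal to Theorem~\ref{thm:probCS} as a black box. This is a modest but genuine streamlining: it makes Theorem~\ref{thm:probCS} a strict corollary of Theorem~\ref{thm:probCS-2} rather than an ingredient of it, and it makes fully explicit the sign-tracking step that the paper only alludes to. No gaps.
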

 Theorem~\ref{thm:probCS-2} states that 
   if $M_g \Vec{\chi} \equiv \Vec{b}_{g} \pmod{2}$ does not have a solution then 
   any SD-transformation of $g$ is NOT submodular. 
 The ``only-if'' part is immediate from Theorem~\ref{thm:probCS}. 
 The ``if'' part is also not difficult by Theorem~\ref{thm:char1} (and Theorem~\ref{prop:submo-2face}), 
  and obtained in a similar way as the proof of Theorem~\ref{thm:probCS}. 
 We here omit the detailed proof. 

\section{Concluding Remark}\label{sec:concl}
 This paper has been concerned with SD-transformations of submodular functions. 
 We gave characterizations of SD-transformations 
    preserving the submodularity in Section~\ref{sec:main}. 
 We also showed that canonical sets are found in linear time 
   for SD-transformations of a {\em strictly} submodular functions in Section~\ref{sec:CS}. 
 It is a natural question 
   whether there is another interesting class of submodular functions for which a canonical set is found efficiently. 
 A related question is 
   whether there is a nontrivial class of transformations (maps) preserving submodularity.

 We remark that it is not difficult 
   to extend the results to submodular functions on {\em distributive lattices}, 
   instead of Boolean lattices. 
 Extensions to submodular functions on a general lattice, i.e., containing $M_3$ or $N_5$, 
  $L$-convex functions and $M$-convex functions~\cite{Murota} on integer lattice,  
  or $k$-submodular functions are interesting problems. 

\section*{Acknowledgments}
The authors would like to thank 
  Naoyuki Kamiyama for the note on the inseparable decomposition,  
  and 
  an anonymous reviewer for the advice to mention to Theorem~\ref{thm:probCS-2}. 
The authors are also grateful to 
  Satoru Fujishige, Yusuke Kobayashi and anonymous reviewers 
 for their valuable comments. 
This work is partly supported by JST PRESTO Grant Number JPMJPR16E4, Japan. 

\bibliographystyle{siamplain}
\bibliography{references}

\begin{thebibliography}{99}
\bibitem{Bach}
 F.\ Bach, 
 Learning with Submodular Functions: A Convex Optimization Perspective, 
 Foundations and Trends in Machine Learning, 
 6:2--3, 145--373  (2013). 

\bibitem{BCT85}
 R.\ E.\ Bixby, W.\ H.\ Cunningham and D.\ M.\ Topkis, 
 The partial order of a polymatroid extreme point, 
 Mathematics of Operations Research, 10:3, 367--378 (1985).  

\bibitem{Coxeter}
 H.\ S.\ M.\ Coxeter, 
 Regular Polytopes, Dover, 1973.  

\bibitem{FMV07}
 U.~Feige, V.~Mirrokni and J.~Vondr\'{a}k, 
 Maximizing non-monotone submodular functions, 
 SIAM Journal on Computing, 40, 1133--1153, 2011. 

\bibitem{Fujishige}
 S.\ Fujishige, 
 Submodular Functions and Optimization, Second Edition, 
 Elsevier, 2005. 

\bibitem{Fujishige17}
 S.\ Fujishige, 
 A note on submodular function minimization by Chubanov's LP algorithm, 
 Optimization online, 6217, 2017.

\bibitem{GILKB15}
 J.\ A.\ Gillenwater, R.\ K.\ Iyer, B.\ Lusch, R.\ Kidambi and J.\ A.\ Bilmes, 
 Submodular Hamming metrics, 
 Proc.\ the 28th International Conference on Neural Information Processing Systems (NIPS 2015), 
 3141--3149. 

\bibitem{GHIM09}
 M.\ X.\ Goemans, N.\ J.\ A.\ Harvey, S.\ Iwata and V.\ S.\ Mirrokni, 
 Approximating submodular functions everywhere, 
 Proc.\ the 20th Annual ACM-SIAM Symposium on Discrete Algorithms (SODA 2009), 
 535--544. 

\bibitem{IFF01}
 S.~Iwata, L. Fleischer and S. Fujishige, 
 A combinatorial, strongly polynomial-time algorithm for minimizing submodular functions, 
 Journal of the ACM, 48, 761--777, 2001. 

\bibitem{IO09}
 S. Iwata and J. Orlin, 
 A simple combinatorial algorithm for submodular function minimization, 
 Proc.\ the 20th Annual ACM-SIAM Symposium on Discrete Algorithms (SODA 2009), 
 1230--1237. 

\bibitem{Kamiyama}
 N.\ Kamiyama, 
 A note on submodular function minimization with covering type linear constraints, 
 Algorithmica, 80, 2957--2971, 2018. 
 
\bibitem{LSC15}
 Y.T.~Lee, A.~Sidford and S.C.~Wong, 
 A faster cutting plane method and its implications for combinatorial and convex optimization, 
 Proc.\ the 56th Annual Symposium on Foundations of Computer Science (FOCS 2015), 
 1049--1065

\bibitem{Lovasz83}
 L.~Lov\'{a}sz, 
 Submodular functions and convexity, 
 in A.~Bachem, B.~Korte B., M.~Gr\"{o}tschel (eds), 
  Mathematical Programming: The State of the Art, 
 235--257, 1983. 

\bibitem{Murota}
 K.\ Murota, 
 Discrete Convex Analysis, 
 SIAM, 2003. 

\bibitem{NWF78}
 G.\ L.\ Nemhauser, L.\ A.\ Wolsey and M.\ L.\ Fisher, 
 An analysis of approximations for maximizing submodular set functions I, 
 Mathematical Programming, 14 (1978), 265--294. 

\bibitem{Queyranne98}
 M.\ Queyranne,
 Minimizing symmetric submodular functions, 
 Mathematical  Programming, 82, 3--12 (1998). 

\bibitem{Rockafellar70}
 R.\ T.\ Rockafellar, 
 Convex analysis, 
 Princeton University Press, 1970.


\bibitem{Schrijver00}
 A. Schrijver, A combinatorial algorithm minimizing submodular functions in strongly polynomial time, 
 Journal of Combinatorial Theory, Series B, 80, 346--355 (2000). 

\bibitem{Schrijver}
 A.\ Schrijver, 
 Combinatorial Optimization, 
 Springer, 2003. 
\end{thebibliography}

\appendix
\section{Supplemental Proofs in Section~\ref{sec:nontrivial}}\label{apx:nontrivial}
 This section proves Propositions~\ref{prop:ex-part1} and \ref{prop:ex-part2}.
 The set function which we are concerned with here is given by 
\begin{eqnarray*}
  f(X) = \min_{W \in {\cal U}}  |X \symmdiff W| 
 \hspace{3em}\mbox{(recall \eqref{ex:part})}  
\end{eqnarray*}
 for $X \in 2^V$, where 
  ${\cal U} = \left\{ \bigcup_{i \in I} U_i \;\middle|\; I \subseteq \{1,2,\dots,k\} \right\}$ 
  for a partition $U_1,\ldots,U_k$ of $V$. 

\subsection{Proof of Proposition~\ref{prop:ex-part1}}
\begin{proposition}[Proposition~\ref{prop:ex-part1}]
 The set function $f$ given by \eqref{ex:part} is submodular. 
\end{proposition}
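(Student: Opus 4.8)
The plan is to verify submodularity of $f(X) = \min_{W \in \mathcal{U}} |X \symmdiff W|$ directly on 2-faces, using Theorem~\ref{prop:submo-2face}. That is, I would fix an arbitrary 2-face $(X, \{u,v\}) \in \mathcal{P}$ and show $\check{\Phi}_f(X,\{u,v\}) = f(X \cup \{u\}) + f(X \cup \{v\}) - f(X \cup \{u,v\}) - f(X) \geq 0$. Since $\{u,v\} \subseteq V$ and the $U_i$ partition $V$, exactly one of two situations occurs: either $u$ and $v$ lie in the same block $U_i$, or they lie in distinct blocks.

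The key structural observation to establish first is a ``coordinate-wise'' formula: for any $X$, a nearest $W \in \mathcal{U}$ is obtained block by block, namely $W$ should contain $U_i$ if $|X \cap U_i| > |U_i|/2$ and exclude $U_i$ if $|X \cap U_i| < |U_i|/2$ (ties broken arbitrarily). Hence $f(X) = \sum_{i=1}^k \min\{\,|X \cap U_i|,\ |U_i \setminus X|\,\}$. In other words, $f$ decomposes as a sum over blocks of the one-block functions $f_i(X) = \min\{|X \cap U_i|, |U_i| - |X \cap U_i|\}$. Because a sum of submodular functions is submodular, it then suffices to show each $f_i$ is submodular, and moreover on a 2-face $(X,\{u,v\})$ only the block(s) containing $u$ or $v$ contribute to $\check{\Phi}_f$ (the other blocks contribute $0$ since $f_j(X \cup \{u\}) = f_j(X)$ etc. when $U_j \cap \{u,v\} = \emptyset$). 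So the whole problem reduces to: (a) if $u,v$ in the same block $U_i$, check $\check{\Phi}_{f_i}(X,\{u,v\}) \geq 0$; (b) if $u \in U_i$, $v \in U_j$ with $i \neq j$, then $\check{\Phi}_f(X,\{u,v\}) = \check{\Phi}_{f_i}$-type increment in $u$ plus $\check{\Phi}_{f_j}$-type increment in $v$, which splits as a product of two differences and is handled by monotonicity/concavity of $t \mapsto \min\{t, m-t\}$.

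Concretely, write $m = |U_i|$ and $t = |X \cap U_i|$ (for the block containing the relevant element(s)); then $f_i$ restricted to subsets of $U_i$ depends only on $t$ through the concave function $\phi_m(t) = \min\{t, m-t\}$. Adding an element of $U_i \setminus X$ increments $t$ by $1$. In case (a) with both $u,v \notin X$, $\check{\Phi}_{f_i}(X,\{u,v\}) = 2\phi_m(t+1) - \phi_m(t+2) - \phi_m(t) \geq 0$ by concavity of $\phi_m$ (its discrete second difference is $\leq 0$; one checks $\phi_m$ is ``unimodal'' with steps $+1,\ldots,+1,(0 \text{ or } {-1}),-1,\ldots,-1$). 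In case (b), setting $t_i = |X \cap U_i|$, $t_j = |X \cap U_j|$, one gets $\check{\Phi}_f(X,\{u,v\}) = \big(\phi_m(t_i+1) - \phi_m(t_i)\big)\cdot 0 + \ldots$ — more precisely the cross terms cancel and it equals $\big(\phi_{m_i}(t_i+1) - \phi_{m_i}(t_i)\big) + \big(\phi_{m_j}(t_j+1) - \phi_{m_j}(t_j)\big) - \big[(\phi_{m_i}(t_i+1) - \phi_{m_i}(t_i)) + (\phi_{m_j}(t_j+1) - \phi_{m_j}(t_j))\big] = 0$, so the inequality holds with equality; I would double-check this cancellation carefully since it is the one place sign bookkeeping can go wrong. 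The main obstacle is thus not conceptual but is the clean justification of the block-decomposition formula for $f$ and the concavity (discrete unimodality) of $\phi_m$; once those are in hand, the 2-face check is a short finite case analysis. An alternative, fully self-contained route avoiding the decomposition lemma is to invoke the known fact that $|X \symmdiff W|$ is modular in $X$ for fixed $W$ and that a pointwise minimum of modular functions over a distributive sublattice is submodular — but I would prefer the explicit block formula since it is elementary and also sets up Proposition~\ref{prop:ex-part2} (the inequality graph $G_f$ is exactly the disjoint union of cliques on the $U_i$, except that within-block pairs genuinely have some $\check{\Phi}_f \neq 0$).
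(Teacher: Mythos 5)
Your proposal is correct and rests on the same main structural step as the paper's proof: both derive the block decomposition $f(X) = \sum_{i=1}^k \min\{|X \cap U_i|,\, |U_i \setminus X|\}$ and then invoke ``sum of submodular is submodular.'' Where you diverge is in the proof of the one-block sublemma (Lemma~\ref{lem:partMinSubmo} in the paper). The paper shows $h_U(X) = \min\{|X \cap U|, |U \setminus X|\}$ is submodular by taking arbitrary $X, Y$ and splitting into three cases according to whether $|X \cap U|$ and $|Y \cap U|$ exceed $|U|/2$, each case being resolved by modularity of the cardinality function. You instead reduce to 2-faces via Theorem~\ref{prop:submo-2face} and observe that $h_U$ depends only on $t = |X \cap U|$ through the discretely concave scalar function $\phi_m(t) = \min\{t, m-t\}$, so the 2-face inequality $2\phi_m(t+1) \geq \phi_m(t) + \phi_m(t+2)$ is immediate from concavity. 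Your route is shorter and arguably more transparent, since it replaces the case analysis with one line of discrete concavity. One minor remark: your ``case (b)'' computation (distinct blocks yields $\check{\Phi}_f = 0$) is redundant once the decomposition $f = \sum_i h_{U_i}$ is in hand and each $h_{U_i}$ is known to be submodular; the sum argument already handles all 2-faces uniformly, so you need not split into cases (a)/(b) at all. The cancellation you verify there is, however, exactly the observation that makes $G_f$ block-diagonal, so it is worth retaining for the follow-up Proposition~\ref{prop:ex-part2}, as you anticipate.
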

\begin{proof}
 Since $U_1,\ldots,U_k$ is a partition of $V$, 
\begin{align}
 |X \symmdiff W| 
 &= \sum_{i=1}^m |(X \symmdiff W) \cap U_i| \nonumber\\
 &= \sum_{i=1}^m |(X \cap U_i) \symmdiff (W \cap U_i)|
\label{eq:171107a}
\end{align}
  holds for any $X \in 2^V$ and $W \in {\cal U}$. 
 Notice that 
\begin{align}
 |(X \cap U_i) \symmdiff (W \cap U_i)|
 &= \begin{cases}
 |U_i \setminus X| & \mbox{if $U_i \subseteq W$,} \\
 |X \cap U_i| &\mbox{otherwise, (i.e., $U_i \cap W = \emptyset$ since $W \in {\cal U}$,)}
 \end{cases}
\label{eq:171107b}
\end{align}
 hold. 
 Then, 
\begin{align*}
  f (X) 
  &= \min_{W \in {\cal U}} |X \symmdiff W| \\
  &= \min_{W \in {\cal U}} \sum_{i=1}^m |(X \cap U_i) \symmdiff (W \cap U_i)| &&(\mbox{by \eqref{eq:171107a}}) \\
  &= \sum_{i=1}^m \min \{ |U_i \setminus X|,|X \cap U_i|\} &&(\mbox{by \eqref{eq:171107b}})
\end{align*}
 holds for any $X \in 2^V$. 
 For convenience, we define $h_U \colon 2^V \to \mathbb{R}$ for $U \subseteq V$ by 
\begin{eqnarray}
 h_U(X) \defeq \min\{ |X \cap U|, |U \setminus X| \}
\label{def:phi}
\end{eqnarray}
 for $X \in 2^V$. 
 Then, 
  $f(X) = \sum_{i=1}^m h_{U_i}(X)$ 
 holds for any $X \in 2^V$.  
 We will prove that $h_U(X)$ is submodular in the following Lemma~\ref{lem:partMinSubmo}. 
 Since the sum of submodular functions is again submodular (see e.g.,~\cite{Fujishige}), 
 and we obtain the claim. 
\end{proof}

\begin{lemma} \label{lem:partMinSubmo}
 The set function $h_U$ defined by \eqref{def:phi} is submodular. 
\end{lemma}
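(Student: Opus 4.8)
The plan is to prove that $h_U(X) = \min\{|X \cap U|, |U \setminus X|\}$ is submodular by appealing to the 2-face criterion (Theorem~\ref{prop:submo-2face}): it suffices to verify $\check{\Phi}_{h_U}(Z,\{a,b\}) \geq 0$ for every $(Z,\{a,b\}) \in {\cal P}$. First I would observe that $h_U$ depends only on the intersection of its argument with $U$, i.e.\ $h_U(X) = h_U(X \cap U)$, so we may immediately discard from $Z$ everything outside $U$ and reduce to the case $Z \subseteq U$; moreover if either $a \notin U$ or $b \notin U$, then adding that element to a set does not change the $h_U$-value, and a short check shows $\check{\Phi}_{h_U}$ vanishes on such a 2-face. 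Hence the only interesting case is $Z \subseteq U$ and $a,b \in U$.

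In that remaining case, write $n = |U|$ and let $t = |Z \cap U| = |Z|$ be the number of elements of $U$ already in $Z$; the four corners of the 2-face have $|U\cap\,\cdot\,| $ equal to $t$, $t+1$, $t+1$, $t+2$. Then, abbreviating $m(j) := \min\{j, n-j\}$, the quantity $\check{\Phi}_{h_U}(Z,\{a,b\})$ equals $m(t+1) + m(t+1) - m(t+2) - m(t) = 2m(t+1) - m(t+2) - m(t)$. So the whole lemma reduces to showing that the single-variable function $j \mapsto m(j) = \min\{j, n-j\}$ on $\{0,1,\dots,n\}$ is concave, i.e.\ has nonincreasing differences: $m(j+1) - m(j) \geq m(j+2) - m(j+1)$. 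This is elementary — $m$ increases with slope $+1$ up to $j = \lfloor n/2 \rfloor$ and then decreases with slope $-1$ (with a flat step if $n$ is odd), so its successive differences lie in $\{+1, 0, -1\}$ and are monotonically nonincreasing; in every case $2m(t+1) - m(t+2) - m(t) \geq 0$.

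Assembling the pieces: for a general 2-face $(Z,\{a,b\})$, either at least one of $a,b$ lies outside $U$ and then $\check{\Phi}_{h_U}(Z,\{a,b\}) = 0$, or both lie in $U$ and then $\check{\Phi}_{h_U}(Z,\{a,b\}) = 2m(|Z\cap U|+1) - m(|Z\cap U|+2) - m(|Z\cap U|) \geq 0$ by the concavity of $m$. In all cases $\check{\Phi}_{h_U} \geq 0$, so by Theorem~\ref{prop:submo-2face} the function $h_U$ is submodular, which is the claim. I do not anticipate a genuine obstacle here; the only thing that requires a little care is the bookkeeping in the case split (elements inside versus outside $U$, and verifying that $\check{\Phi}_{h_U}$ really does vanish when an element outside $U$ is toggled), but this is routine once one notes the reduction $h_U(X) = h_U(X\cap U)$. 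The conceptual content is simply that "min of a linear function and its complement," restricted to a single block, is a concave (tent-shaped) function of the cardinality, and cardinality-based concave functions on a Boolean lattice are submodular.
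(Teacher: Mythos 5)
Your proof is correct, and it takes a genuinely different route from the paper's. The paper verifies the submodular inequality $\Phi_{h_U}(X,Y)\geq 0$ directly for arbitrary pairs $X,Y$, splitting into three cases according to whether $|X\cap U|$ and $|Y\cap U|$ are at most or exceed $|U|/2$, and then leaning on the modularity of the cardinality function $|\cdot|$ to close each case. You instead invoke the 2-face criterion (Theorem~\ref{prop:submo-2face}), note that $h_U$ factors through $X\mapsto |X\cap U|$, discard the trivial 2-faces where one of the two toggled elements lies outside $U$, and reduce the entire statement to the discrete concavity of the tent function $m(j)=\min\{j,n-j\}$ on $\{0,\dots,n\}$. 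The key identity you use, $\check{\Phi}_{h_U}(Z,\{a,b\}) = 2m(t+1)-m(t)-m(t+2)$ with $t=|Z\cap U|$ and $a,b\in U$, is exactly right, and the nonincreasing-differences property of $m$ is elementary. Your route is shorter and isolates the real reason the claim holds: $h_U$ is a concave function of the single scalar $|X\cap U|$, and any such composition is submodular. The paper's route is more hands-on but proves the same thing without needing the 2-face reduction, which it has already set up elsewhere; arguably the paper chose the direct case analysis so the appendix proof would be self-contained. Either argument is valid; yours is the cleaner of the two.
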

\begin{proof}
 For convenience, 
 let 
  $X' = X \cap U$ and $Y' = Y \cap U$, 
  where we may assume that 
  $|X'| = |X \cap U| \leq |Y'| = |Y \cap U|$ holds, without loss of generality.
 Then, 
\begin{eqnarray}
 h_U(X) 
= \begin{cases} 
  |X \cap U| = |X'| &  \mbox{if $|X'| \leq |U|/2$,}\\
  |U \setminus X| = |U \setminus X'| = |U| - |X'| & \mbox{otherwise,} 
\end{cases}
\end{eqnarray}
 hold for any $X \in 2^V$. 
 We consider the following three cases. 

Case i) 
 Suppose that $|X'| \le |U|/2$ and $|Y'| \le |U|/2$ hold. 
 Then, $h_U(X) = |X'|$ and $h_U(Y) = |Y'|$ hold. 
  Since $|X' \cap Y'| \leq |X'| \leq |U|/2$, 
\begin{align*}
  h_U(X\cap Y) 
  = \min\{ |(X \cap Y) \cap U|, |U \setminus (X \cap Y)| \} 
  = |X' \cap Y'| 
\end{align*}
  hold. 
Observe that 
\begin{align*}
  h_U(X\cup Y) 
  &= \min\{ |(X \cup Y) \cap U|, |U \setminus (X \cup Y)| \}
  \leq |(X \cup Y) \cap U| = |X' \cup Y'| 
\end{align*}
 always hold. 
 Thus, 
\begin{align*}
\Phi_{h_U} (X,Y)
 &= h_U(X) + h_U(Y) - h_U(X\cup Y) - h_U(X\cap Y) \\
 & \geq |X'|+|Y'| -|X' \cup Y'| - |X' \cap Y'| =0 
\end{align*}
 hold where the last equality follows from that the cardinality function is modular. 
 We obtain the claim in the case. 

Case ii) 
 Suppose that 
  $|X'| > |S|/2$ and $|Y'| > |S|/2$ hold.  
 Then, $h_U(X) = |U|-|X'|$ and $h_U(Y) = |U|-|Y'|$ hold. 
 Since $|X' \cup Y'| \geq |Y'| > |U|/2$, 
\begin{align*}
  h_U(X\cup Y) 
  = \min\{ |(X \cup Y) \cap U|, |U \setminus (X \cup Y)| \} 
  = |U \setminus (X \cup Y)|
  = |U| - |X' \cup Y'| 
\end{align*}
  hold. 
Observe that 
\begin{align*}
  h_U(X\cap Y) 
  &= \min\{ |(X \cap Y) \cap U|, |U \setminus (X \cap Y)| \}
  \leq |U \setminus (X \cap Y)| = |U |- |X' \cap Y'| 
\end{align*}
 always holds. 
 Thus, 
\begin{align*}
\Phi_{h_U} (X,Y)
 &= h_U(X) + h_U(Y) - h_U(X\cup Y) - h_U(X\cap Y) \\
 & \geq (|U|-|X'|)+(|U|-|Y'|) - (|U|-|X' \cup Y'|) - (|U|-|X' \cap Y'|) =0 
\end{align*}
 hold where the last equality follows that the cardinality function is modular. 
 We obtain the claim in the case.

Case iii) 
 Suppose that 
  $|X'| \le |S|/2$  and $|Y'| > |S|/2$ hold. 
 Then, $h_U(X) = |X'|$ and $h_U(Y) = |U|-|Y'|$ hold. 
 Since $|X' \cup Y'| \geq |Y'| > |U|/2$, 
\begin{align*}
  h_U(X\cup Y) 
  = \min\{ |(X \cup Y) \cap U|, |U \setminus (X \cup Y)| \} 
  = |U| - |X' \cup Y'| 
\end{align*}
  holds. 
 Similarly, since $|X' \cap Y'| \leq |Y'| \leq |U|/2$, 
\begin{align*}
  h_U(X\cap Y) 
  = \min\{ |(X \cap Y) \cap U|, |U \setminus (X \cap Y)| \} 
  = |X' \cap Y'| 
\end{align*}
  holds. 
 Thus, 
\begin{align*}
\Phi_{h_U} (X,Y)
 &= h_U(X) + h_U(Y) - h_U(X\cup Y) - h_U(X\cap Y) \\
 & = |X'|+(|U|-|Y'|) - (|U|-|X' \cup Y'|) - (|X' \cap Y'|) \\
 & = 2|U| + (|X'|-|X' \cap Y'|) + (|X' \cup Y'| -|Y'|) \\
 &\geq 0
\end{align*}
 holds. 
 We obtain the claim.
\end{proof}

\subsection{Proof of Proposition~\ref{prop:ex-part2}}
\begin{proposition}[Proposition~\ref{prop:ex-part2}]
 For the submodular function $f$ given by \eqref{ex:part}, 
 $f \circp \sigma_S$ is submodular if and only if $S \in \mathcal{U}$. 
\end{proposition}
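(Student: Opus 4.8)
The plan is to use Theorem~\ref{thm:char2}, which states that for a submodular function $f$, the transformation $f \circp \sigma_S$ is submodular if and only if $S \in {\cal U}(f)$, where ${\cal U}(f)$ is generated by the connected components of the inequality graph $G_f$. So the task reduces to showing that the inequality graph $G_f$ of the function $f(X) = \min_{W \in {\cal U}} |X \symmdiff W|$ has exactly $U_1, \ldots, U_k$ as its connected components, which would give ${\cal U}(f) = {\cal U}$ and hence the claim. Recall from the proof of Proposition~\ref{prop:ex-part1} that $f(X) = \sum_{i=1}^k h_{U_i}(X)$ where $h_U(X) = \min\{|X \cap U|, |U \setminus X|\}$.

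First I would show that $E_f \subseteq \bigcup_i \binom{U_i}{2}$, i.e., no edge of $G_f$ crosses between two different parts $U_i$ and $U_j$. For this, fix $u \in U_i$ and $v \in U_j$ with $i \neq j$ and any $X \subseteq V \setminus \{u,v\}$; I want $\check{\Phi}_f(X, \{u,v\}) = 0$. Since $f$ decomposes as a sum over the parts and $h_{U_\ell}$ only depends on $X \cap U_\ell$, the only summands affected when toggling $u$ are those with $U_\ell \ni u$ (just $\ell = i$) and when toggling $v$ just $\ell = j$. Concretely $\check{\Phi}_f(X,\{u,v\}) = f(X\cup\{u\}) + f(X\cup\{v\}) - f(X\cup\{u,v\}) - f(X)$; substituting the decomposition, every term $h_{U_\ell}$ with $\ell \notin \{i,j\}$ cancels, the $h_{U_i}$ terms reduce to $h_{U_i}(X_i \cup \{u\}) + h_{U_i}(X_i) - h_{U_i}(X_i \cup \{u\}) - h_{U_i}(X_i) = 0$ (writing $X_i = X \cap U_i$), and similarly for $h_{U_j}$. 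So $\check{\Phi}_f(X,\{u,v\}) = 0$, hence $\{u,v\} \notin E_f$.

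Next I would show the reverse containment, that each $U_i$ is connected in $G_f$ — or rather, it suffices to exhibit enough edges within each $U_i$ so that the connected components of $G_f$ are exactly the $U_i$. Given the first step, it is enough to show that for each $i$ and each pair $u, v \in U_i$, there is \emph{some} $X$ with $\check{\Phi}_f(X,\{u,v\}) \neq 0$; actually it suffices to connect them, so producing a spanning set of such edges in $U_i$ is enough. By the decomposition, for $u,v \in U_i$ we have $\check{\Phi}_f(X, \{u,v\}) = \check{\Phi}_{h_{U_i}}(X \cap U_i, \{u,v\})$, so I reduce to the single function $h_U$ on $U = U_i$. I would then pick a convenient $X$: for instance take $X \cap U$ to be a set of size roughly $|U|/2$ that contains neither $u$ nor $v$ (possible when $|U| \geq 2$; for the trivial case $|U| = 1$ there are no pairs), and compute $h_U$ at the four corners using the case formula $h_U(Z) = \min\{|Z \cap U|, |U \setminus Z|\}$. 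The cardinalities of the four corners are $a, a+1, a+1, a+2$ where $a = |X \cap U|$, and by choosing $a$ appropriately relative to $|U|/2$ one gets a strictly positive second difference, e.g. $a = \lceil |U|/2 \rceil - 1$ makes the corners straddle the ``peak'' of the $\min$. This shows $\{u,v\} \in E_f$ for all such pairs, so $U_i$ is a clique, in particular connected.

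The main obstacle is the second step: carefully choosing $X$ so that the four function values witness $\check{\Phi}_f \neq 0$, and handling the parity/boundary cases in the $\min\{|Z\cap U|, |U\setminus Z|\}$ formula (whether $|U|$ is even or odd, and whether the chosen $a$ sits exactly at $|U|/2$). These are finite case checks rather than anything deep. Once $G_f$ has connected components exactly $U_1, \ldots, U_k$, we get ${\cal U}(f) = {\cal U}$ by \eqref{def:calu}, and Theorem~\ref{thm:char2} yields that $f \circp \sigma_S$ is submodular iff $S \in {\cal U}$, completing the proof.
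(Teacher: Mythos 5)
Your proof is correct, but it takes a genuinely different route from the paper's. The paper proves the proposition directly without invoking Theorem~\ref{thm:char2}: for the $(\Leftarrow)$ direction, it observes that ${\cal U}$ is closed under symmetric difference, so $S \in {\cal U}$ gives $f\circp\sigma_S = f$ identically (the minimum over $W' = W \symmdiff S$ ranges over the same family), and submodularity follows from Proposition~\ref{prop:ex-part1}; for the $(\Rightarrow)$ direction, given $S \notin {\cal U}$ it picks a part $U_i$ that $S$ straddles, sets $X = S \symmdiff U_i$, and exhibits the concrete violation $\Phi_g(X,S) < 0$ by evaluating $g$ at the four relevant sets. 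Your approach instead computes the inequality graph $G_f$ outright and appeals to Theorem~\ref{thm:char2}. Both steps of your computation are sound: the decomposition $f = \sum_i h_{U_i}$ makes $\check{\Phi}_f(X,\{u,v\}) = \sum_\ell \check{\Phi}_{h_{U_\ell}}(X,\{u,v\})$ vanish termwise when $u, v$ lie in different parts, and for $u,v$ in the same part $U$ with $m = |U| \geq 2$, taking $|X \cap U| = \lceil m/2 \rceil - 1$ gives $\check{\Phi}_{h_U} = \lfloor m/2 \rfloor - \lceil m/2 \rceil + 2 \in \{1,2\} > 0$, so each $U_i$ is in fact a clique in $G_f$. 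Your route is somewhat heavier (it requires pinning down $G_f$ exactly), but it yields strictly more information, namely the full inequality graph, and it illustrates how Theorem~\ref{thm:char2} is meant to be used; the paper's direct argument is shorter and avoids the boundary cases in the ``peak'' calculation by never touching $G_f$ at all. Either proof is complete.
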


\begin{proof}
($\Leftarrow$)
 We show that $S  \in \mathcal{U}$ is a canonical set. 
 Let $g=f \circp \sigma_S$. 
 Then 
\begin{align}
 g(X) 
  = f \circp \sigma_S (X) = f(X \symmdiff S)
  = \min_{W \in {\cal U}} |(X \symmdiff S) \symmdiff W| 
  = \min_{W \in {\cal U}} |X \symmdiff (S \symmdiff W)| 
\label{eq:171106a}
\end{align}
   holds for any $X \in 2^V$.  
 Notice that 
    $W' = W \symmdiff S$ is in ${\cal U}$ 
    for any $W \in {\cal U}$ and  $S \in {\cal U}$. 
 Thus, 
\begin{align*}
\eqref{eq:171106a} 
   = \min_{W' \in {\cal U}} |X \symmdiff W'|, 
\end{align*}
 which implies that $g = f$, and hence $g$ is subdmodular by Proposition~\ref{prop:ex-part1}. 

($\Rightarrow$)
 We prove the contraposition: 
  if $S \notin {\cal U}$ then $g = f \circp \sigma_S$ is not submodular. 
 By the hypothesis that $S \notin\mathcal{U}$, 
 there exists $U_i$ such that 
  $S \cap U_i \neq \emptyset$ and $S \cap U_i \neq U_i$. 
 Let $X = S \symmdiff U_i$, and 
  we claim that $\Phi_g(X,S)<0$. 
 First, remark that 
  $g(X) = g(U_i \symmdiff S) = f (U_i) = 0$ and 
  $g(S) =  f (\emptyset) = 0$ hold. 
 Next, 
\begin{align*}
  g(X \cup S) &= g(S \cup U_i) = f ((S \cup U_i) \symmdiff S)= f (U_i \setminus S) > 0
\end{align*}
  where the last inequality follows from 
  the assumption $U_i \cap T \neq U_i$ and 
  the fact that $f(X)>0$ unless $X \in {\cal U}$ by the definition of $f$ (recall  \eqref{ex:part})). 
 Similarly,  
\begin{align*}
    g(X \cap S)
    &= g((S \symmdiff U_i) \cap T) = g(S \setminus U_i) \\
    &= f ((S \setminus U_i) \symmdiff S) 
    = f (S \setminus (S \setminus U_i))
    = f (S \cap U_i) \\
    &>0
\end{align*}
 hold where the last inequality follows from the assumption $S \cap U_i \neq \emptyset$. 
Thus, 
\begin{align*}
 \Phi_g(X,S)
 &= g(X) + g(S) - g(X \cup S) - g(X \cap S) \\
 &< 0 
\end{align*}
 hold, and we obtain the claim. 
\end{proof}

\section{Supplement to Section~\ref{sec:inseparable}}\label{apx:inseparable}
 This section shows the connection between 
   the connected components of the inequality graph $G_f$ given in Section~\ref{sec:char2} and  
   the inseparable decomposition (cf.\ \cite{BCT85,Queyranne98,Bach,Kamiyama,Fujishige,Fujishige17}) for submodular functions. 
 Precisely, we show the following. 
\begin{proposition}\label{prop:separable}
 Let $f \colon 2^V \to \mathbb{R}$ be a submodular function. 
 For any set $U \subseteq V$, 
  $\Phi_f(U,\overline{U})=0$ holds if and only if 
  $U$ and $\overline{U}$ are disconnected in the inequality graph $G_f$, 
  where $\overline{U} = V \setminus U$. 
\end{proposition}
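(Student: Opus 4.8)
The plan is to prove both directions by relating the quantity $\Phi_f(U,\overline{U})$ to the values $\check{\Phi}_f(X,\{u,v\})$ on 2-faces whose ``direction pair'' $\{u,v\}$ straddles the cut $(U,\overline{U})$, i.e.\ has one endpoint in $U$ and one in $\overline{U}$. The key identity I would establish first is a telescoping decomposition: fix $u \in U$ and $v \in \overline{U}$, and consider a chain of subsets from $\emptyset$ (or from $U \cap \overline{U} = \emptyset$, so really from some reference set) that moves one element at a time. More concretely, I would argue that $\Phi_f(U,\overline{U}) = f(U) + f(\overline{U}) - f(V) - f(\emptyset)$ can be written as a sum of $\check{\Phi}_f$ values over a suitable collection of 2-faces, each of which has its varying pair $\{u,v\}$ crossing the cut. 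This is the standard fact that a ``second difference across a cut'' decomposes into a sum of 2-face second differences along a monotone lattice path; each term is nonnegative by submodularity (Theorem~\ref{prop:submo-2face}).

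Granting that decomposition, the ``if'' direction is immediate: if $U$ and $\overline{U}$ are disconnected in $G_f$, then by definition \eqref{def:Ef} every crossing pair $\{u,v\}$ (with $u \in U$, $v \in \overline{U}$) is a non-edge, so $\check{\Phi}_f(X,\{u,v\}) = 0$ for \emph{all} $X \subseteq V \setminus \{u,v\}$. Hence every term in the decomposition of $\Phi_f(U,\overline{U})$ vanishes, giving $\Phi_f(U,\overline{U}) = 0$. For the ``only if'' direction, I would prove the contrapositive: suppose $U$ and $\overline{U}$ are connected in $G_f$, i.e.\ there is an edge $\{u,v\} \in E_f$ with $u \in U$ and $v \in \overline{U}$. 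Then there exists some $X_0 \subseteq V \setminus \{u,v\}$ with $\check{\Phi}_f(X_0,\{u,v\}) > 0$. The task is to choose the lattice path (equivalently, the ordering of elements) in the decomposition so that the 2-face $(X_0,\{u,v\})$ actually appears as one of the summands; then $\Phi_f(U,\overline{U}) \geq \check{\Phi}_f(X_0,\{u,v\}) > 0$ because all other summands are $\geq 0$.

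The main obstacle I anticipate is exactly this last scheduling step: making the telescoping flexible enough that an \emph{arbitrary} prescribed crossing 2-face $(X_0,\{u,v\})$ can be forced to occur in the sum. The clean way to handle it is to build the path in three stages — first grow from $\emptyset$ up to $X_0$ using only elements of $X_0$, then toggle $u$ and $v$ to pass through the square $\{X_0, X_0\cup\{u\}, X_0\cup\{v\}, X_0\cup\{u,v\}\}$, then grow the remaining elements — and to verify that the alternating-sign telescoping of $f$ along this path collapses precisely to $f(U)+f(\overline U)-f(V)-f(\emptyset)$ up to sign, with the $(X_0,\{u,v\})$ term surviving with a positive coefficient. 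Here one must be careful that the path genuinely realizes the ``cut'' combination; an alternative and perhaps more robust route is to induct on $|U|$ (or on $|V|$), peeling off one element of $U$ at a time and using the submodular inequality $f(A)+f(B)\ge f(A\cup B)+f(A\cap B)$ with appropriately chosen $A,B$ so that the deficit $\Phi_f$ only shrinks, reducing to the case where $|U|=1$ and the statement reads $\Phi_f(\{u\},\overline{\{u\}}) = \sum \check\Phi_f(\cdot,\{u,\cdot\})$ directly. Either way, once the decomposition lemma is in hand the proposition follows in a couple of lines, so the whole weight of the argument sits in that one combinatorial-telescoping lemma.
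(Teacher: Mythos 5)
Your plan works, but takes a route genuinely different from the paper's. The paper factors both directions through the \emph{separability identity} $f(X)+f(\emptyset)=f(X\cap U)+f(X\cap\overline U)$ for all $X$: for the $(\Leftarrow)$ direction it derives this identity by induction on $|X|$, using $\check\Phi_f(X',\{u,v\})=0$ for a crossing pair $\{u,v\}$ to pass from $|X'|$ to $|X'|+2$ and then setting $X=V$; for $(\Rightarrow)$ it invokes Theorem~\ref{thm:Bach1o} (proved by summing three submodular inequalities) to show that the single equality $\Phi_f(U,\overline U)=0$ propagates to the full identity, from which $\check\Phi_f$ vanishes on every crossing $2$-face by a short computation. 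You instead propose to exhibit $\Phi_f(U,\overline U)$ directly as a sum of nonnegative $\check\Phi_f$-values on crossing $2$-faces, which handles both directions at once and bypasses Theorem~\ref{thm:Bach1o} entirely. One caveat about the wording: what you need is \emph{not} a telescoping along a single monotone lattice path (a single path from $\emptyset$ to $V$ only yields $f(V)-f(\emptyset)$), but the two-dimensional grid identity that, for any orderings $u_1,\dots,u_p$ of $U$ and $v_1,\dots,v_q$ of $\overline U$,
\[
 \Phi_f(U,\overline U)=\sum_{i=1}^p\sum_{j=1}^q \check\Phi_f\bigl(\{u_1,\dots,u_{i-1}\}\cup\{v_1,\dots,v_{j-1}\},\ \{u_i,v_j\}\bigr),
\]
which follows by telescoping first in $j$ and then in $i$, the corner values being $f(U)$, $f(\overline U)$, $f(V)$, $f(\emptyset)$. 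Every summand crosses the cut and is $\geq 0$ by Theorem~\ref{prop:submo-2face}, so $(\Leftarrow)$ is immediate; and your scheduling idea does exactly the job for $(\Rightarrow)$ --- list $X_0\cap U$ first in the $U$-ordering followed by $u$, and $X_0\cap\overline U$ first in the $\overline U$-ordering followed by $v$, so the cell with $u_i=u$, $v_j=v$ has base set $X_0$. With that clarification your argument is complete and arguably more self-contained than the paper's, though the paper's detour through the separability identity is what ties $G_f$ to the inseparable decomposition developed in the rest of Appendix~\ref{apx:inseparable}.
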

 Notice that 
  Proposition~\ref{prop:separable} implies that 
  $U_1,\ldots,U_k$ are inseparable decomposition of $f$ 
  if and only if each $U_i$ is a connected component of $G_f$. 
 To prove Proposition~\ref{prop:separable}, 
   we will use the following Corollary~\ref{cor:Bach1} of Theorem~\ref{thm:Bach1o}. 
 In fact, the following Theorem~\ref{thm:Bach1o} is a part of Theorem~\ref{thm:inseparable}. 
 Here we will give a simpler proof 
  in a naive way without using the arguments on a base polytope. 
\begin{theorem}[cf.~\cite{BCT85,Queyranne98,Bach,Kamiyama,Fujishige,Fujishige17}]\label{thm:Bach1o}
 Let $\rho \colon 2^V \to \mathbb{R}$ be a subdmodular function satisfying that $\rho(\emptyset) = 0$. 
 Suppose for $U \subset V$ ($U \neq \emptyset$) 
  that $\rho(V) = \rho(U) + \rho(\overline{U})$ holds where $\overline{U}=V \setminus U$. 
 Then, 
\begin{eqnarray}
 \rho(X) = \rho(X \cap U) + \rho(X \cap \overline{U})
\label{eq:171111f}
\end{eqnarray}
  holds for any $X \in 2^V$. 
\end{theorem}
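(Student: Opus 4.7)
The plan is to establish the identity by showing the two inequalities $\rho(X) \leq \rho(X \cap U) + \rho(X \cap \overline{U})$ and $\rho(X) \geq \rho(X \cap U) + \rho(X \cap \overline{U})$ separately, using submodularity twice and invoking the hypothesis $\rho(V) = \rho(U) + \rho(\overline{U})$ to pass from an inequality to an equality. The hypothesis $\rho(\emptyset) = 0$ will be what lets an inequality from submodularity collapse into the desired additive form.

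For the easy direction $\leq$, I would apply submodularity to the pair $X \cap U$ and $X \cap \overline{U}$. Their union is $X$ and their intersection is $\emptyset$, so
\begin{eqnarray*}
\rho(X \cap U) + \rho(X \cap \overline{U}) \geq \rho(X) + \rho(\emptyset) = \rho(X),
\end{eqnarray*}
using $\rho(\emptyset) = 0$. This direction holds for any submodular $\rho$, independently of the splitting hypothesis on $U$.

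For the other direction $\geq$, I would combine two submodular inequalities. First, applied to $X$ and $U$,
\begin{eqnarray*}
\rho(X) + \rho(U) \geq \rho(X \cup U) + \rho(X \cap U).
\end{eqnarray*}
Second, applied to $X \cup U$ and $\overline{U}$ (whose union is $V$ and whose intersection is $X \cap \overline{U}$),
\begin{eqnarray*}
\rho(X \cup U) + \rho(\overline{U}) \geq \rho(V) + \rho(X \cap \overline{U}).
\end{eqnarray*}
Here I would substitute the splitting hypothesis $\rho(V) = \rho(U) + \rho(\overline{U})$ into the second inequality, cancel $\rho(\overline{U})$, and obtain $\rho(X \cup U) \geq \rho(U) + \rho(X \cap \overline{U})$. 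Plugging this bound for $\rho(X \cup U)$ into the first inequality and cancelling $\rho(U)$ then yields $\rho(X) \geq \rho(X \cap U) + \rho(X \cap \overline{U})$, completing the proof.

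Frankly I do not expect a real obstacle here: the only piece of creativity is choosing the second pair $(X \cup U, \overline{U})$ rather than, say, $(X, \overline{U})$, precisely so that the union is $V$ and the hypothesis can be invoked to eliminate $\rho(\overline{U})$. Once that pairing is chosen, each step is a one-line application of the definitions, and no case analysis is needed.
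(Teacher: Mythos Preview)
Your proof is correct and follows essentially the same approach as the paper: both split into the two inequalities, obtaining $\rho(X) \leq \rho(X\cap U) + \rho(X\cap\overline{U})$ directly from submodularity with $\rho(\emptyset)=0$, and the reverse inequality by combining submodular inequalities so that the hypothesis $\rho(V)=\rho(U)+\rho(\overline{U})$ can be invoked. Your choice of the pair $(X\cup U,\overline{U})$ is slightly more economical than the paper's route, which sums three inequalities (applied to the pairs $(X,U)$, $(X,\overline{U})$, and $(X\cup U, X\cup\overline{U})$) to reach the same conclusion.
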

\begin{proof}
 To begin with, we remark that \eqref{eq:171111f} is trivial for $X$ satisfying $X \subseteq U$ or $X \subseteq \overline{U}$. 
 Thus, we prove \eqref{eq:171111f} for $X$ 
   satisfying both $X \cap U \neq \emptyset$ and $X \cap \overline{U} \neq \emptyset$. 
 Since $\rho$ is submodular and $\rho(\emptyset) = 0$,
\begin{align}
 \rho(X) & \leq \rho(X \cap U) + \rho(X \cap \overline{U} ) \label{eq:171111a}\\
 \rho(X \cup U)+\rho(X \cap U) &\leq \rho(X) + \rho(U)  \label{eq:171111b}\\
 \rho(X \cup \overline{U})+\rho(X \cap \overline{U}) &\leq \rho(X) + \rho(\overline{U}) \label{eq:171111c}\\
 \rho(V)+\rho(X) &\leq \rho(X \cup U) + \rho(X \cup \overline{U}) \label{eq:171111d}
\end{align}
  hold, respectively. 
 By summing up \eqref{eq:171111b}, \eqref{eq:171111c} and \eqref{eq:171111d}, 
  we obtain that 
\begin{align}
 \rho(X \cap U) + \rho(X \cap \overline{U}) &\leq  \rho(X) \label{eq:171111e}
\end{align}
 holds, where we used the hypothesis that $\rho(V) = \rho(U) + \rho(\overline{U})$. 
 Now, \eqref{eq:171111a} and \eqref{eq:171111e} imply \eqref{eq:171111f}. 
\end{proof}

 As a corollary of Theorem~\ref{thm:Bach1o}, we obtain the following. 
\begin{corollary}\label{cor:Bach1}
 Let $f \colon 2^V \to \mathbb{R}$ be a subdmodular function (and $f(\emptyset) = 0$ may not hold). 
 Suppose for $U \subset V$ ($U \neq \emptyset$) 
 that $f(V) + f(\emptyset) = f(U) + f(\overline{U})$ holds where $\overline{U}=V \setminus U$. 
 Then, 
\begin{eqnarray}
 f(X)+ f(\emptyset) = f(X \cap U) + f(X \cap \overline{U})
\label{eq:171111g}
\end{eqnarray}
  holds for any $X \in 2^V$. 
\end{corollary}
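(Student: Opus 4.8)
The plan is to reduce Corollary~\ref{cor:Bach1} to Theorem~\ref{thm:Bach1o} by a simple normalization. Given a submodular $f$ that need not vanish at $\emptyset$, I would introduce $\rho \colon 2^V \to \mathbb{R}$ defined by $\rho(X) = f(X) - f(\emptyset)$ for every $X \in 2^V$, exactly as in the footnote to Section~\ref{sec:inseparable}. First I would observe that $\rho(\emptyset) = 0$, and that $\rho$ is submodular: since the additive constant $-f(\emptyset)$ cancels in every occurrence of $\Phi$, we have $\Phi_\rho(X,Y) = \Phi_f(X,Y) \geq 0$ for all $X,Y$, so $\rho$ meets the hypotheses of Theorem~\ref{thm:Bach1o}.

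Next I would translate the hypothesis of the corollary into the hypothesis of Theorem~\ref{thm:Bach1o}. Writing $\overline{U} = V \setminus U$, a direct substitution gives
\begin{eqnarray*}
 \rho(U) + \rho(\overline{U}) = f(U) + f(\overline{U}) - 2 f(\emptyset),
 \qquad
 \rho(V) = f(V) - f(\emptyset),
\end{eqnarray*}
so the assumed identity $f(V) + f(\emptyset) = f(U) + f(\overline{U})$ is equivalent to $\rho(V) = \rho(U) + \rho(\overline{U})$. Thus Theorem~\ref{thm:Bach1o} applies to $\rho$ with this $U$, and yields $\rho(X) = \rho(X \cap U) + \rho(X \cap \overline{U})$ for every $X \in 2^V$.

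Finally I would unwind the definition of $\rho$ in this last identity. Since $\rho(X) = f(X) - f(\emptyset)$ and similarly for $X \cap U$ and $X \cap \overline{U}$, the equation $\rho(X) = \rho(X \cap U) + \rho(X \cap \overline{U})$ becomes $f(X) - f(\emptyset) = f(X \cap U) + f(X \cap \overline{U}) - 2 f(\emptyset)$, i.e.\ $f(X) + f(\emptyset) = f(X \cap U) + f(X \cap \overline{U})$, which is exactly \eqref{eq:171111g}. There is no real obstacle here: the only things to check carefully are the submodularity of $\rho$ (immediate from $\Phi_\rho = \Phi_f$) and the bookkeeping of the constant $f(\emptyset)$ when passing between $f$ and $\rho$ in both directions; everything else is a verbatim appeal to Theorem~\ref{thm:Bach1o}.
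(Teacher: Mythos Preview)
Your proposal is correct and follows exactly the paper's own proof: define $\rho(X) = f(X) - f(\emptyset)$, observe that $\rho$ satisfies the hypotheses of Theorem~\ref{thm:Bach1o}, and then translate \eqref{eq:171111f} back to \eqref{eq:171111g}. The paper's proof is terser, but your added verification that $\Phi_\rho = \Phi_f$ and the explicit bookkeeping of the constant $f(\emptyset)$ are all correct and merely spell out what the paper leaves implicit.
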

\begin{proof}
 Let $\rho(X) = f(X) - f(\emptyset)$ for any $X \in 2^V$, 
  then $\rho$ satisfies the hypothesis of Theorem~\ref{thm:Bach1o}. 
 Notice that $f(X) = \rho(X) + f(\emptyset)$ holds for any $X \in 2^V$. 
 Thus, \eqref{eq:171111f} implies \eqref{eq:171111g}. 
\end{proof}

Now we prove  Proposition~\ref{prop:separable}. 
\begin{proof}[Proof of Proposition~\ref{prop:separable}]
($\Leftarrow$)
  Suppose that $U$ and $\overline{U}$ are disconnected in $G_f$. 
 Then, we prove
\begin{eqnarray}
 f(X)+ f(\emptyset) = f(X \cap U) + f(X \cap \overline{U})
\label{eq:171118a}
\end{eqnarray}
  holds for any $X \in 2^V$, by an induction of the size $|X|$. 
 Notice that \eqref{eq:171118a} is trivial for $|X|=0$ and $|X|=1$. 
 Inductively assuming that \eqref{eq:171118a} holds for any $Y \in 2^V$ satisfying $|Y| \leq k$, 
  we prove \eqref{eq:171118a} for any $X \in 2^V$ satisfying $|X| = k+1$. 
 Notice that \eqref{eq:171118a} is trivial if $X \subseteq U$ or $X \subseteq \overline{U}$, 
 and we assume that $X \cap U \neq \emptyset$ and $X \cap \overline{U} \neq \emptyset$.
 Let $u \in X \cap U$, $v \in X \cap \overline{U}$, and $X'=X \setminus \{u,v\}$. 
 Since $\{u,v\} \in E_f$, 
\begin{eqnarray}
 f(X' \cup \{u\} \cup \{v\}) + f(X') &=&  f(X' \cup \{u\}) + f(X' \cup \{v\})
\label{eq:171118b}
\end{eqnarray}
 holds. 
 By the induction hypothesis, we obtain that   
\begin{eqnarray}
 f(X' \cup \{u\})+ f(\emptyset) &=& f((X' \cup\{u\}) \cap U) + f(X' \cap \overline{U}) \label{eq:171118c}\\
 f(X' \cup \{v\})+ f(\emptyset) &=& f(X' \cap U) + f((X' \cup \{v\}) \cap \overline{U})\label{eq:171118d}
\end{eqnarray}
 hold, as well as 
\begin{eqnarray}
 f(X' \cap U) + f(X' \cap \overline{U}) &=&  f(X')+ f(\emptyset) 
\label{eq:171118e}
\end{eqnarray}
 holds. 
 By summing up \eqref{eq:171118b}--\eqref{eq:171118e}, we obtain 
\begin{eqnarray*}
 f((X' \cup \{u\}) \cup \{v\}) + f(\emptyset) &=&  f((X' \cup \{u\}) \cap U) + f((X' \cup \{v\}) \cap \overline{U}), 
\end{eqnarray*}
 which implies \eqref{eq:171118a} holds for the $X$. 
 Let $X=V$, then we obtain the claim.

\begin{figure}[tbp]
\begin{center}
 \includegraphics[width=4cm,pagebox=cropbox,clip]{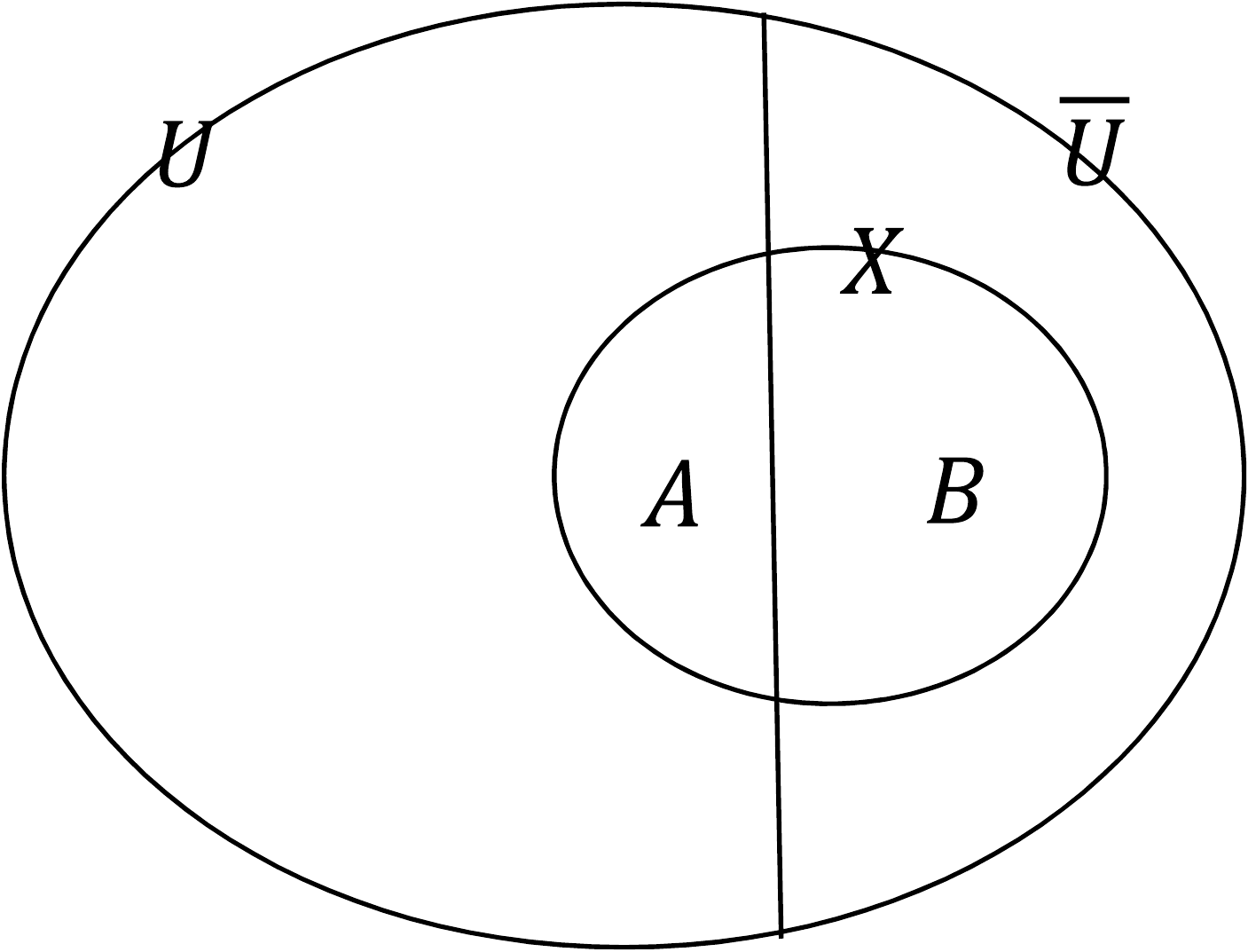}
\end{center}
 \caption{For Proof of Proposition~\ref{prop:separable} ($\Rightarrow$) }\label{fig:sep1}
\end{figure}
($\Rightarrow$)
Suppose 
 $f(U) + f(\overline{U}) = f(V) + f(\emptyset)$ holds. 
 Assume for a contradiction that there is a pair $u \in U$ and $v \in \overline{U}$ such that $\{u,v\} \in E_f$. 
 This means that there is $X \subseteq V \setminus \{u,v\}$ 
 such that 
  $\check{\Phi}(X,\{u,v\}) \neq 0$ holds, by the definition~\eqref{def:Ef} of $E_f$.  
 For convenience, let $A = X \cap U$ and let $B =X \cap \overline{U}$ (see Figure~\ref{fig:sep1}). 
 By Corollary~\ref{cor:Bach1}, 
\begin{eqnarray}
 f(X \cup \{u\}) + f(\emptyset) &=& f(A \cup \{u\}) + f(B) \label{tmp171117a}\\
 f(X \cup \{v\}) + f(\emptyset)  &=& f(A) + f(B \cup \{v\}) \label{tmp171117b}\\
 f(X \cup \{u,v\})  + f(\emptyset) &=& f(A \cup \{u\}) + f(B \cup \{v\}) \label{tmp171117c}\\
 f(X)  + f(\emptyset) &=& f(A) + f(B) \label{tmp171117d}
\end{eqnarray} 
 hold, respectively. 
 Thus, 
\begin{align*}
 \check{\Phi}_f(X,\{u,v\}) 
 &= f(X \cup \{u\})+f(X \cup \{v\})-f(X \cup \{u,v\}) -f(X) \\
 &= f(A \cup \{u\}) + f(B) + f(A) + f(B \cup \{v\})\\
 &\qquad
 - (f(A \cup \{u\}) + f(B \cup \{v\})) 
 - (f(A) + f(B)) 
 &&(\mbox{by \eqref{tmp171117a}--\eqref{tmp171117d}})\\
 &= 0
\end{align*}
 holds, which contradicts to the assumption that $(X,\{u,v\}) \in {\cal P}$ satisfies $\check{\Phi}_f(X,\{u,v\}) \neq 0$. 
\end{proof}
\end{document}